\newcommand{\noun}[1]{\textsc{#1}}
\newtheorem{definition}{Definition}
\newtheorem{example}{Example}
\newtheorem{theorem}{Theorem}
\begin{document}
%
\title{Distributed Subgraph Enumeration via Backtracking-based Framework}
%
%
%
%

\author{Zhaokang~Wang, Weiwei~Hu, Chunfeng~Yuan, Rong~Gu*, Yihua~Huang*%
  \IEEEcompsocitemizethanks{\IEEEcompsocthanksitem All authors are with the State Key Laboratory for Novel Software Technology, Department of Computer Science and Technology, Nanjing University, China.%
    \IEEEcompsocthanksitem E-mail: \{wangzhaokang, weiweihu\}@smail.nju.edu.cn, \{cfyuan, gurong, yhuang\}@nju.edu.cn.%
    \IEEEcompsocthanksitem Corresponding authors are Yihua Huang and Rong Gu with equal contribution.}%
  \thanks{Manuscript received XXX; revised XXX}}

%
%

\markboth{Technical Report}{Technical Report}
\IEEEtitleabstractindextext{%
  \begin{abstract}
    Finding or monitoring subgraph instances that are isomorphic to a given pattern graph in a data graph is a fundamental query operation in many graph analytic applications, such as network motif mining and fraud detection.
    The state-of-the-art distributed methods are inefficient in communication.
    They have to shuffle partial matching results during the distributed multiway join.
    The partial matching results may be much larger than the data graph itself.
    To overcome the drawback, we develop the Batch-BENU framework (B-BENU) for distributed subgraph enumeration. 
    B-BENU executes a group of local search tasks in parallel.
    Each task enumerates subgraphs around a vertex in the data graph, guided by a backtracking-based execution plan.
    B-BENU does not shuffle any partial matching result.
    Instead, it stores the data graph in a distributed database.
    Each task queries adjacency sets of the data graph on demand.
    To support dynamic data graphs, we propose the concept of incremental pattern graphs and turn continuous subgraph enumeration into enumerating incremental pattern graphs at each time step.
    We develop the Streaming-BENU framework (S-BENU) to enumerate their matches efficiently.
    We implement B-BENU and S-BENU with the local database cache and the task splitting techniques.
    The extensive experiments show that B-BENU and S-BENU can scale to big data graphs and complex pattern graphs.
    They outperform the state-of-the-art methods by up to one and two orders of magnitude, respectively.
 
  \end{abstract}
  
  
  \begin{IEEEkeywords}
    backtracking-based framework, continuous subgraph matching, distributed graph querying, subgraph isomorphism, subgraph matching.
  \end{IEEEkeywords}}

\maketitle

\IEEEdisplaynontitleabstractindextext

%
\IEEEpeerreviewmaketitle

\IEEEraisesectionheading{\section{Introduction}\label{sec:introduction}}

\IEEEPARstart{G}{iven} a big data graph $G$ and a small pattern graph $P$, \emph{subgraph enumeration} is to find all the subgraph instances of $G$ that are isomorphic to $P$.
The subgraph instances are the matching results of $P$ in $G$.
Subgraph enumeration is a fundamental query operation in many graph analytic applications, including network motif mining \cite{Milo2002NetworkMS}, graphlet-based network comparison \cite{DBLP:journals/bioinformatics/Przulj07}, network evolution analysis \cite{DBLP:conf/wsdm/KairamWL12}, and social network recommendation\cite{DBLP:journals/pvldb/FanWWX15}.

When the data graph is dynamic, the subgraph enumeration problem becomes the \emph{continuous subgraph enumeration} problem.
The edge set of a dynamic data graph evolves over time.
The matching results of a pattern graph also change consequently.
The continuous subgraph enumeration focus on monitoring the changes in the matching results as the data graph evolves.
Detecting appearing subgraph instances of suspicious pattern graphs timely is essential in real-world applications like fraud detection \cite{neo4j-fraud-detection} \cite{GraphS} and cybersecurity  \cite{cyber-security}.

\subsection{Motivation}

Enumerating instances of a pattern graph in a big data graph is challenging due to two difficulties.
First, the core operation of subgraph enumeration is subgraph isomorphism.
It is an NP-complete problem and has high computational complexity.
Second, the sizes of (partial) matching results can be much larger than the data graph itself  \cite{SEED} \cite{CBF}.
Table \ref{tab:statistics_of_typical_pattern_graphs} shows the numbers of matches of some typical pattern graphs in real-world data graphs.
The numbers of matching results can be 10 to 100 times larger than the numbers of edges in data graphs.
Just scanning matching results takes considerable computational costs.

Some serial in-memory subgraph enumeration algorithms like \cite{TurboISO} \cite{CFLMatch} and out-of-core algorithm \cite{DUALSIM} are proposed, but the computing
power of a single machine limits their performance.
The emerging need to process big data graphs inspires researchers to design efficient distributed subgraph enumeration methods.
Based on whether a distributed algorithm shuffles intermediate results, we divide the existing distributed algorithms into two groups: DFS-style and BFS-style.

The DFS-style algorithms do not shuffle intermediate results.
Instead, they shuffle the data graph.
QFrag \cite{QFrag} broadcasts the data graph to each machine and enumerates subgraphs in memory on each machine concurrently.
However, it cannot scale to data graphs bigger than the memory capacity.
Afrati et al. \cite{UllmanOneRound} use the one-round multiway join to enumerate subgraphs with MapReduce.
However, it cannot scale to complex pattern graphs due to large replication of edges, empirically performing worse than the BFS-style algorithm \cite{TwinTwigJoin} \cite{TwinTwigJoinJournalVersion}.

\begin{table}[!t]
    \caption{Numbers of Matches of Typical Pattern Graphs in Real-world Data Graphs \label{tab:statistics_of_typical_pattern_graphs}}
    \centering
    \begin{tabular}{cccccc}
        \toprule
        \textbf{Data Graph}                                         & $|V|$ & $|E|$ & \includegraphics[width=1em]{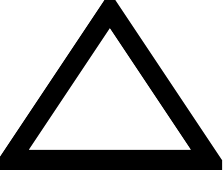} &
        \includegraphics[width=1em]{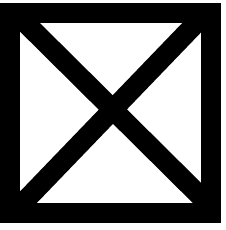} &
        \includegraphics[width=1em]{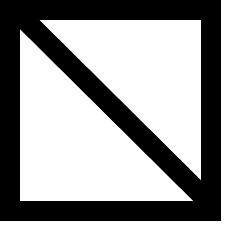}                                                                                            \\
        \midrule
        as-Skitter (as) \cite{snapnets}                             & 1.7E6 & 1.1E7 & 2.9E7                                                        & 1.5E8  & 2.0E9  \\
        LiveJournal (lj) \cite{snapnets}                            & 4.8E6 & 4.3E7 & 2.9E8                                                        & 9.9E9  & 7.6E10 \\
        Orkut (ok) \cite{snapnets}                                  & 3.1E6 & 1.2E8 & 6.3E8                                                        & 3.2E9  & 6.7E10 \\
        uk-2002 (uk) \cite{LAW-Datasets}                            & 1.8E7 & 2.6E8 & 4.4E9                                                        & 1.6E11 & 2.7E12 \\
        FriendSter (fs) \cite{snapnets}                             & 6.5E7 & 1.8E9 & 4.2E9                                                        & 9.0E9  & 1.8E11 \\
        \bottomrule
    \end{tabular}
\end{table}

The BFS-style algorithms decompose the pattern graph recursively into a series of join units.
A join unit is a simple partial pattern graph whose matching results can be enumerated easily from the data graph or a pre-computed index.
The BFS-style algorithms enumerate matching results of join units first and assemble them via one or more rounds of joining to get the matching results for the whole pattern graph.
The algorithms shuffle the partial matching results (intermediate results) during the join.
Researchers propose varieties of join units (Edge \cite{BiGJoin}, Star \cite{SEED} \cite{PSgL} , TwinTwig \cite{TwinTwigJoin} \cite{TwinTwigJoinJournalVersion}, Clique \cite{SEED}, Crystal \cite{CBF}) and join frameworks (Left-deep join \cite{TwinTwigJoin,TwinTwigJoinJournalVersion}, Bushy join \cite{SEED}, Hash-assembly \cite{CBF}, Generic join \cite{BiGJoin}) to reduce intermediate results.

However, BFS-style algorithms are still costly.
First, shuffling partial matching results is inevitable in the join-based framework, causing high communication costs.
The typical pattern graphs in Table \ref{tab:statistics_of_typical_pattern_graphs} are the core structures of many complex pattern graphs in Fig.\ref{fig:pattern_graphs}.
Just shuffling matching results of the core structures will cause high communication costs.
Second, some cutting-edge algorithms like SEED \cite{SEED} and CBF \cite{CBF} build extra index structures like SCP index (in SEED) or clique index (in CBF) for each data graph to achieve high performance.
The index requires non-trivial computation costs to construct and store.
It also requires extra costs to maintain if the data graph is dynamic, which is common in the industry.

The drawbacks of the existing methods inspire us designing a new distributed (continuous) subgraph enumeration framework that 1) avoids shuffling partial matching
results, 2) does not rely on any extra index, and 3) scales to large data graphs and complex pattern graphs.

\subsection{Contributions}

In this work, we present two new distributed Backtracking-based subgraph ENUmeration (BENU) frameworks\footnote{The two frameworks are open sourced at \url{https://github.com/PasaLab/BENU}.}: the Batch-BENU framework for static data graphs and the Streaming-BENU framework for dynamic data graphs.
An earlier version of this work \cite{BENU-ICDE} was presented at the 35th IEEE International Conference on Data Engineering (ICDE 2019).
In that version, we proposed the Batch-BENU framework for distributed subgraph enumeration in \emph{static} undirected data graphs and implemented it with MapReduce.
Batch-BENU does not shuffle intermedia results or use indices.
Instead, it stores the data graph in a distributed database and queries the adjacency sets of the data graph on demand, driven by backtracking-based execution plans.

However, simply extending Batch-BENU to process \emph{dynamic} data graphs is inefficient.
Batch-BENU has to enumerate subgraphs in the latest data graph snapshot at every time step and compare the matching results with the previous time step to detect appearing/disappearing subgraphs (i.e., incremental matches).
Enumerating subgraphs from scratch repeatedly contains lots of redundant computation.
The long execution time of subgraph enumeration can hardly meet the near real-time performance requirement of online applications.
Supporting dynamic data graphs is not a trivial extension.
The challenge is reducing redundant computation as much as possible and enumerating incremental matches directly from the update of each time step.

To achieve the target, the cutting-edge methods for dynamic data graphs either maintain the latest matching results in memory \cite{SJ-Tree} \cite{TurboFlux} \cite{D-IDS} or have to eliminate contradictory results with extra shuffling \cite{BiGJoin}.
They are inefficient in storage and communication, respectively.

In this work, we propose a novel concept--\emph{incremental pattern graphs}--for continuous subgraph enumeration in \emph{dynamic} graphs.
We prove that finding incremental matches at every time step is equivalent to enumerating  subgraph instances of incremental pattern graphs.
We propose the Streaming-BENU framework to enumerate them from the updated edges directly, guided by backtracking-based incremental execution plans.
Streaming-BENU outputs valid and duplication-free results without maintaining any matching result in memory.

Overall, we make the following contributions.

First, we propose a distributed subgraph enumeration framework \emph{Batch-BENU}.
Batch-BENU generates local search tasks for every data vertex and executes the tasks in parallel on a distributed computing platform.
A local search task enumerates matches of the pattern graph in the local neighborhood of a data vertex, following a backtracking-based execution plan.
Batch-BENU does not shuffle any partial matching result or use any index.
Instead, it queries the data graph stored in a distributed database \emph{on demand}.

Second, we propose a search-based method to generate the best execution plan.
The method includes three execution plan optimization techniques (common subexpression elimination, instruction reordering, and triangle caching), a cost estimation model, and two pruning techniques.

Third, we propose the concept of incremental pattern graphs to support continuous subgraph enumeration in dynamic graphs.
Based on the concept, we solve continuous subgraph enumeration by enumerating matches of incremental pattern graphs in the data graph snapshots at each time step.
We develop the \emph{Streaming-BENU} framework to enumerate their matches efficiently.

Forth, we propose efficient implementations of Batch-BENU and Streaming-BENU.
We propose the local database technique to reduce communication costs and the task splitting technique to balance workloads.
The experimental results validate the efficiency and scalability of the two frameworks.
They outperform the state-of-the-art methods on complex pattern graphs by up to one and two orders of magnitude, respectively.

We organize the rest of the paper as follows.
Section 2 defines the problem and introduces related techniques.
Section 3 describes the Batch-BENU framework.
Section 4 presents the method to generate the best execution plan for Batch-BENU.
Section 5 elaborates on the Streaming-BENU framework for dynamic graphs.
Section 6 discusses the implementations.
Section 7 experimentally evaluates their performance.
Section 8 briefly surveys the related work.
Section 9 concludes the work.

\section{Preliminaries}

We first define the problem of subgraph enumeration and its continuous variant.
Then, we introduce the backtracking-based framework for subgraph matching.

\subsection{Problem Definition}

In this work, we focus on processing simple unlabeled graphs.
We define a \emph{static} graph $g$ as $g=(V(g),E(g))$, where $V(g)$/$E(g)$ is the vertex/edge set of $g$.
If $g$ is \emph{undirected}, we denote the adjacency set of a vertex $v$ as $\Gamma_{g}(v)=\{w|(w,v)\in E(g)\}$.
The degree of $v$ is $d_{g}(v)=|\Gamma_{g}(v)|$.
If $g$ is \emph{directed}, we denote the incoming/outgoing adjacency set of a vertex $v$ as  $\Gamma_{g}^{\text{in}}(v)=\{w|(w,v)\in E(g)\}$/$\Gamma_{g}^{\text{out}}(v)=\{w|(v,w)\in E(g)\}$.
A subgraph $g'$ of $g$ is a graph such that $V(g')\subseteq V(g)$ and $E(g')\subseteq E(g)$.
An \emph{induced subgraph} $g(V')$ of a graph $g$ on a vertex set $V'$ is defined as $g(V')=(V'\cap V(g),\{(u,w)|(u,w)\in E(g), u\in V', w\in V'\})$.

The subgraph enumeration involves two graphs: a data graph $G$ and a pattern graph $P$.
Let $N=|V(G)|$, $M=|E(G)|$, $n=|V(P)|$ and $m=|E(P)|$.
The pattern graph $P$ is usually much smaller than $G$, i.e., $n\ll N,m\ll M$.
We assume $P$ is connected.
We use $v_{i}$/$u_{i}$ to denote a vertex from the data/pattern graph.
Without loss of generality, we assume that vertices in $G$ and $P$ are consecutively numbered, i.e., $V(G)=\{v_{1},v_{2},\dots,v_{N}\}$ and $V(P)=\{u_{1},u_{2},\dots,u_{n}\}$.
A match of $P$ in $G$ is defined in Definition~\ref{def:match}.
An isomorphic subgraph of $P$ in $G$ is defined in Definition~\ref{def:isomorphic_subgraph}.
Taking Fig.~\ref{fig:demo_case_of_subgraph_enumeration} as the example, the subgraph shown with bold lines in $G$ is isomorphic to $P$ with a match $f'=(v_{1},v_{2},v_{3},v_{4},v_{5},v_{8})$.

\begin{definition}[Match]\label{def:match}
	Given a pattern graph $P$ and a data graph $G$, a mapping $f:V(P) \rightarrow V(G)$ is a \emph{match} of $P$ in $G$ if $f$ is \emph{injective} and $\forall x \forall y: (x,y)\in E(P)\rightarrow (f(x),f(y))\in E(G)$.
	A match $f$ is denoted as $f=(f_{1},f_{2},\dots,f_{n})$, where $f_{i}=f(u_{i})$ for $1\leq i\leq n$.
\end{definition}

\begin{definition}\label{def:isomorphic_subgraph}
	Given a pattern graph $P$ and a data graph $G$ , a subgraph $g$ of $G$ is isomorphic to $P$ if and only if there exists a match $f$ of $P$ in $g$, $|V(P)|=|V(g)|$ and $|E(P)|=|E(g)|$.
\end{definition}

\begin{figure}[!t]
	\centering
	\subfloat[Pattern graph $P$\label{fig:demo_pattern_graph}]{~~~~\includegraphics[height=1.8cm]{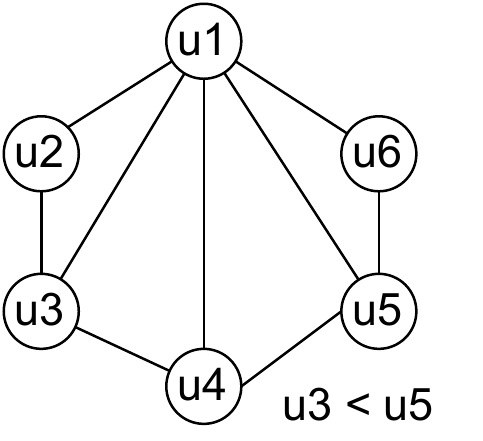}~~~~}
	\hfil
	\subfloat[Data graph $G$\label{fig:demo_data_graph}]{\includegraphics[height=1.8cm]{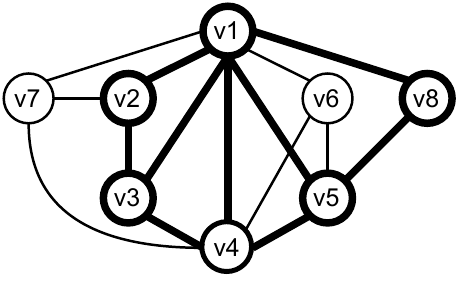}}
	\caption{Toy case of subgraph enumeration.\label{fig:demo_case_of_subgraph_enumeration}}
\end{figure}

We follow \cite{SEED} to define the subgraph enumeration problem in Definition~\ref{def:subgraph_enumeration}.
We denote the set of the isomorphic subgraphs of $P$ in $G$ as $R_{G}(P)$.
The task of subgraph enumeration is to calculate $R_{G}(P)$.
Subgraph enumeration focuses on undirected $P$ and $G$.
We extend to directed ones in continuous subgraph enumeration.

\begin{definition}\label{def:subgraph_enumeration}
	Given a \emph{static} undirected pattern graph $P$ and a \emph{static} undirected data graph $G$, the task of \emph{subgraph enumeration} is to enumerate all subgraphs of $G$ that are isomorphic to $P$.
\end{definition}

When data graphs are \emph{dynamic}, the subgraph enumeration becomes the \emph{continuous} subgraph enumeration.
In a dynamic graph, vertices and edges are inserted to/removed from the graph in a streaming manner.
Since a vertex insertion/removal operation can be decomposed into multiple edge insertion/removal operations, we focus on handling dynamic graphs with edge updates.

A dynamic data graph is defined as $G'=(V(G'_{0}),E(G'_{0}),\Delta G')$, where $V(G'_{0})$/$E(G'_{0})$ is the initial vertex/edge set of $G'$.
$\Delta G'$ is the update stream of $G'$.
$\Delta G'=\{\Delta o_{1},\Delta o_{2},\dots\}$ consists of a sequence of batch updates $\Delta o_t$.
$\Delta o_{t}=\{(op_{1}, v_{j_1}, v_{k_1}), (op_2, v_{j_2}, v_{k_2}), \dots\}$ consists of inserting and deleting edges between time step $t$ and $t-1$ ($t \geq 1$).
$op_i$ can be $+$ or $-$, indicating inserting or deleting the edge $(v_{j_i},v_{k_i})$ to or from $G'$.
We assume that an edge appears at most once in $\Delta o_{t}$, either inserted or deleted.
By applying $\Delta o_{1}, \Delta o_{2}, \dots, \Delta o_{t}$ to the initial graph $G'_0$ of $G'$, we can get the snapshots of $G'$ $G'_1, G'_2, \dots, G'_t$ in turn.
We use $E(G'_t)$ to denote the edge set of $G'$ at time step $t$.
Fig.~\ref{fig:demo_case_of_continuous_subgraph_enumeration} shows a demo dynamic data graph $G'$ and its snapshots.
$G'$ is directed.
The inserting/deleting edges in $\Delta o_{t}$ are listed below the arrow.
The solid blue edges in $G'_{t}$ are inserting edges while the faded dotted edges are deleting edges.

We use $R_{G'_{t}}(P)$ to denote the set of isomorphic subgraphs of $P$ in $G'_{t}$.
The target of continuous subgraph enumeration is to detect changes in $R_{G'_{t}}(P)$ and report incremental matches $\Delta R_{t}^{+}$ and $\Delta R_{t}^{-}$ as defined in Definition~\ref{def:continuous_subgraph_enumeration}.
We assume that the batch size $|\Delta o_{t}|$ is much smaller than $|E(G'_t)|$.
Taking the demo case in Fig.~\ref{fig:demo_case_of_continuous_subgraph_enumeration} as the example, the output of each time step is shown in two rows.

Table~\ref{tab:notations} summarizes the frequently used notations in this work.

\begin{definition}\label{def:continuous_subgraph_enumeration}
	Given a \emph{static} pattern graph $P$ and a \emph{dynamic} data graph $G'$, the task of \emph{continuous subgraph enumeration} is to report appearing matches $\Delta R_{t}^{+} = R_{G'_{t}}(P) \backslash R_{G'_{t-1}}(P)$ and disappearing matches $\Delta R_{t}^{-} = R_{G'_{t-1}}(P)\backslash R_{G'_{t}}(P)$ for every time step $t$ ($t \geq 1$), where $\backslash$ is the set difference operator.
\end{definition}

\begin{table}[!t]
	\centering
	\caption{Notations\label{tab:notations}}
	\begin{tabular}{cp{19em}}
		\toprule
		\textbf{Notation}                                                 & \textbf{Description}                                                                  \\
		\midrule
		$G,G_{t}$                                                         & The data graph $G$. If $G$ is dynamic, $G_{t}$ is the snapshot at time step $t$.      \\
		$P,n,m$                                                           & The pattern graph $P$. $n=|V(P)|$. $m=|E(P)|$.                                        \\
		$\Delta P_{i}$                                                    & The $i$-th incremental pattern graph of $P$.                                          \\
		$u,u_{i}$                                                         & An arbitrary/The $i$-th vertex in $P$.                                                \\
		$v,v_{i}$                                                         & An arbitrary/The $i$-th vertex in $G$.                                                \\
		$\Gamma_{g}(x),\Gamma_g^{\text{in}}(x), \Gamma_g^{\text{out}}(x)$ & The (incoming/outgoing) adjacency set of the vertex $x$ in the graph $g$.             \\
		$f=(f_{1},...,f_{n})$                                             & A match $f$ of $P$ in $G$. $f_{i}=f(u_{i})$.                                          \\
		$R_{G}(P)$,$R_{G_{t}}(P)$                                         & The set of matches of the pattern graph $P$ in the data graph $G$ (snapshot $G_{t}$). \\
		$\Delta R_{t}^{+}$,$\Delta R_{t}^{-}$                             & The appearing/disappearing matches of the pattern graph at time step $t$.             \\
		\bottomrule
	\end{tabular}
\end{table}

\subsection{Symmetry Breaking}

A match $f$ of the pattern graph $P$ in the data graph $G$ (snapshot $G_{t}$) corresponds to a subgraph $g$ isomorphic to $P$ in $G$ ($G_{t}$).
However, multiple matches may correspond to the same subgraph due to the automorphism in $P$.
In Fig.~\ref{fig:demo_case_of_subgraph_enumeration}, the match $f'=(v_{1},v_{2},v_{3},v_{4},v_{5},v_{8})$ and $f''=(v_{1},v_{8},v_{5},v_{4},v_{3},v_{2})$ both correspond to the subgraph $g$ shown with bold lines in $G$.
Enumerating all matches of $P$ in $G$ may report duplicate subgraphs.

We adopt the \emph{symmetry breaking} technique \cite{SymmetryBreaking} to avoid such duplication.
The technique requires a total order $\prec$ defined on $V(G)$.
It also imposes a partial order $<$ on $V(P)$.
The technique redefines a match $f$ of $P$ in $G$ as a mapping satisfying both Definition~\ref{def:match} and the partial order constraints: if $u_{i} < u_{j}$ in $V(P)$, then $f(u_{i}) \prec f(u_{j})$ in $V(G)$.
Under the new definition, if a subgraph $g$ is isomorphic to $P$, there is one and only one match $f$ of $P$ in $g$ \cite{SymmetryBreaking}.
It establishes a \emph{bijective} mapping between matches of $P$ in $G$ and isomorphic subgraphs of $P$ in $G$.
In Fig.~\ref{fig:demo_case_of_subgraph_enumeration}, the partial order imposed on $P$ is $u_{3} < u_{5}$.
Assuming $v_{3}\prec v_{5}$ in the total order, the subgraph $g$ shown with bold lines in $G$ is isomorphic to $P$ with only one match $f'=(v_{1},v_{2},v_{3},v_{4},v_{5},v_{8})$.

We take advantage of the technique to convert the problem of enumerating subgraphs into enumerating matches.
In the following sections, we use matches to represent isomorphic subgraphs interchangeably.
For static data graphs $G$, we use the degree-based total order $\prec$ defined in \cite{SEED}.
For dynamic data graphs, we use the natural order of vertex IDs as the total order.

\subsection{Backtracking-based Framework}

The backtracking-based framework is popular among serial subgraph isomorphism algorithms.
It incrementally maps each pattern vertex to data vertices in the match $f$ according to a given matching order $O$.
Algorithm \ref{alg:backtracking_based_framework} shows a simplified version of the original framework \cite{BacktrackingFramework}.

The \noun{SubgraphSearch} procedure finds all the matches of $P$ in $G$ recursively.
The \noun{NextPatternVertexToMatch} function returns the next unmapped pattern vertex $u_{i}$ in $f$ according to the matching order $O$.
The \noun{RefineCandidates} function calculates a candidate set $C_{i}$ of the data vertices that we can map $u_{i}$ to.
Mapping $u_{i}$ to any data vertex in $C_{i}$ should not break the match conditions in Definition~\ref{def:match} and the partial order constraints.
The framework recursively calls \noun{SubgraphSearch} until all vertices are mapped in $f$. Different algorithms have different implementations for \noun{FirstPatternVertexToMatch}, \noun{NextPatternVertexToMatch}, and \noun{RefineCandidates}.

\begin{algorithm}[!t]
	\small
	\textbf{Input:} Pattern graph $P$, Matching order $O$, Data graph $G$.
	\begin{algorithmic}[1]
		\State $f \leftarrow $ an empty mapping from $V(P)$ to $V(G)$;
		\State $u_i \leftarrow \text{\noun{FirstPatternVertexToMatch}}(O)$;
		\ForAll{$v_j \in V(G)$}
		\State $f_i \leftarrow v_j$;
		\State \Call{SubgraphSearch}{$P,G,O,f$};
		\EndFor
		\Procedure{SubgraphSearch}{$P,G,O,f$}
		\If{all pattern vertices are mapped in $f$} output $f$;
		\Else
		\State $u_i \leftarrow \text{\noun{NextPatternVertexToMatch}}(O,f)$;
		\State $C_i \leftarrow \text{\noun{RefineCandidates}}(P,G,f,u_i)$;
		\ForAll{$v_k \in C_i$}
		\State $f_i \leftarrow v_k$;
		\State \Call{SubgraphSearch}{$P,G,O,f$};
		\State $f_i \leftarrow \text{NULL}$; \Comment{Make $u_i$ unmapped in $f$}
		\EndFor
		\EndIf
		\EndProcedure
	\end{algorithmic}

	\caption{Backtracking-based Framework\label{alg:backtracking_based_framework}}

\end{algorithm}

\section{Batch-BENU Framework}

We consider the shared-nothing cluster as the target distributed environment.
Each machine in the cluster has a limited memory that may be smaller than the data graph.
The approaches like \cite{QFrag} that load the whole data graph in memory are not feasible here.

\subsection{Framework Overview}

The DFS-style distributed subgraph enumeration method \cite{UllmanOneRound} is not efficient because of its one-round shuffle design.
It blindly shuffles edges before enumeration and cannot exploit the information of partial matching results.
Consider a special case where the data graph has no triangle but the pattern graph has.
A more efficient way than one-round shuffle is to try enumerating triangles first and then stop immediately after finding there is no triangle.

It inspires us to propose the \emph{on-demand shuffle} technique.
The main idea is to store the edges of the data graph in a distributed database and query (``shuffle'') the edges as needed during enumeration.
The technique follows the backtracking-based framework in Algorithm~\ref{alg:backtracking_based_framework} to enumerate matches.
Only when the framework needs to access the data graph in the \noun{RefineCandidates} function, it queries the database.
Once a partial match $f$ fails in the search that generates an empty candidate set for a pattern vertex, the framework skips $f$ and backtracks, not wasting any effort on mapping other pattern vertices in $f$.
By this way, the technique avoids querying useless edges.
It also avoids shuffling any partial matching result.

Around the on-demand shuffle technique, we develop the Batch-BENU framework (\emph{B-BENU}, for short) for distributed subgraph enumeration.
Algorithm~\ref{alg:BENU_framework} shows its workflow.
B-BENU stores the data graph $G$ in a distributed database $DB$ in parallel (Line 1).
Given a pattern graph $P$, B-BENU computes its best execution plan $E$ to enumerate the pattern graph $P$ on the master node (Line 2) and broadcasts $E$ and $P$ to worker nodes (Line 3).
The execution plan is a core concept in B-BENU.
An execution plan follows the backtracking-based framework to enumerate matches of $P$.
It gives out the matching order and detailed steps to calculate the candidate set for every pattern vertex.
We elaborate on it later.
B-BENU generates a local search task for each data vertex $v$ in $V(G)$ (Line 4).
$v$ is the \emph{starting} vertex of the local search task.
B-BENU executes all tasks in parallel with a distributed computing platform.
A local search task enumerates matches of $P$ in the neighborhood around the starting vertex $start$ (Line 5 to Line 8).
It initializes an empty mapping $f$ and maps the first pattern vertex $u_j$ in the matching order to $start$ (Line 6).
A local search task then follows the execution plan $E$ to match the remaining pattern vertices in $f$.
During the task execution, the execution plan queries the database on demand.

\begin{algorithm}[!t]
    \small
    \textbf{Input:} Pattern graph $P$, Data graph $G$, Database $DB$
    \begin{algorithmic}[1]
        \State Store $G$ in $DB$; \Comment{Initialization independent of $P$}
        \State $E \leftarrow \text{\noun{GenerateBestExecutionPlan}}(P)$;
        \State Broadcast $P$ and $E$ to worker machines;
        \ForAll{$start \in V(G)$} \emph{in parallel} \Comment{Local search tasks}
        \State $f \leftarrow $ an empty mapping from $V(P)$ to $V(G)$;
        \State $u_j \leftarrow \text{\noun{FirstPatternVertexToMatch}}(E)$;
        \State $f_j \leftarrow start$;
        \State Match the remaining pattern vertices in $f$ guided by $E$;
        \EndFor
    \end{algorithmic}

    \caption{Batch-BENU Framework\label{alg:BENU_framework}}

\end{algorithm}

\subsection{Execution Plan}

The execution plan gives out the detailed steps to enumerate matches of $P$ in $G$.
It is the core of the B-BENU framework.
Since a database querying operation is expensive due to its high latency, the execution plan queries the database on the level of adjacency sets instead of edges, to reduce the number of database operations.
The execution plan implements the three core functions in Algorithm~\ref{alg:backtracking_based_framework} as:

\noun{FirstPatternVertexToMatch} and \noun{NextPatternVertexToMatch}: Each execution plan is bound with a static matching order $O$. The functions return the first pattern vertex in $O$ that is \emph{unmapped} in the partial match $f$ as the first/next vertex to match.

\noun{RefineCandidates}: The execution plan intersects adjacency sets to calculate the candidate set $C_{j}$ for an unmapped pattern vertex $u_{j}$.
$\mathcal{N}(u_j)$ is the set of $u_j$'s neighbors in $P$ that are before $u_j$ in $O$.
$\mathcal{N}(u_j) = \{u_x | u_x \in \Gamma_P(u_j), u_x \text{ is before } u_j \text{ in } O\}$.
The pattern vertices in $\mathcal{N}(u_j)$ are already mapped in $f$ when we calculate $C_j$.
If $\mathcal{N}(u_j) = \emptyset$, $C_j = V(G)$.
Otherwise, for any vertex $u_i \in \mathcal{N}(u_j)$, if we map $u_j$ to $v_x$ in $f$, $v_x$ should be adjacent to $f_i$ in $G$.
In other words, $C_{j} = \bigcap_{u_i \in \mathcal{N}(u_j)}{\Gamma_G(f_i)}$.
Mapping $u_{j}$ to any vertex outside $C_{j}$ will violate the match condition $(u_{j},u_{i}) \in E(P) \rightarrow (f_{j},f_{i})\in E(G)$.
$C_{j}$ is further filtered to ensure that the data vertices in it do not violate the injective condition and the partial order constraints.

\begin{example}
    In Fig. \ref{fig:demo_case_of_subgraph_enumeration}, assume the matching order is $O: u_{1}, u_{2}, \dots, u_{6}$.
    Suppose $u_{1}$ and $u_{2}$ are mapped in the partial match $f=(v_{1},v_{2},?,?,?,?)$.
    $u_3$ is the next pattern vertex to match.
    $\mathcal{N}(u_3) = \{u_1, u_2\}$.
    The candidate set $C_{3}$ is $C_{3} = \{v|v \in \Gamma_{G}(v_{1}) \cap \Gamma_{G}(v_{2}), v \neq v_{1}, v\neq v2 \}=\{v_{3},v_{7}\}$.
\end{example}

Integrating the core functions, we can get an abstract execution plan.
A demo execution plan for $P$ in Fig.~\ref{fig:demo_pattern_graph} is shown in Fig.~\ref{fig:demo_BENU_execution_plans}a.
The \emph{Filter} operation filters out data vertices not satisfying either the injective condition or the partial order constraints.
The demo execution plan is bound with the matching order $O:u_{1},u_{3},u_{5},u_{2},u_{6},u_{4}$, expressed by the order of loop variables.
Each loop corresponds to a recursive search level (Line 11 to 14) in Algorithm~\ref{alg:backtracking_based_framework}.
For ease of presentation, the recursion is expanded.

\begin{figure*}[!t]
    \centering
    \includegraphics[height=5.5cm]{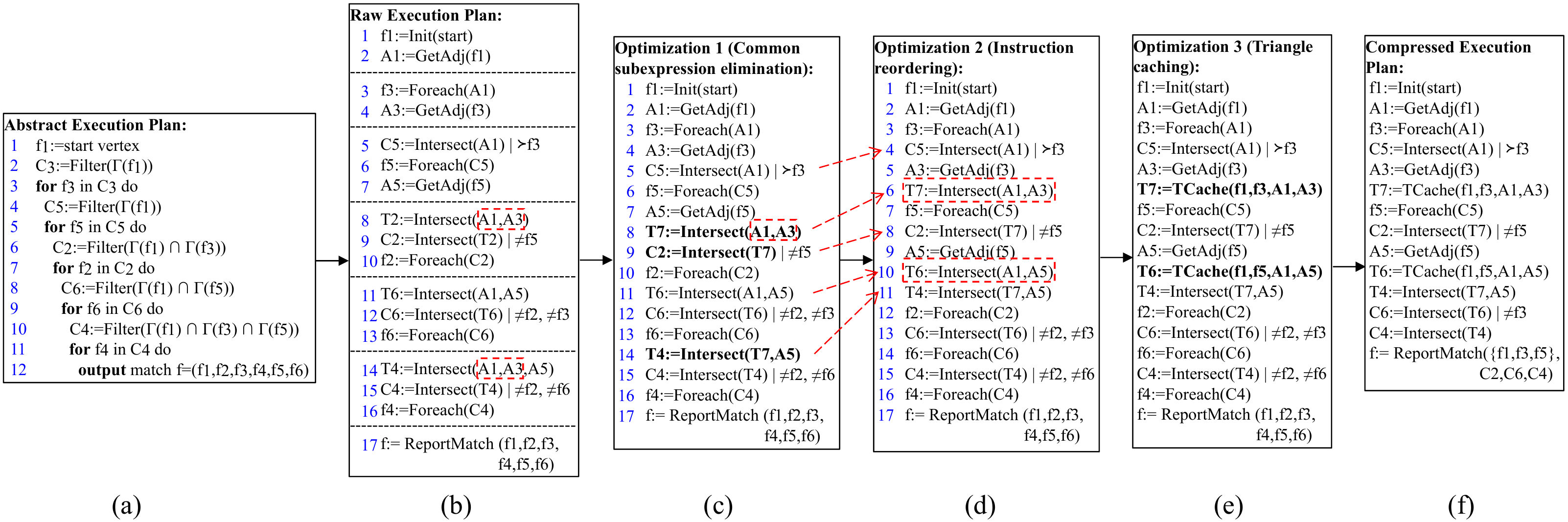}
    \caption{Batch-BENU execution plan and its optimizations for the toy pattern graph with $O:u_{1},u_{3},u_{5},u_{2},u_{6},u_{4}$.\label{fig:demo_BENU_execution_plans}}
\end{figure*}

\section{Execution Plan Generation}

In this section, we present the method to generate a concrete B-BENU execution plan for a given pattern graph $P$.
For a clear illustration, we use the same running example through the whole section.
The pattern graph is Fig.~\ref{fig:demo_pattern_graph} and the matching order is $O:u_{1},u_{3},u_{5},u_{2},u_{6},u_{4}$.
We first introduce how to generate a raw execution plan from a given matching order $O$.

\subsection{Raw Execution Plan Generation\label{subsec:raw_execution_plan_generation}}

Given a matching order $O:u_{k_{1}},u_{k_{2}},\dots,u_{k_{n}}$, the raw execution plan consists of a series of execution instructions.

\subsubsection{Execution Instruction}

A B-BENU execution instruction is denoted as
$$
    X:=Operation(Operands)[|FCs].
$$
It contains three parts: (1) a target variable $X$ that stores the result of the instruction, (2) an operation $Operation(Operands)$ describing the conducted operation and its operands, and (3) optional filtering conditions $FCs$.

There are 6 kinds of execution instructions in B-BENU as listed in Table~\ref{tab:types_of_execution_instructions}.
B-BENU uses two kinds of filtering conditions:
(1) a symmetry breaking condition, denoted as $\succ f_{i}$ or $\prec f_{i}$, means that vertices in $X$ should be bigger or smaller than $f_{i}$ under the total order $\prec$;
(2) an injective condition, denoted as $\neq f_{i}$, means that $f_{i}$ should be excluded from $X$.

\begin{table*}[!t]
    \begin{center}
        \caption{Types of Execution Instructions\label{tab:types_of_execution_instructions}}
        \begin{tabular}{p{12em}p{15em}p{29em}p{2em}}
            \toprule
            \textbf{Type}                          & \textbf{Operation}                   & \textbf{Description}                                                                                                        & \textbf{In}{*} \\
            \midrule
            Initialization (\textbf{INI})          & $f_{i}:=Init(start)$                 & Map $u_{i}$ to the starting vertex of the local search task in the partial match $f$.                                       & B,S            \\
            Database Querying (\textbf{DBQ})       & $A_{i}:=GetAdj(f_{i})$               & Get the adjacency set of the data vertex $f_{i}$ from the database.                                                         & B              \\
            Set Intersection (\textbf{INT})        & $X:=Intersect(\dots)$                & Intersect the operands and store the result set in $X$.                                                                     & B,S            \\
            Enumeration (\textbf{ENU})             & $f_{i}:=Foreach(X)$                  & Map $u_{i}$ to the vertices in $X$ one by one in the partial match $f$ and enter the next level in the backtracking search. & B,S            \\
            Triangle Cache (\textbf{TRC})          & $X:=TCache(f_{i},f_{j},A_{i},A_{j})$ & Triangle enumeration with triangle cache.                                                                                   & B              \\
            Result Reporting (\textbf{RES})        & $f:=ReportMatch(f_{1},f_{2},...)$    & Successfully find a match$f$ of $P$ in $G$ (or snapshots) that maps $u_{i}$ to $f_{i}$.                                     & B,S            \\
            \midrule
            Database Querying (\textbf{DBQ})       & $X:=GetAdj(f_{i},ty ,di ,op)$        & Get the specified adjacency set of the data vertex $f_{i}$ in snapshots.                                                    & S              \\
            Delta Enumeration (\textbf{Delta-ENU}) & $op,f_{i}:=Foreach(X)$               & Map $u_{i}$ to the vertices in $X$ one by one in the partial match $f$, retrieve corresponding $op$, and enter the next
            level in the backtracking search.      & S                                                                                                                                                                                   \\
            In Set Test (\textbf{INS})             & $InSetTest(f_i, X)$                  & If $f_i$ is not in the set $X$, backtrack to the upper level.                                                               & S              \\
            \bottomrule
        \end{tabular}
    \end{center}
    \begin{flushleft}
        {\scriptsize *\emph{B}/\emph{S} indicates that the instruction is used in Batch-BENU/Streaming-BENU.}
    \end{flushleft}
\end{table*}

\subsubsection{Instruction Generation\label{sec:raw_execution_plan_generation}}

We generate instructions for each pattern vertex successively according to $O$.
We first generate two instructions for the first vertex $u_{k_{1}}$ in $O$: $f_{k_{1}}:=Init(start)$ and $A_{k_{1}}:=GetAdj(f_{k_{1}})$.
The two instructions prepare related variables for $u_{k_{1}}$.
For each of the remaining vertices $u_{k_{i}}$ in $O$ ($2 \leq i \leq n$), we generate the following instructions in sequence:

\begin{enumerate}

    \item $T_{k_{i}}:=Intersect(\dots)$.
          This INT instruction calculates the raw candidate set for $u_{k_{i}}$ by intersecting related adjacency sets.
          For any $u_{j}$ that is before $u_{k_{i}}$ in $O$ and adjacent to $u_{k_{i}}$ in $P$, we add $f_{j}$'s adjacency set $A_{j}$ as an operand of the instruction.
          If $u_{k_{i}}$ is not adjacent to any vertex before it in $O$, we add $V(G)$ as the operand.

    \item $C_{k_{i}}:=Intersect(T_{k_{i}})[|FCs]$.
          This INT instruction calculates the refined candidate set for $u_{k_{i}}$ by applying the filtering conditions.
          For any $u_{j}$ before $u_{k_{i}}$ in $O$, if $u_{j}$ and $u_{k_{i}}$ have a partial order constraint, the corresponding symmetry breaking condition is added.
          If $u_{j}$ and $u_{k_{i}}$ are not adjacent in $P$, an injective condition $\neq f_{j}$ is added.
          If $u_{j}$ and $u_{k_{i}}$ are adjacent, the injective condition can be omitted, since $T_{k_{i}} \subseteq A_{j}$, $f_{j} \notin A_{j}$ and thus $f_{j} \notin T_{k_{i}}$.

    \item $f_{k_{i}}:=Foreach(C_{k_{i}})$.
          This ENU instruction maps $u_{k_{i}}$ to the data vertices in $C_{k_{i}}$ one by one in the partial match $f$, and enters the next level in the backtracking search.

    \item $A_{k_{i}}:=GetAdj(f_{k_{i}})$.
          If there is any vertex $u_{j}$ that is adjacent to $u_{k_{i}}$ in $P$ and is after $u_{k_{i}}$ in $O$, $A_{k_{i}}$ will be used by a subsequent INT instruction to calculate the raw candidate set for $u_{j}$.
          In this case, we add a DBQ instruction to fetch $A_{k_{i}}$.
          Otherwise, we skip the instruction.
\end{enumerate}

Finally, we add the RES instruction to the execution plan.

After generating instructions, we conduct the \emph{uni-operand elimination}.
If an INT instruction has only one operand and no filtering condition like $T_{i}:=Intersect(X)$, we remove the instruction and replace $T_{i}$ with $X$ in the other instructions.
If an INT instruction generates a candidate set $C_x$ and $C_x$ will be output by the VCBC compression technique (introduced later in Section~\ref{sec:VCBC-compression}), we do not eliminate the instruction.
After eliminating all uni-operand instructions, we get the raw execution plan.

The raw execution plan is well-defined.
All the variables are defined before used.
It materializes the abstract execution plan as shown in Fig. \ref{fig:demo_BENU_execution_plans}a.
It can be converted to the actual code or be executed by an interpreter easily.

\begin{example}
    Fig. \ref{fig:demo_BENU_execution_plans}b shows the raw execution plan generated for the running example.
    The instructions generated for $u_{4}$ are the 14th to 16th instruction.
\end{example}

B-BENU supports integrating other filtering techniques like the degree filter by adding corresponding filtering conditions.
In practice, adding filtering conditions to the instructions nested by many ENU instructions should be very careful, since they may bring considerable overheads.

\subsection{Execution Plan Optimization\label{subsec:execution_plan_optimization}}

Though the raw execution is functional, it contains redundant computation.
We propose three optimizations to reduce it.

\subsubsection{Opt1: Common Subexpression Elimination}

We borrow the concept of \emph{common subexpression} from the programming analysis.
Some combinations of adjacency sets appear as operands in more than one INT instruction.
For example, the common subexpression $\{A_{1},A_{3}\}$ appears twice in the raw execution plan in Fig. \ref{fig:demo_BENU_execution_plans}b.
We should eliminate it as it brings redundant computation.

We use a frequent-item mining algorithm like Apriori to find all the common subexpressions with at least two adjacency sets.
We pick the subexpression with the most adjacency sets to eliminate.
If the two subexpressions have the same number of adjacency sets, we pick the more frequent one according to their appearances.
If they further have the same frequency, we pick the one appearing first.
We add an INT instruction $T_{j}:=Intersect(Subexpression)$ before the first instruction that the subexpression appears.
Here, $j$ is an unused variable index.
Then we replace the subexpression appeared in other INT instructions with $T_{j}$ to eliminate the redundancy.
We eliminate the common subexpressions repeatedly until there is no common subexpression.
Finally, we conduct uni-operand elimination.

\begin{example}
    In Fig.~\ref{fig:demo_BENU_execution_plans}b, $\{A_{1},A_{3}\}$ and $\{A_{1},A_{5}\}$ are both common subexpressions.
    According to the order, we pick $\{A_{1},A_{3}\}$ to eliminate.
    After replacing it with $T_{7}$ in Fig.~\ref{fig:demo_BENU_execution_plans}c, there is no other common subexpression and the optimization stops.
\end{example}

\subsubsection{Opt2: Instruction Reordering}

The position of the instruction in the execution plan significantly affects the performance.
If an instruction can be moved forward and nested by fewer ENU instructions, it will be executed many fewer times.
To optimize instruction positions, we reorder instructions in an execution plan with three steps.

First, \emph{flatten INT instructions}.
For an INT instruction that have more than two operands, we sort its operands according to their definition positions.
The operand defined earlier is in the front.
We flatten the instruction into a series of INT instructions with at most two operands.
For example, $T_{j}:=Intersect(A,B,C)$ can be flattened into two INT instructions $T_{j'}:=Intersect(A,B)$ and $T_{j}:=Intersect(T_{j'},C)$, where $j'$ is an unused variable index.
Flattening INT instructions does not affect the correctness of the execution plan, but it enables us to reorder set intersection operations in finer granularity.

Second, \emph{construct the dependency graph}.
The instructions in an execution plan have dependency relations among them.
For two instructions $I_{1}$ and $I_{2}$, if $I_{2}$ uses the target variable of $I_{1}$ in its operands or filtering conditions, then $I_{1}$ and $I_{2}$ have a dependency relation $I_{1}\rightarrow I_{2}$.
$I_{1}$ should always be before $I_{2}$, otherwise $I_{2}$ will use an undefined variable.
We construct a dependency graph to describe such dependency relations.
In the graph, instructions are vertices, and dependency relations are directed edges.
For example, Fig.~\ref{fig:dependency_graph} is the dependency graph of the execution plan in Fig.~\ref{fig:demo_BENU_execution_plans}c.
In Fig.~\ref{fig:dependency_graph}, we use the target variable to represent an instruction and we omit the RES instruction.

\begin{figure}[!t]
    \centering
    \includegraphics[height=1.8cm]{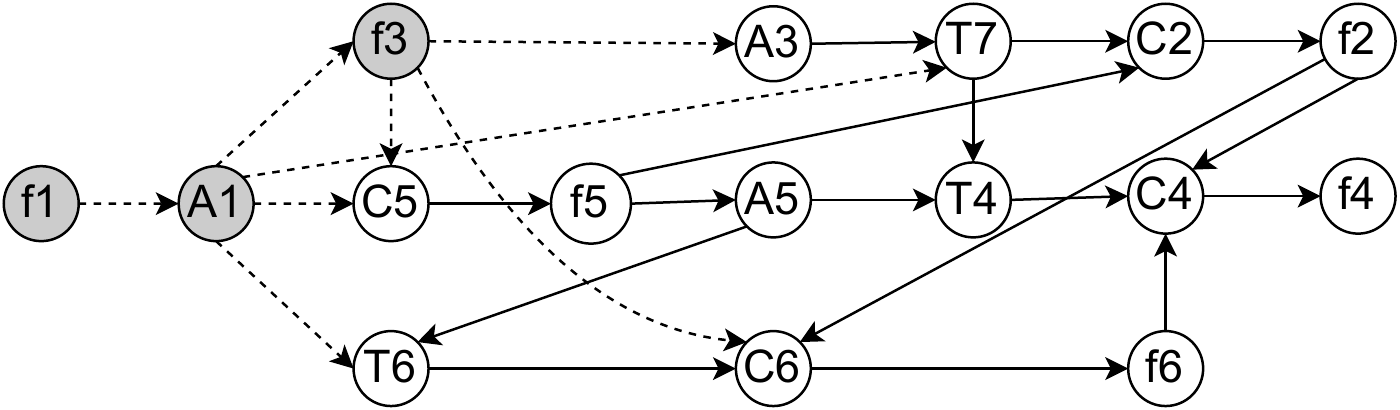}
    \caption{Dependency graph of the demo execution plan.\label{fig:dependency_graph}}
\end{figure}

Third, \emph{reorder instructions}.
We reorder the instructions by conducting topological sorting on the dependency graph.
The topological sort guarantees that the dependency relations between instructions are not violated.
During the sorting, it is common that several instructions can all be the candidate instructions for the next instruction.
For example, in Fig. \ref{fig:dependency_graph}, after sorting the first three instructions $[f_{1},A_{1},f_{3}]$, both $A_{3}$ and $C_{5}$ can be the next instruction under the topological order.
At this time, we rank the candidate instructions in an ascending order based on their instruction types: INI $<$ INT $<$ TRC $<$ DBQ $<$ ENU $<$ RES.
If two candidate instructions have the same type, the instruction in the front ranks higher.
This order guarantees that the INI and RES instructions must be the first and last instructions. The order of the other instructions is defined based on their execution costs.
The INT instructions are the cheapest as they only involve pure computation.
Moreover, if we can detect failed INT instructions that generate empty result sets earlier, we can stop the framework from wasting efforts on a doomed-to-fail partial match.
The TRC instructions involve cache accessing.
The DBQ instructions conduct database operations that are much more expensive than computation.
The ENU instructions are the most expensive as they add a level in the backtracking search and make the following instructions executed for more time.
We want to postpone them as much as possible.
The relative order of DBQ and ENU instructions reflects the matching order.
The ranking method also guarantees that the relative order is not changed.

\begin{example}
    For the execution plan in Fig.~\ref{fig:demo_BENU_execution_plans}c with its dependency graph in Fig.~\ref{fig:dependency_graph}, we can get a reordered execution plan in Fig.~\ref{fig:demo_BENU_execution_plans}d.
    The 14th instruction in Fig.~\ref{fig:demo_BENU_execution_plans}c is moved forward, crossing two ENU instructions $f_{2}$ and $f_{6}$.
\end{example}

\subsubsection{Opt3: Triangle Caching}

Suppose $u_{k_{1}}$ is the first vertex in the matching order.
If $u_{j}$ is a neighbor of $u_{k_{1}}$ in the pattern graph $P$, then $f_{k_{1}}$ and $f_{j}$ are neighbors in the data graph.
The INT instruction $X:=Intersect(A_{k_{1}},A_{j})$ calculates the vertices that can form a triangle with $f_{k_{1}}$ and $f_{j}$.
We find that some INT instructions in the execution plan repeatedly enumerate triangles around the starting vertex $f_{k_{1}}$.
For example, in Fig.~\ref{fig:demo_BENU_execution_plans}d, $T_{7}:=Intersect(A_{1},A_{3})$ and $T_{6}:=Intersect(A_{1},A_{5})$ both enumerate triangles around the starting vertex $f_{1}$.
Their computation is redundant.
The existing methods \cite{SEED} and \cite{CBF} avoid such redundancy by pre-enumerating triangles and storing them as an index.
The index requires non-trivial computation costs to maintain when the data graph is updated and occupies non-trivial disk space to store.

In B-BENU, we propose the triangle caching technique to reduce such redundancy \emph{on the fly}.
We set up a triangle cache for each local search task to cache the locally enumerated triangles.
For an INT instruction $X:=Intersect(A_{i},A_{j})$, if one of $f_{i}$ and $f_{j}$ is the starting vertex and the other one is its neighbor, we replace the INT instruction with a triangle caching instruction: $X:=TCache(f_{i},f_{j},A_{i},A_{j})$.
The triangle caching instruction queries the triangle cache with the key $[f_{i},f_{j}]$ first.
If the cache misses, it calculates $A_{i}\cap A_{j}$ and stores the result into the cache. Otherwise, it uses the pre-computed set in the cache as the result.

\begin{example}
    In Fig.~\ref{fig:demo_BENU_execution_plans}d, the marked instructions are replaced by the triangle caching instructions in Fig.~\ref{fig:demo_BENU_execution_plans}e.
\end{example}

\subsubsection{Support VCBC Compression\label{sec:VCBC-compression}}

The VCBC compression (vertex-cover based compression) \cite{CBF} is an efficient technique to compress the subgraph matching results based on a vertex cover $V_{c}$ of $P$.
Given a pattern graph $P$ and its vertex cover $V_{c}$, $core(P)$ is the induced subgraph of $P$ on $V_{c}$.
In VCBC, the matches of $core(P)$ in $G$ are helves.
For each helve, the matches of the pattern vertices not in $V_{c}$ are compressed in conditional image sets.
The helves and their conditional image sets form the compressed codes of the matching results of $P$ in $G$.

With modification, a B-BENU execution plan can directly output the VCBC-compressed matching results.
For an execution plan $E$ and a matching order $O$, assume the first $k$ pattern vertices in $O$ forms a vertex cover $V_{c}$ of $P$ while the first $k-1$ vertices do not.
The matches of the first $k$ pattern vertices are the helves.
For a pattern vertex $u_{j}$ not in $V_{c}$, we delete the ENU instruction of $f_{j}$ in $E$ and remove $f_{j}$ from the filtering conditions of other instructions.
We reserve the INT instruction that calculates the candidate set $C_{j}$ for $u_{j}$.
$C_{j}$ is equal to the conditional image set of $u_{j}$ in the VCBC compression.
We replace $f_{j}$ in the RES instruction with $C_{j}$ to directly output the compressed codes.

\begin{example}
    The execution plan in Fig.~\ref{fig:demo_BENU_execution_plans}e can be modified to Fig.~\ref{fig:demo_BENU_execution_plans}f to support the VCBC compression. The first three vertices $[u_{1},u_{3},u_{5}]$ in $O$ form the vertex cover $V_{c}$.
\end{example}

\subsubsection{Complexity Analysis}
The cost of optimizing a raw execution plan depends on the pattern graph $P$.
If the number of pattern vertices $n$ is fixed, the most expensive pattern graph to optimize is the $n$-clique, because it has the most edges and its raw execution plan has the most common subexpressions.
By inspecting the case of $n$-clique, we can get the worst-case computation complexity.

As for Optimization 1, an INT instruction in the raw execution plan has at most $n-1$ operands.
Any combination of the operands is a common subexpression.
The complexity of enumerating all common subexpressions in that instruction is $O(2^n)$.
Since there are $O(n)$ INT instructions, the complexity of enumerating common subexpressions in all instructions is $O(n2^n)$.
The complexity of eliminating a subexpression is $O(n^2)$.
The elimination will repeat $O(n)$ times until there is no common subexpression.
The worst-case time complexity of Optimization 1 is $O(n^22^n)$.

As for Optimization 2, the execution plan after flattening has $O(m)$ instructions.
Each instruction has at most 2 operands and $n$ injective conditions, so the dependency graph has $O(m)$ vertices and $O(nm)$ edges.
The complexity of topological sort is $O(nm)$.
If we use a heap to find the next instruction with the highest rank, the maintaince cost of the heap during the sort is $O(m\log m)$.
Therefore, the worst-case time complexity of Optimization 2 is $O(nm)$.

The costs of Optimization 3 and supporting VCBC compression are both linear to the number of instructions in the execution plan, which is $O(m)$.
Summarily, the computation complexity of the whole optimization is $O(n^22^n)$, dominated by Optimization 1.

\subsection{Best Execution Plan Generation\label{subsec:best-execution-plan-generation}}

Given a pattern graph $P$, any permutation of pattern vertices is a legal matching order.
Different matching orders correspond to different execution plans, having different execution costs.
We propose a search-based method to generate the best execution plan for a pattern graph.

\subsubsection{Execution Cost Estimation}

The execution cost of an execution plan $E$ is made up of the computation cost and the communication cost.
We define the computation cost as the number of executed times of all INT/TRC instructions.
We define the communication cost as the number of executed times of all DBQ instructions.
Thus, the core problem in estimating execution costs is to estimate numbers of executed times of instructions.

For an instruction, its number of executed times is related to the ENU instructions before it.
Assume the matching order of $E$ is $O:u_{k_{1}},u_{k_{2}},\dots,u_{k_{n}}$.
We denote the induced subgraph of $P$ on the first $i$ vertices in $O$ as the partial pattern graph $P_{i}$.
The leftmost column in Fig.~\ref{fig:backtracking_search_tree} shows the partial pattern graphs $P_{i}$ with the corresponding ENU instructions.
The pattern graph used in Fig.~\ref{fig:backtracking_search_tree} is Fig.~\ref{fig:demo_pattern_graph}.
The first $i$ ENU instructions actually enumerates matches of $P_i$ in $G$.
Thus, the number of executed times of the $i$-th ENU instruction is equal to the number of matches of $P_i$ in $G$.
The instructions between the $i$-th and $i$+1-th ENU instructions have the same number of executed times as the $i$-th ENU instruction.

We develop the \noun{EstimateComputationCost} function in Algorithm~\ref{alg:best_execution_plan_generation} to estimate the computation cost of an execution plan $E$.
The function tracks the partial pattern graph $p'$ as scanning instructions and uses the estimation model proposed in Section 5.1 of \cite{SEED} to estimate the number of matches of $p'$.
If $p'$ is disconnected, we decompose it into connected components and multiply the numbers of their matches together.
The estimation model can be replaced if a more accurate model is proposed later.

\begin{figure}[!t]
    \centering
    \includegraphics[width=7cm]{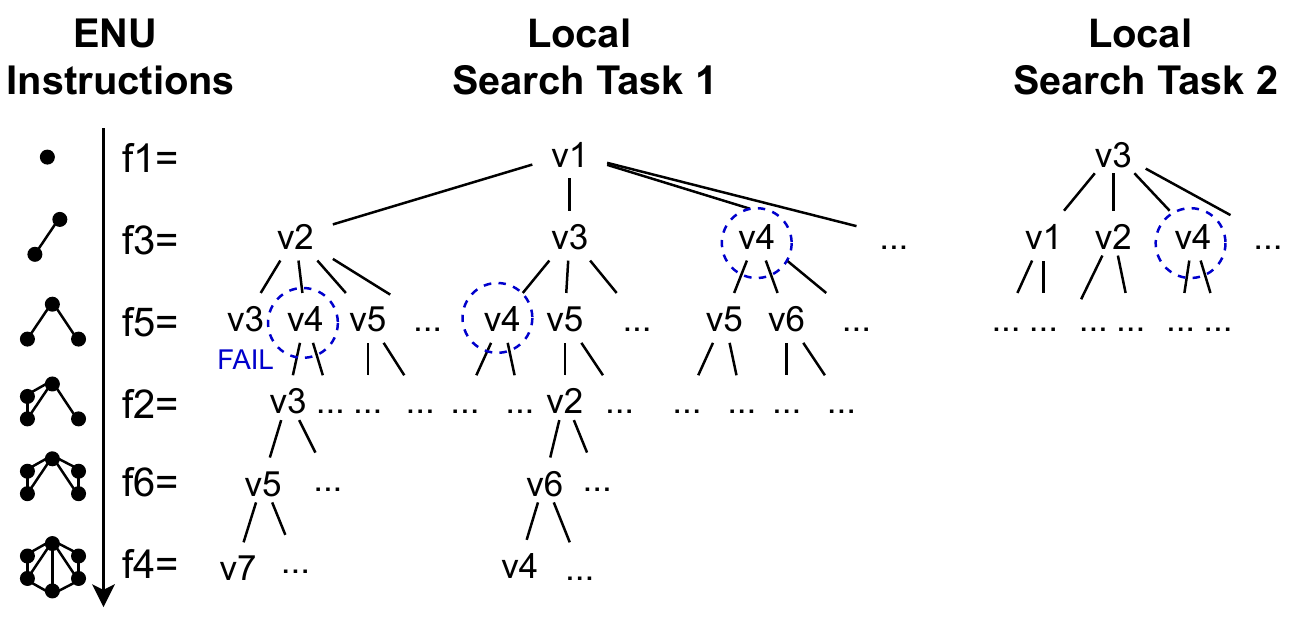}
    \caption{Backtracking search trees of local search tasks.\label{fig:backtracking_search_tree}}
\end{figure}

\subsubsection{Best Execution Plan Search\label{subsec:best_execution_plan_search}}

We define the best execution plan as the execution plan with the least communication cost, since executing a DBQ instruction consumes much more time than an INT/TRC instruction.
If several execution plans have the same least communication cost, we define the one with the least computation cost as the best.

We propose a search-based algorithm (Algorithm~\ref{alg:best_execution_plan_generation}) to find the best execution plan $E_{\mathit{best}}$ for a given pattern graph $P$.
The communication cost of an execution plan is determined by the relative order of DBQ and ENU instructions.
As the optimizations in Section~\ref{subsec:execution_plan_optimization} do not affect the relative order, the communication cost is solely determined by the matching order.
Thus, Algorithm~\ref{alg:best_execution_plan_generation} calls the \noun{Search} procedure to find the set of candidate matching orders $O_{\mathit{cand}}$ that have the least communication cost without actually generating execution plans.
For matching orders in $O_{\mathit{cand}}$, the algorithm generates optimized execution plans and picks the one with the least computation cost as $E_{\mathit{best}}$.

The \noun{Search} procedure uses backtracking to iterate all permutations of pattern vertices.
It maintains the unused pattern vertices in ${C}$ and recursively enumerates the next pattern vertex in the partial matching order $O$ from ${C}$ one by one.
To avoid blindly exploring all permutations, we propose two pruning strategies.

\paragraph*{Dual Pruning}

In Line 11, we use the dual condition to filter out redundant matching orders.
The dual condition is based on the syntactic equivalence (SE) relations \cite{BoostISO} between pattern vertices.
For two vertices $u_{i}$ and $u_{j}$ in $P$, $u_{i}$ is SE to $u_{j}$ (denoted as $u_{i}\simeq u_{j}$) if and only if $\Gamma_{P}(u_{i})-\{u_{j}\}=\Gamma_{P}(u_{j})-\{u_{i}\}$.
For example, in q4 of Fig.~\ref{fig:pattern_graphs}, $u_{1}\simeq u_{4}$ and $u_{2}\simeq u_{3}$.
Given two SE vertices $u_{i}\simeq u_{j}$ and a matching order $O$, we define the matching order got by swapping $u_{i}$ and $u_{j}$ in $O$ as its dual matching order $O'$.
The execution plans generated from $O$ and $O'$ have the same execution cost.
For an execution plan $E$ generated from $O$, if we swap $A_i$/$A_j$, $C_i$/$C_j$ and $f_i$/$f_j$ in every instruction and adjust the symmetry breaking conditions correspondingly in $E$ , we can get a dual execution plan $E'$ with the matching order $O'$.
$E'$ is correct because the candidate set calculation in $E'$ still follows the principle in Section \ref{sec:raw_execution_plan_generation}.
The partial pattern graphs $P_i$ and ${P'}_i$ induced by the first $i$ vertices in $E$ and $E'$ are isomorphic for any $1 \leq i \leq n$.
The execution times of the $i$-th ENU instructions in $E$ and $E'$ are same.
Therefore, the communication and computation costs of $E$ and $E'$ are same.
If $u_{i}\simeq u_{j}$ and $i<j$, we only need to keep the matching order that $u_{i}$ appears before $u_{j}$.

\paragraph*{Cost-based Pruning}

When Algorithm~\ref{alg:best_execution_plan_generation} searches candidate matching orders, it maintains the current partial pattern graph $p'$ and the partial communication cost \textit{commCost'} simultaneously in Line 13 to Line 18.
The cost is updated with two cases.
In case 1, at least one of $u$'s neighbors will appear after $u$ in $O$.
According to Section \ref{subsec:raw_execution_plan_generation}, a DBQ instruction will be generated for $u$.
The execution times of the instruction are equal to the number of matches of $p'$.
In case 2, no DBQ instruction will be generated.
The partial communication cost remains unchanged.
If the partial communication cost is already bigger than the current best cost, $O$ and all the orders \emph{expanded} from $O$ can be pruned safely.

The time complexity of the \texttt{Search} procedure is dominated by the estimation operation in Line 15.
The complexity of the operation is $O(m)$ and we denote its executed times as $\alpha$.
The time complexity of Line 4 to Line 7 is dominated by the optimized execution plan generation operation.
The complexity of the operation is $O(n^22^n)$ and we denote its executed times as $\beta$.
Therefore, the time complexity of Algorithm~\ref{alg:best_execution_plan_generation} is $O(\alpha m + \beta n^22^n)$.
$\alpha$ and $\beta$ are affected by the pattern graph.
The upper bound of $\alpha$ is $\sum_{i=1}^{n}{\mathcal{P}(n,i)}$ ($\mathcal{P}(n,i)$ is $i$-permutations of $n$).
The upper bound of $\beta$ is $n!$.
In practice, $\alpha$ and $\beta$ are much less than their upper bounds.

\begin{algorithm}[!t]
    \caption{Best Execution Plan Generation\label{alg:best_execution_plan_generation}}
    \small
    \newcommand{\var}[1]{\text{\textit{#1}}}
    \textbf{Input:} Pattern graph $P$.
    \textbf{Output:} Best execution plan $E_{\var{best}}$.
    \newcommand{\Owork}{O_{\var{work}}}
    \begin{algorithmic}[1]
        \State $E_{\var{best}} \leftarrow \text{NULL}$; $O_{\var{cand}} \leftarrow \{\}$; \Comment{Global variables}
        \State $\var{bCommCost} \leftarrow +\infty$; $\var{bCompCost} \leftarrow +\infty$; \Comment{Best costs}
        \State \Call{Search}{0, $V(P)$, new PartialPatternGraph(), [], 0}; \Comment{$O_\var{cand}$ is updated in \noun{Search}}

        \ForAll{$O \in O_{\var{cand}}$}
        \State $ E \leftarrow$ \noun{GenerateOptimizedExecutionPlan}($P$, $O$);
        \State $\var{cost} \leftarrow$ \Call{EstimateComputationCost}{$P$,$E$};

        \If{$\var{cost} < \var{bCompCost}$} $E_{\var{best}} \leftarrow E$, $\var{bCompCost} \leftarrow \var{cost}$;
        \EndIf
        \EndFor
        \State \Return{$E_{\var{best}}$}.

        \Procedure{Search}{$i, C, p, O, \var{commCost}$}
        \If{$i < |V(P)|$} \Comment{$O$ is not complete}
        \ForAll{$u \in C$ passing dual condition checking}
        \State $O[i] \leftarrow u$;
        ${C}' \leftarrow {C} - \{u\}$;
        \State $p' \leftarrow$ Add $u$ to the partial pattern graph $p$;
        \If{$\Gamma_P(u) \cap {C} \neq \emptyset$} \Comment{Case 1}
        \State $s \leftarrow$ Estimate the number of matches of $p'$;
        \Else \Comment{Case 2}
        \State $s \leftarrow 0$;
        \EndIf
        \State $\var{commCost}' \leftarrow \var{commCost} + s$;
        \If{$\var{commCost}' > \var{bCommCost}$} continue; \EndIf
        \State \Call{Search}{$i+1, {C}', p', O, \var{commCost}'$};
        \EndFor
        \Else \Comment{$O$ is complete}
        \If{$\var{commCost} < \var{bCommCost}$}
        \State $\var{bCommCost} \leftarrow \var{commCost}$; $O_{\var{cand}} \leftarrow \{O\}$;
        \ElsIf{$\var{commCost} = \var{bCommCost}$}
        \State $O_{\var{cand}} \leftarrow O_{\var{cand}} \cup \{O\}$.
        \EndIf
        \EndIf
        \EndProcedure

        \Function{EstimateComputationCost}{$P$, $E$}
        \State $\var{cost} \leftarrow 0$; $\var{curNum} \leftarrow 0$; $p' \leftarrow$ new PartialPatternGraph();
        \ForAll{instruction $I \in E$}
        \If{$I$.type is ENU}
        \State Update $p'$ with $I$;
        \State $\var{curNum} \leftarrow$ estimate the number of matches of $p'$;
        \ElsIf{$I$.type is INT or TRC}
        \State $\var{cost} \leftarrow \var{cost} + \var{curNum}$;
        \EndIf
        \EndFor
        \State \Return{\var{cost}}.
        \EndFunction
    \end{algorithmic}
\end{algorithm}

\section{Streaming-BENU Framework}

When data graphs are dynamic, the subgraph enumeration problem becomes the continuous subgraph enumeration problem.
A naive approach to the problem is conducting subgraph enumeration on $G'_{t}$ and $G'_{t-1}$ separately at each time step $t$ and calculating differences of $R_{G'_t}(P)$ and $R_{G'_{t-1}}(P)$.
However, enumerating subgraphs from scratch is expensive and contains redundant computation.
To overcome the drawback, we propose the Streaming-BENU framework that enumerates subgraphs in $\Delta R_{t}^{+}$ and $\Delta R_{t}^{-}$ incrementally from the batch update $\Delta o_{t}$.
For a clear illustration, we use the same example in Fig.~\ref{fig:demo_case_of_continuous_subgraph_enumeration} through the section.

\begin{figure*}[!t]
    \centering
    \includegraphics[height=5cm]{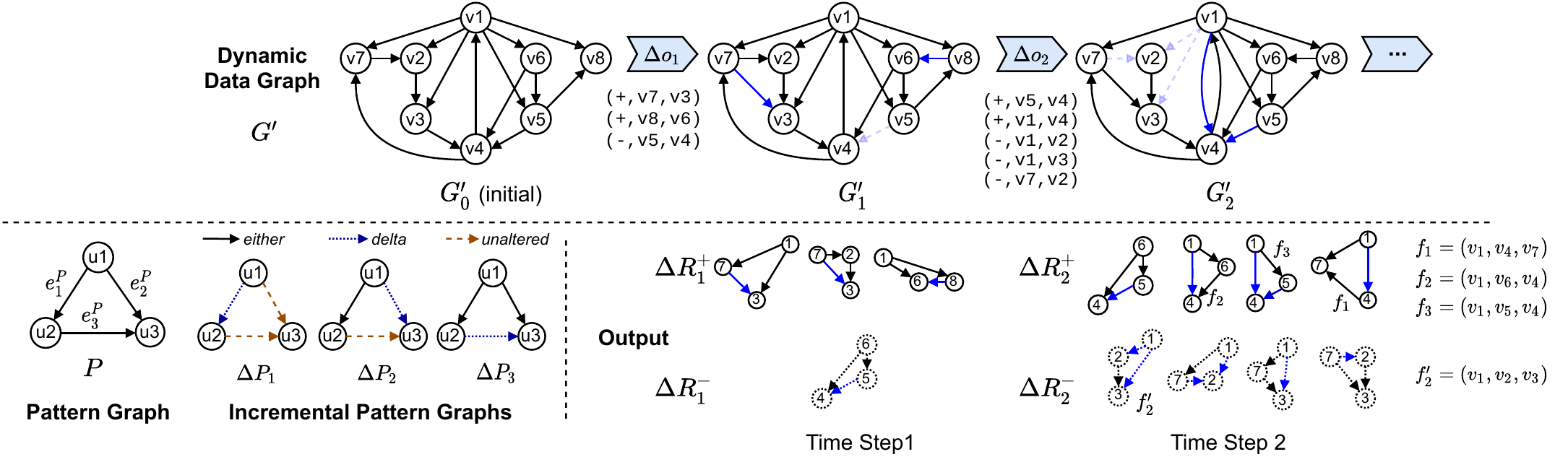}
    \caption{Demo case of continuous subgraph enumeration.\label{fig:demo_case_of_continuous_subgraph_enumeration}}
\end{figure*}

\subsection{Incremental Subgraph Matching}

Given a data graph $G'$ and a time step $t$ ($t \geq 1$), $G'_{t-1}$ and $G'_t$ are the snapshots related to $t$.
We can classify the edges $e$ of $G'_{t-1}$ and $G'_{t}$ into two types:
1) $e$ is a \emph{delta} edge if $e \in E(G'_{t}) \setminus E(G'_{t-1})$ (inserting edge) or $e \in E(G'_{t-1}) \setminus E(G'_t)$ (deleting edge);
2) $e$ is an \emph{unaltered} edge if $e \in E(G'_{t}) \cap E(G'_{t-1})$.
For the toy case in Fig.~\ref{fig:demo_case_of_continuous_subgraph_enumeration} at time step 2, $(v_{1},v_{4})$ is an inserting edge in $G'_2$, $(v_{1},v_{3})$ is a deleting edge in $G'_{1}$, and $(v_{4},v_{1})$ is an unaltered edge.

As stated by Theorem~\ref{thm:incremental_subgraph_must_contain_a_delta_edge}, any subgraph in the appearing matches $\Delta R_t^{+}$ or the disappearing matches $\Delta R_t^{-}$ must contain a delta edge.
It indicates that we only need to enumerate isomorphic subgraphs of $P$ that contain at least a delta edge.
Since $|\Delta o_t| \ll |E(G'_t)|$, the number of such subgraphs are much less than $|R_{G'_t}(P)|$.

\begin{theorem}\label{thm:incremental_subgraph_must_contain_a_delta_edge}
    For any $g \in \Delta R_t^+$, $g$ contains an inserting edge.
    For any $g \in \Delta R_t^-$, $g$ contains a deleting edge.
\end{theorem}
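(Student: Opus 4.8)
The plan is to prove the contrapositive for each of the two claims, using the bijection between matches and isomorphic subgraphs established by symmetry breaking together with the definition of $\Delta R_t^{+}$ and $\Delta R_t^{-}$ as set differences. The key observation is that a subgraph $g$ of a snapshot $G'_t$ is determined by its edge set (and incident vertices), and an isomorphic subgraph of $P$ is a subgraph with exactly $n$ vertices and $m$ edges that admits a match of $P$; whether $g$ also counts as an isomorphic subgraph of $P$ in $G'_{t-1}$ depends only on whether every edge of $g$ is present in $E(G'_{t-1})$.

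\textbf{Proof of the first claim.} Let $g \in \Delta R_t^{+} = R_{G'_t}(P) \setminus R_{G'_{t-1}}(P)$. Then $g$ is an isomorphic subgraph of $P$ in $G'_t$, so in particular $E(g) \subseteq E(G'_t)$, $|V(g)| = n$, and $|E(g)| = m$. Suppose, for contradiction, that $g$ contains no inserting edge, i.e., no edge $e \in E(G'_t) \setminus E(G'_{t-1})$. Then every edge of $g$ lies in $E(G'_t) \setminus \bigl(E(G'_t) \setminus E(G'_{t-1})\bigr) \subseteq E(G'_{t-1})$, so $E(g) \subseteq E(G'_{t-1})$. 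Hence $g$ is a subgraph of $G'_{t-1}$ as well, and since $g$ still has $n$ vertices, $m$ edges, and admits a match of $P$ (the same match witnessing $g \in R_{G'_t}(P)$, which depends only on $g$ itself, not on the ambient graph), we get $g \in R_{G'_{t-1}}(P)$, contradicting $g \in \Delta R_t^{+}$. Therefore $g$ contains an inserting edge.

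\textbf{Proof of the second claim.} Symmetrically, let $g \in \Delta R_t^{-} = R_{G'_{t-1}}(P) \setminus R_{G'_t}(P)$. Then $E(g) \subseteq E(G'_{t-1})$ with $|V(g)| = n$ and $|E(g)| = m$. If $g$ contained no deleting edge, i.e., no $e \in E(G'_{t-1}) \setminus E(G'_t)$, then every edge of $g$ would lie in $E(G'_{t-1}) \cap E(G'_t) \subseteq E(G'_t)$, making $g$ an isomorphic subgraph of $P$ in $G'_t$ by the same reasoning as above, contradicting $g \notin R_{G'_t}(P)$. Hence $g$ contains a deleting edge.

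The argument is essentially a set-theoretic bookkeeping exercise, so there is no serious obstacle; the only point requiring mild care is to make explicit that the property ``$g$ is isomorphic to $P$'' (Definition~\ref{def:isomorphic_subgraph}) is intrinsic to $g$ — it asserts the existence of a match of $P$ \emph{in $g$} together with the cardinality conditions $|V(P)| = |V(g)|$ and $|E(P)| = |E(g)|$ — and therefore does not change when we view $g$ inside a different snapshot, provided all of $g$'s edges are present there. Once this is spelled out, both directions follow immediately by partitioning $E(g)$ according to the delta/unaltered classification of edges introduced just before the theorem.
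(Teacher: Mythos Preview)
Your proof is correct and follows essentially the same approach as the paper: assume for contradiction that $g$ contains no inserting (resp.\ deleting) edge, deduce that all edges of $g$ are unaltered and hence present in the other snapshot, and conclude $g \in R_{G'_{t-1}}(P)$ (resp.\ $R_{G'_t}(P)$), contradicting the set-difference definition of $\Delta R_t^{\pm}$. Your version is slightly more explicit about why the property ``isomorphic to $P$'' is intrinsic to $g$, but the argument is the same.
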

\begin{proof}
    We proof the theorem by contradiction.
    For any $g \in \Delta R_t^+$, we assume that $g$ does not contain any inserting edge.
    Since $g$ is an isomorphic subgraph of $P$ in $G'_t$, $g$ does not contain any deleting edge.
    $g$ only contains unaltered edges.
    $g$ is also a subgraph of $G'_{t-1}$.
    $g \in R_{G'_{t-1}}(P)$.
    It is inconsistent with $g \in \Delta R_t^+$ (i.e. $R_{G'_t}(P) \setminus R_{G'_{t-1}}(P)$).
    Therefore, for any $g \in \Delta R_t^+$, $g$ contains at least one inserting edge.
    The proof for $g \in \Delta R_t^-$ is similar.
\end{proof}

\subsubsection{Incremental Pattern Graph and Its Match}

To find the isomorphic subgraphs with delta edges, we extend a pattern graph $P$ with $m$ edges into $m$ \emph{incremental pattern graphs} $\Delta P_i$ ($1 \leq i \leq m$) as defined in Definition~\ref{def:incremental_pattern_graph}.
We number edges of $P$ consecutively.
Edge IDs are necessary to define the edge type mapping $\tau_i$ of $\Delta P_i$.
$\tau_i$ assigns every edge of $P$ to one of three types.
Fig.~\ref{fig:demo_case_of_continuous_subgraph_enumeration} shows the incremental pattern graphs of $P$.
$\tau_i$ is illustrated with edge colors.

\begin{definition}[Incremental Pattern Graph]\label{def:incremental_pattern_graph}
    Given a pattern graph $P$ with its edges numbered consecutively as $E(P)=\{e_{1}^{P},e_{2}^{P},\dots,e_{m}^{P}\}$, $P$ has $m$ incremental pattern graphs.
    The $i$-th incremental pattern graph (denoted as $\Delta P_{i}$) is a graph $\Delta P_{i}=(V(P),E(P),\tau_{i})$, where $\tau_i: E(P) \rightarrow \{\text{either, delta, unaltered}\}$ is an edge type mapping.

    $$
        \tau_{i}(e_{k}^{P})=\begin{cases}
            \text{either}    & 1 \leq k < i \\
            \text{delta}     & k = i        \\
            \text{unaltered} & i < k \leq m
        \end{cases}
    $$
\end{definition}

Definition~\ref{def:incremental_match} defines the \emph{incremental match} of an incremental pattern graph in $G'_t$ and $G'_{t-1}$ for every time step $t$.
An incremental match $f$ is a match of $P$ in the snapshot, but $f$ has type constraints on the data edges that a pattern edge can map to.
Definition~\ref{def:isomorphic_subgraph_of_incremental_pattern_graph} defines the isomorphic subgraph of an incremental pattern graph.
It is easy to see that a subgraph isomorphic to an incremental pattern graph is also isomorphic to the pattern graph.

\begin{definition}[Incremental Match]\label{def:incremental_match}
    Given a dynamic data graph $G'$, a pattern graph $P$, a time step $t$, and an incremental pattern graph $\Delta P_{i}$, a mapping $f: V(P) \rightarrow V(G')$ is an \emph{incremental match} of $\Delta P_{i}$ in $G'_{t}$ ($G'_{t-1}$) \emph{if and only if} $f$ satisfies:
    \begin{enumerate}
        \item $f$ is a match of $P$ in $G'_t$ ($G'_{t-1}$) satisfying the partial order constraints;
        \item For every edge $e^P=(s,t) \in E(P)$ and the data edge that $e^P$ is mapped to $e^{G'}=(f(s), f(t))$:

              If $\tau_i(e^P)=\text{either}$, $e^{G'}$ is an edge of $G'_t$ ($G'_{t-1})$;

              If $\tau_i(e^P)=\text{delta}$, $e^{G'}$ is a \emph{delta} edge of $G'_t$ ($G'_{t-1}$);

              If $\tau_i(e^P)=\text{unaltered}$, $e^{G'}$ is an \emph{unaltered} edge of $G'_t$ ($G'_{t-1}$).
    \end{enumerate}
\end{definition}

\begin{definition}\label{def:isomorphic_subgraph_of_incremental_pattern_graph}
    Given a dynamic data graph $G'$, a pattern graph $P$, a time step $t$, and an incremental pattern graph $\Delta P_{i}$, a subgraph $g$ of $G'_{t}$ ($G'_{t-1}$) is \emph{isomorphic} to $\Delta P_{i}$ \emph{if and only if} there exists an incremental match $f$ of $\Delta P_{i}$ in $g$, $|V(P)|=|V(g)|$ and $|E(P)|=|E(g)|$.
\end{definition}

\begin{example}
    For $t=2$ in Fig.~\ref{fig:demo_case_of_continuous_subgraph_enumeration}, $G'_2$ and $G'_1$ are related snapshots.
    $f_1$/$f_2$/$f_3$ is an incremental match of $\Delta P_1$/$\Delta P_2$/$\Delta P_3$ in $G'_2$, respectively.
    $f'_2$ is an incremental match of $\Delta P_2$ in $G'_1$.
    Their corresponding subgraphs in $G'_2$ and $G'_1$ are marked on the left.
    In $f_2$, $e_2^P$ is mapped to a delta edge $(v_1,v_4)$ of $G'_2$.
\end{example}


\subsubsection{Continuous Subgraph Enumeration via Subgraph Enumeration}

Given a dynamic graph $G'$, a pattern graph $P$ and a time step $t$, we denote the set of subgraphs isomorphic to $\Delta P_{i}$ in $G'_{t}$/$G'_{t-1}$ as $\Delta R_{t}^{i,+}$/$\Delta R_{t}^{i,-}$, respectively.
$\Delta R_{t}^{i,+}$ and $\Delta R_{t}^{i,-}$ have a strong connection with the output of continuous subgraph enumeration at each time step $t$ $\Delta R_t^+$ and $\Delta R_t^-$.

Theorem~\ref{thm:isomorphic_subgraph_must_be_matching_result} shows that every isomorphic subgraph of an incremental pattern graph in $G'_t$/$G'_{t-1}$ must be an appearing/disappearing match in $\Delta R_t^+$/$\Delta R_t^-$, respectively.
Theorem~\ref{thm:matching_result_must_be_isomorphic_subgraph} shows that every appearing/disappearing match in $\Delta R_t^+$/$\Delta R_t^-$ must a subgraph isomorphic to some incremental pattern graph in $G'_t$/$G'_{t-1}$, respectively.

\begin{theorem}\label{thm:isomorphic_subgraph_must_be_matching_result}
    For any $1\leq i\leq|E(P)|$, if $g\in\Delta R_{t}^{i,+}$, then $g\in\Delta R_{t}^{+}$; if $g\in\Delta R_{t}^{i,-}$, then $g\in\Delta R_{t}^{-}$.
\end{theorem}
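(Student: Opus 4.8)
The plan is to prove the two implications of Theorem~\ref{thm:isomorphic_subgraph_must_be_matching_result} separately but symmetrically, handling the appearing case in full and noting that the disappearing case follows by the same argument with the roles of $G'_t$ and $G'_{t-1}$ (and inserting/deleting edges) swapped. So suppose $g \in \Delta R_t^{i,+}$, i.e. $g$ is a subgraph of $G'_t$ isomorphic to the incremental pattern graph $\Delta P_i$. By Definition~\ref{def:isomorphic_subgraph_of_incremental_pattern_graph} there is an incremental match $f$ of $\Delta P_i$ in $g$ with $|V(P)| = |V(g)|$ and $|E(P)| = |E(g)|$. The first thing I would extract is that $f$ is in particular an ordinary match of $P$ in $G'_t$ satisfying the partial order constraints (clause 1 of Definition~\ref{def:incremental_match}), hence $g$ is an isomorphic subgraph of $P$ in $G'_t$, so $g \in R_{G'_t}(P)$. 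What remains is to show $g \notin R_{G'_{t-1}}(P)$, which will give $g \in R_{G'_t}(P) \setminus R_{G'_{t-1}}(P) = \Delta R_t^+$.

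The key observation for the second half is that $\tau_i$ assigns the type \emph{delta} to the edge $e_i^P$, so by clause 2 of Definition~\ref{def:incremental_match} the pattern edge $e_i^P = (s,t)$ is mapped to a data edge $e^{G'} = (f(s), f(t))$ that is a delta edge of $G'_t$; since we are in the appearing case ($\Delta P_i$ in $G'_t$), ``delta edge of $G'_t$'' means an inserting edge, i.e. $e^{G'} \in E(G'_t) \setminus E(G'_{t-1})$. In particular $e^{G'}$ is an edge of $g$ (because $f$ maps into $g$ and $g$ has exactly $|E(P)|$ edges, so every mapped pattern edge lies in $g$) but $e^{G'} \notin E(G'_{t-1})$. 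Therefore $g$ is not a subgraph of $G'_{t-1}$ at all, so it certainly cannot be an isomorphic subgraph of $P$ in $G'_{t-1}$; that is, $g \notin R_{G'_{t-1}}(P)$. Combining with $g \in R_{G'_t}(P)$ yields $g \in \Delta R_t^+$. For the disappearing case one runs the identical argument: $g \in \Delta R_t^{i,-}$ gives an incremental match into $G'_{t-1}$, clause 1 gives $g \in R_{G'_{t-1}}(P)$, and the delta-typed edge $e_i^P$ now maps to a delta edge of $G'_{t-1}$, which in this context is a deleting edge $e^{G'} \in E(G'_{t-1}) \setminus E(G'_t)$, so $g$ is not a subgraph of $G'_t$ and hence $g \notin R_{G'_t}(P)$, giving $g \in \Delta R_t^-$.

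There is essentially no hard step here — the theorem is a direct unpacking of the definitions, and the only thing one must be careful about is the bookkeeping that connects ``$g$ is isomorphic to $\Delta P_i$'' to ``$e^{G'}$ is genuinely an edge of $g$'': this uses the cardinality conditions $|V(P)| = |V(g)|$ and $|E(P)| = |E(g)|$ together with the fact that $f$ is a match into $g$, so that $f$ restricted to edges is a bijection onto $E(g)$. I would state this small fact explicitly (perhaps as a one-line remark, or inline) so that the claim ``$e^{G'}$ is an edge of $g$ but not of $G'_{t-1}$'' is unambiguous. If one wanted to be extra careful, the ``either''-typed edges $e_k^P$ for $k < i$ and ``unaltered''-typed edges for $k > i$ play no role in this proof beyond ensuring $f$ is a valid match of $P$; their specific type constraints are only needed for Theorem~\ref{thm:matching_result_must_be_isomorphic_subgraph} (the converse direction) and for proving duplication-freeness, so I would not invoke them here.
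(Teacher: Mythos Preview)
Your proof is correct and follows essentially the same approach as the paper: unpack Definition~\ref{def:isomorphic_subgraph_of_incremental_pattern_graph} to get an incremental match $f$, use clause~1 of Definition~\ref{def:incremental_match} to conclude $g \in R_{G'_t}(P)$, and use the delta-typed edge $e_i^P$ to exhibit an edge of $g$ absent from $G'_{t-1}$, forcing $g \notin R_{G'_{t-1}}(P)$. Your extra care in justifying that $e^{G'}$ is actually an edge of $g$ via the cardinality conditions is a detail the paper leaves implicit, but otherwise the arguments are identical.
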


\begin{proof}
    For any $i$ with $1 \leq i \leq |E(P)|$, for any $g \in \Delta R_{t}^{i,+}$, there is an incremental match $f$ of $\Delta P_i$ in $g$ according to Definition~\ref{def:isomorphic_subgraph_of_incremental_pattern_graph}.
    For the $i$-th edge $e_i^P=(s,t)$ of $P$, $\tau_i(e_i^P)=\text{delta}$.
    According to Definition~\ref{def:incremental_match}, $f$ maps $e_i^P$ to a delta edge $(f(s),f(t)) \in E(G'_t) \setminus E(G'_{t-1})$.
    $g$ cannot be a subgraph of $G'_{t-1}$.
    Thus, $g \notin R_{G'_{t-1}}(P)$.
    Since $g$ is isomorphic to $\Delta P_i$ in $G'_t$, $g$ is also isomorphic to $P$ in $G'_t$.
    Thus, $g \in R_{G'_t}(P)$.
    According to Definition~\ref{def:continuous_subgraph_enumeration}, $g \in \Delta R_t^+$.

    For any $g \in \Delta R_{t}^{i,-}$, the proof is similar.
\end{proof}

\begin{theorem}\label{thm:matching_result_must_be_isomorphic_subgraph}
    For any $g \in \Delta R_{t}^{+}$, $\exists i: g \in \Delta R_{t}^{i,+}$. For any $g \in \Delta R_{t}^{-}$, $\exists i: g \in \Delta R_{t}^{i,-}$.
\end{theorem}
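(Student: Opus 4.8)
The plan is to prove the theorem by contradiction combined with a minimal-index argument, mirroring the structure of Theorem~\ref{thm:incremental_subgraph_must_contain_a_delta_edge} and Theorem~\ref{thm:isomorphic_subgraph_must_be_matching_result}. Fix a time step $t$ and take any $g \in \Delta R_t^+$. By Definition~\ref{def:continuous_subgraph_enumeration}, $g$ is an isomorphic subgraph of $P$ in $G'_t$ but not in $G'_{t-1}$, so there is a (unique, under symmetry breaking) match $f$ of $P$ in $g$ satisfying the partial order constraints. By Theorem~\ref{thm:incremental_subgraph_must_contain_a_delta_edge}, $g$ contains at least one inserting edge; since $g \subseteq G'_t$, every edge of $g$ lies in $E(G'_t)$, hence the delta edges of $g$ are exactly its inserting edges, and there is at least one.

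The key step is to pick out the right incremental pattern graph index $i$. First I would pull back the edge classification through $f$: for each pattern edge $e_k^P$, the data edge $(f(s),f(t))$ to which it maps is either a delta (inserting) edge of $G'_t$ or an unaltered edge. Let $D = \{\, k : f(e_k^P) \text{ is a delta edge of } G'_t \,\}$; by the previous paragraph $D \neq \emptyset$. Set $i = \min D$. I claim $f$ is an incremental match of $\Delta P_i$ in $g$ according to Definition~\ref{def:incremental_match}. Condition~(1) holds because $f$ is already a match of $P$ in $G'_t$ with the partial order constraints. For condition~(2), check the three cases of $\tau_i$: for $k < i$ we need only that $f(e_k^P)$ is \emph{some} edge of $G'_t$, which holds trivially since $g \subseteq G'_t$ (the ``either'' type imposes nothing); for $k = i$ we need $f(e_i^P)$ to be a delta edge of $G'_t$, which holds by definition of $i \in D$; for $k > i$ we need $f(e_k^P)$ to be an unaltered edge of $G'_t$, i.e. $k \notin D$, which holds because $i = \min D$ and so no index strictly larger than $i$ that we must rule out — wait, this last point is exactly where care is needed, so let me restate it: $i = \min D$ only guarantees $k \notin D$ for $k < i$, not for $k > i$.

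This is the main obstacle, and it forces a small refinement of the index choice: rather than taking $\min D$, I would take $i = \max D$. With $i = \max D$: for $k > i$ we get $k \notin D$, so $f(e_k^P)$ is unaltered, matching $\tau_i(e_k^P) = \text{unaltered}$; for $k = i$, $f(e_i^P)$ is a delta edge, matching $\tau_i(e_i^P) = \text{delta}$; and for $k < i$, $\tau_i(e_k^P) = \text{either}$ imposes no constraint beyond $f(e_k^P) \in E(G'_t)$, which is automatic. Hence $f$ is an incremental match of $\Delta P_i$ in $g$, and since $|V(P)| = |V(g)|$ and $|E(P)| = |E(g)|$ (as $g$ is isomorphic to $P$), Definition~\ref{def:isomorphic_subgraph_of_incremental_pattern_graph} gives $g \in \Delta R_t^{i,+}$, as required.

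The disappearing case is symmetric: for $g \in \Delta R_t^-$, $g \subseteq G'_{t-1}$ and by Theorem~\ref{thm:incremental_subgraph_must_contain_a_delta_edge} contains a deleting edge; pulling the classification back through the match $f$ of $P$ in $g$, letting $D$ be the set of pattern-edge indices mapping to delta (deleting) edges of $G'_{t-1}$, and taking $i = \max D$, the same case analysis against $\tau_i$ shows $f$ is an incremental match of $\Delta P_i$ in $g$ in $G'_{t-1}$, so $g \in \Delta R_t^{i,-}$. I would remark that combining this theorem with Theorem~\ref{thm:isomorphic_subgraph_must_be_matching_result} yields $\Delta R_t^+ = \bigcup_{i} \Delta R_t^{i,+}$ and $\Delta R_t^- = \bigcup_{i} \Delta R_t^{i,-}$, which is the equivalence between continuous subgraph enumeration and enumerating incremental pattern graphs that the framework relies on (noting that the union need not be disjoint, which the Streaming-BENU deduplication must still handle).
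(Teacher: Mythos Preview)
Your proof is correct and follows essentially the same route as the paper: pull back the delta/unaltered classification of the data edges along the unique symmetry-broken match $f$, let $D$ be the set of pattern-edge indices landing on delta edges, and take $i = \max D$; the paper phrases this as listing the delta edges of $g$ as $e_{x_1}^g,\dots,e_{x_j}^g$ with $x_1<\dots<x_j$ and taking $i = x_j$, then checking the same three cases against $\tau_i$. Your detour through $\min D$ is expository only.

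One side remark is inaccurate: you write that ``the union need not be disjoint, which the Streaming-BENU deduplication must still handle.'' In fact the paper proves (Theorem~\ref{thm:matching_result_do_not_overlap}) that the sets $\Delta R_t^{i,+}$ are pairwise disjoint, precisely because a subgraph cannot be an incremental match of both $\Delta P_a$ and $\Delta P_b$ with $a<b$ (the $b$-th edge would have to be simultaneously unaltered and delta). So no deduplication is needed across incremental pattern graphs.
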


\begin{proof}
    For any $g \in \Delta R_t^+$, $g$ is isomorphic to $P$ in $G'_t$.
    According to Definition~\ref{def:isomorphic_subgraph}, $|V(P)|=|V(g)|$ and $|E(P)|=|E(g)|$.
    There is one and only one match $f$ of $P$ in $g$ satisfying the partial order constraint.
    We number the edges of $g$ according to $P$.
    For the $k$-th edge of $P$ $e_k^P=(s_k,t_k)$, we number the edge $e_k^g=(f(s_k), f(t_k))$ of $g$ as $k$, correspondingly.
    As $g$ is a subgraph of $G'_t$, the edges of $g$ are also classified as the \textit{delta} edges and the \textit{unaltered} edges.
    Assume $g$ has $j$ delta edges $\{e_{x_1}^g, e_{x_2}^g, \dots, e_{x_j}^g\}$, where $x_1 < x_2 < \dots < x_j$.
    Now we proof that $g$ is isomorphic to $\Delta P_{x_j}$ in $G'_t$.
    For every edge $e_k^P=(s_k,t_k)$ of $P$ with $1 \leq k \leq |E(P)|$,
    \begin{enumerate}
        \item If $k < x_j$, $\tau_{x_j}(e_k^P) = \text{either}$ and $(f(s_k),f(t_k)) \in E(G'_t)$, because $f$ is a match of $P$ in $g$;
        \item If $k = x_j$, $\tau_{x_j}(e_k^P) = \text{delta}$ and $(f(s_k),f(t_k))$ is a delta edge, because $(f(s_k),f(t_k)) = e_{k}^g$ under the edge numbering and $e_{x_j}^g$ is a delta edge;
        \item If $k > x_j$, $\tau_{x_j}(e_k^P) = \text{unaltered}$ and $(f(s_k), f(t_k))$ is an unaltered edge, because $(f(s_k),f(t_k)) = e_{k}^g$ under the edge numbering and $e_{k}^g$ with $k > x_j$ are  unaltered edges.
    \end{enumerate}

    According to Definition~\ref{def:incremental_match}, $f$ is an incremental match of $\Delta P_{x_j}$ in $G'_t$.
    According to Definition~\ref{def:isomorphic_subgraph_of_incremental_pattern_graph}, $g$ is isomorphic to $\Delta P_{x_j}$.
    Thus, $g \in \Delta R_t^{x_j,+}$.

    For any $g \in \Delta R_t^-$, the proof is similar.
\end{proof}

Based on the two theorems, we can get Theorem~\ref{thm:matching_results_equivalence}.
It indicates that we can get the matching results $\Delta R_{t}^+$ by combining $\Delta R_t^{i,+}$ ($1 \leq i \leq m$) and get $\Delta R_t^-$ by combining $\Delta R_t^{i,-}$ ($1 \leq i \leq m$).
Theorem~\ref{thm:matching_result_do_not_overlap} further indicates that matching results of different incremental pattern graphs $\Delta R_t^{i,+}$ ($\Delta R_t^{i,-}$) do not overlap with each other.
We can simply combine $\Delta R_t^{i,+}$ ($\Delta R_t^{i,-}$) without de-duplicating.

\begin{theorem}\label{thm:matching_results_equivalence}
    $\Delta R_{t}^{+} = \bigcup_{1\leq i\leq m}\Delta R_{t}^{i,+}$ and $\Delta R_{t}^{-} = \bigcup_{1\leq i\leq m}\Delta R_{t}^{i,-}$, where $m=|E(P)|$.
\end{theorem}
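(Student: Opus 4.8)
The plan is to prove the two set equalities by mutual inclusion, leaning entirely on the two theorems just established. I will treat the "$+$" case in full and note that the "$-$" case is symmetric (exactly as the paper does for its lemmas).

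For the inclusion $\bigcup_{1\leq i\leq m}\Delta R_t^{i,+} \subseteq \Delta R_t^+$, I would argue pointwise: take any $g \in \bigcup_{i}\Delta R_t^{i,+}$, so $g \in \Delta R_t^{i,+}$ for some particular $i$ with $1\leq i\leq m$. Then Theorem~\ref{thm:isomorphic_subgraph_must_be_matching_result} immediately gives $g \in \Delta R_t^+$. Since this holds for every member of the union, the inclusion follows. For the reverse inclusion $\Delta R_t^+ \subseteq \bigcup_{i}\Delta R_t^{i,+}$, take any $g \in \Delta R_t^+$. Theorem~\ref{thm:matching_result_must_be_isomorphic_subgraph} provides an index $i$ (concretely, the largest delta-edge index $x_j$ of $g$ under the canonical edge numbering) such that $g \in \Delta R_t^{i,+}$, hence $g \in \bigcup_{i}\Delta R_t^{i,+}$. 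Combining the two inclusions yields $\Delta R_t^+ = \bigcup_{1\leq i\leq m}\Delta R_t^{i,+}$. The same two-step argument with Theorems~\ref{thm:isomorphic_subgraph_must_be_matching_result} and~\ref{thm:matching_result_must_be_isomorphic_subgraph} (their "$-$" halves) gives $\Delta R_t^- = \bigcup_{1\leq i\leq m}\Delta R_t^{i,-}$.

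There is essentially no hard part here: the theorem is a clean corollary, and the only thing to be a little careful about is the range of the index. I would double-check that the index $i$ produced in the proof of Theorem~\ref{thm:matching_result_must_be_isomorphic_subgraph} always lies in $\{1,\dots,m\}$ --- this is where the hypothesis that $g \in \Delta R_t^+$ forces $g$ to actually contain at least one delta edge (guaranteed by Theorem~\ref{thm:incremental_subgraph_must_contain_a_delta_edge}), so the set of delta-edge indices is nonempty and $x_j$ is well-defined with $1 \leq x_j \leq m$. Without that observation the union might a priori miss some match, so I would state it explicitly as the one substantive point. Everything else is a direct unwinding of definitions, so the proof should be just a few lines.
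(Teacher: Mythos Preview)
Your proposal is correct and follows essentially the same approach as the paper: a mutual-inclusion argument that invokes Theorem~\ref{thm:isomorphic_subgraph_must_be_matching_result} for one direction and Theorem~\ref{thm:matching_result_must_be_isomorphic_subgraph} for the other, with the ``$-$'' case handled symmetrically. Your extra remark about the index $x_j$ being well-defined (via Theorem~\ref{thm:incremental_subgraph_must_contain_a_delta_edge}) is a nice sanity check that the paper leaves implicit in its proof of Theorem~\ref{thm:matching_result_must_be_isomorphic_subgraph}.
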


\begin{proof}
    \newcommand{\UNION}{\bigcup_{1 \leq i \leq m}\Delta R_{t}^{i,+}}
    We first proof $\Delta R_{t}^+ \subseteq \UNION$.
    For any $g \in \Delta R_t^+$, according to Theorem~\ref{thm:matching_result_must_be_isomorphic_subgraph}, $\exists i: g \in \Delta R_t^{i,+}$.
    Thus, $g \in \UNION$.

    We then proof $\UNION \subseteq \Delta R_t^+$.
    For any $1 \leq i \leq m$ and any $g \in \Delta R_t^{i,+}$, according to Theorem~\ref{thm:isomorphic_subgraph_must_be_matching_result}, $g \in \Delta R_t^+$.
    Thus, $\Delta R_t^{i,+} \subseteq \Delta R_t^+$.
    $\UNION \subseteq \Delta R_t^+$.
    Given the above, $\Delta R_t^+ = \UNION$.

    The proof of $\Delta R_t^- = \bigcup_{1 \leq i \leq m} \Delta R_t^{i,-}$ is similar.
\end{proof}

\begin{theorem}\label{thm:matching_result_do_not_overlap}
    $\forall a\forall b$ with $1\leq a<b\leq|E(P)|$: $\Delta R_{t}^{a,+}\cap\Delta R_{t}^{b,+}=\emptyset$, $\Delta R_{t}^{a,-}\cap\Delta R_{t}^{b,-}=\emptyset$.
\end{theorem}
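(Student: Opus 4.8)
The plan is to argue by contradiction, exploiting the fact that the edge-type mapping $\tau_i$ assigns the $i$-th pattern edge the type \textit{delta} while assigning every later pattern edge the type \textit{unaltered}. Suppose, toward a contradiction, that there exist $a$ and $b$ with $1 \leq a < b \leq |E(P)|$ and some $g \in \Delta R_t^{a,+} \cap \Delta R_t^{b,+}$. By Definition~\ref{def:isomorphic_subgraph_of_incremental_pattern_graph}, $g$ is isomorphic to $P$ (since an incremental match of $\Delta P_a$ in $g$ is in particular a match of $P$ in $g$ with $|V(P)|=|V(g)|$, $|E(P)|=|E(g)|$), so by the symmetry-breaking discussion there is exactly one match $f$ of $P$ in $g$ satisfying the partial order constraints. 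Any incremental match of $\Delta P_a$ in $g$ is such a match, and likewise for $\Delta P_b$; hence both incremental matches must coincide with this unique $f$.

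Next I would focus on a single pattern edge, namely $e_b^P$, and track the type it is forced to have under the two incremental pattern graphs. Since $f$ is an incremental match of $\Delta P_b$ in $g$ and $\tau_b(e_b^P) = \text{delta}$, clause~(2) of Definition~\ref{def:incremental_match} forces $(f(s),f(t))$, where $e_b^P=(s,t)$, to be a \emph{delta} edge of $G'_t$. On the other hand, $f$ is also an incremental match of $\Delta P_a$ in $g$, and because $a < b \leq m$ we have $\tau_a(e_b^P) = \text{unaltered}$, so the same data edge $(f(s),f(t))$ must be an \emph{unaltered} edge of $G'_t$. But the delta edges and the unaltered edges partition $E(G'_t) \cup E(G'_{t-1})$ into disjoint classes (an edge in $E(G'_t)\cap E(G'_{t-1})$ is unaltered; an edge in the symmetric difference is delta), so no edge can be both. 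This contradiction establishes $\Delta R_t^{a,+} \cap \Delta R_t^{b,+} = \emptyset$.

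For the disappearing case $\Delta R_t^{a,-} \cap \Delta R_t^{b,-} = \emptyset$, I would run the identical argument with $G'_{t-1}$ in place of $G'_t$: an incremental match of $\Delta P_b$ in a common subgraph forces $e_b^P$ onto a deleting (delta) edge of $G'_{t-1}$, while an incremental match of $\Delta P_a$ forces the same edge to be unaltered, again a contradiction. I do not anticipate a genuine obstacle here; the only point needing a little care is the uniqueness step—being explicit that the (unique) partial-order-respecting match of $P$ in $g$ is the \emph{same} mapping witnessing membership in both $\Delta R_t^{a,+}$ and $\Delta R_t^{b,+}$—so that the type constraints from $\Delta P_a$ and from $\Delta P_b$ apply simultaneously to one and the same data edge.
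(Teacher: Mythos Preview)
Your proposal is correct and follows essentially the same route as the paper's own proof: both argue by contradiction, use the symmetry-breaking uniqueness of the partial-order-respecting match to identify the two incremental matches with a single mapping $f$, and then derive the contradiction from the conflicting type constraints $\tau_a(e_b^P)=\text{unaltered}$ versus $\tau_b(e_b^P)=\text{delta}$ on the single data edge $(f(s_b),f(t_b))$. The paper treats the $\Delta R_t^{a,-}\cap\Delta R_t^{b,-}$ case identically, just as you outline.
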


\begin{proof}
    We proof the theorem by contradiction.
    Assume $\exists a \exists b$ with $1 \leq a < b \leq |E(P)|$: $\Delta R_t^{a,+} \cap \Delta R_t^{b,+} \neq \emptyset$.
    For any $g \in \Delta R_t^{a,+} \cap \Delta R_t^{b,+}$, $g$ is a subgraph of $G'_t$.
    $g$ is isomorphic to $\Delta P_a$ and $\Delta P_b$ with the incremental match $f$ and $f'$, respectively.
    Since $g$ is also isomorphic to $P$, $g$ corresponds to one and only one match $f''$ satisfying the partial order constraints.
    According to Definition~\ref{def:incremental_match}, $f$ and $f'$ are also matches of $P$ in $G'_t$ satisfying the partial order constraints.
    Thus, $f=f'=f''$.
    Consider the $b$-th edge $e_b^P=(s_b,t_b)$ of $P$.
    Since $a < b$, $\tau_a(e_b^P)=\text{unaltered}$ and $\tau_b(e_b^P)=\text{delta}$.
    According to Definition~\ref{def:incremental_match}, $(f(s_b),f(t_b))$ is an unaltered edge, while $(f'(s_b), f'(t_b))$ is a delta edge.
    Since $f=f'$, $(f(s_b),f(t_b)) = (f'(s_b),f'(t_b))$.
    $(f(s_b),f(t_b))$ cannot be an unaltered edge and a delta edge at the same time.
    The assumption is invalid.
    Thus, $\forall a \forall b$ with $1 \leq a < b \leq |E(P)|$: $\Delta R_t^{a,+} \cap \Delta R_t^{b,+} = \emptyset$.

    The proof of $\forall a \forall b$ with $1 \leq a < b \leq |E(P)|$: $\Delta R_t^{a,-} \cap \Delta R_t^{b,-} = \emptyset$ is similar.

\end{proof}

With Theorem~\ref{thm:matching_results_equivalence} and Theorem~\ref{thm:matching_result_do_not_overlap}, we can turn the continuous subgraph enumeration into a series of subgraph enumeration on the snapshots.
For each time step $t$, we enumerate all isomorphic subgraphs of all incremental pattern graphs in the snapshot $G'_t$ and $G'_{t-1}$.
The isomorphic subgraphs form $\Delta R_t^+$ and $\Delta R_t^-$ at time step $t$ without duplication or omission.

Enumerating isomorphic subgraphs of incremental pattern graphs can be further converted into enumerating incremental matches.
There is a \emph{bijective} mapping between an isomorphic subgraph $g$
of $\Delta P_{i}$ in $G'_{t}$ ($G'_{t-1}$) and an incremental match $f$ of $\Delta P_{i}$ in $G'_{t}$ ($G'_{t-1}$).
On the one hand, a subgraph $g$ isomorphic to $\Delta P_{i}$ in $G'_{t}$ ($G'_{t-1}$) corresponds to only one incremental match $f$.
The incremental match $f$ is also a match of $P$ in $G'_t$ satisfying the partial order constraints.
According to the symmetry breaking technique, $f$ is unique.
On the other hand, an incremental match $f$ naturally corresponds to a subgraph $g$ in $G'_{t}$ ($G'_{t-1}$).
Based on the bijective mapping, we solve the continuous subgraph enumeration problem by enumerating incremental matches at each time step.

\subsection{Framework Overview}

We propose the \emph{Streaming-BENU} framework (S-BENU, for short) to enumerate incremental matches in snapshots efficiently.
The input of S-BENU contains the pattern graph $P$, the initial data graph $G'_0$, and the batch update $\Delta o_t$ at each time step $t$.
S-BENU outputs $\Delta R_{t}^{+}$ and $\Delta R_{t}^{-}$ at each time step.

Some existing continuous subgraph enumeration methods \cite{SJ-Tree}  \cite{TurboFlux} \cite{Timing} \cite{D-IDS} maintain the (partial) matching results of each time step in memory or on disk.
They use the matching results of the time step $t$ to compute the matching results of the time step $t+1$, avoiding re-computing some intermediate results.
However, maintaining matching results is only feasible when the pattern graph is highly selective \cite{SJ-Tree} and the size of matching results is not big.
If the data graph is big, the size of the matching results will become considerable as indicated by Table~\ref{tab:statistics_of_typical_pattern_graphs}.
Moreover, users may monitor multiple pattern graphs simultaneously, multiplying the storage cost.
In S-BENU, we choose not to maintain any matching result.
Instead, S-BENU only stores the data graph in a distributed database and shares the data graph among all pattern graphs.
S-BENU can store matching results in a distributed file system when it is needed.

Algorithm~\ref{alg:SBENU_framework} presents the pseudo-code of the S-BENU framework.
S-BENU consists of two phases.

\begin{algorithm}[!t]
    \small

    \textbf{Input:} Pattern graph $P$, Data graph $G'$, Distributed database $DB$. \\
    \textbf{Output:} $\Delta{R}_{t}^+$, $\Delta{R}_{t}^{-}$ for each time step $t$.

    \begin{algorithmic}[1]
        \State Store the initial graph of $G'$ into $DB$;
        \Comment{\emph{Phase I: Initialization}}
        \State $\mathbb{E}$ $\leftarrow$ \Call{GenerateBestExecutionPlans}{$P$};
        \State Broadcast $P$ and $\mathbb{E}$ to all worker nodes;
        \State $t \leftarrow 0$;
        \Loop
        \Comment{\emph{Phase II: Continuous enumeration}}
        \State{$t \leftarrow t + 1$}; \Comment{$t$ is the current time step}
        \State{Get the batch update $\Delta o_t$ of $G'$};
        \State Convert $\Delta o_t$ into delta adjacency sets;
        \State Store delta adjacency sets into $DB$;
        \ForAll{$start \in \{v|\Delta\Gamma_t^{\text{out}}(v) \neq \emptyset\}$} \emph{in parallel}
        \ForAll{$E_i \in \mathbb{E}$} \Comment{Execute $E_i$}
        \State $f$ $\leftarrow$ an empty mapping from $V(P)$ to $V(G')$;
        \State // \emph{Denote the $i$-th edge of $P$ as $e^P_i=(u_{s_i}, u_{t_i})$}
        \State $f(u_{s_i}) \leftarrow start$;
        \ForAll{$(op, v_y) \in Filter(\Delta\Gamma_t^{\text{out}}(f(u_{s_i}))$}
            \State $f(u_{t_i}) \leftarrow v_y$;
            \If{$op = +$} \Comment{Enumerate matches in $\Delta R_{t}^{i,+}$}
            \State Match remaining vertices in $f$ in $G'_t$;
            \Else \Comment{Enumerate matches in $\Delta R_{t}^{i,-}$}
            \State Match remaining vertices in $f$ in $G'_{t-1}$;
            \EndIf
            \EndFor
            \EndFor
            \EndFor
            \State Merge adjacency sets in $DB$ with delta adjacency sets;
        \EndLoop
    \end{algorithmic}

    \caption{Streaming-BENU Framework\label{alg:SBENU_framework}}
\end{algorithm}

The initialization phase is conducted once for $G'$.
S-BENU stores the initial graph of $G'$ into a distributed key-value database $DB$ in parallel.
For a vertex $v$, the key is its ID and the value is a tuple of its incoming and outgoing adjacency sets.
We will elaborate on the structure of the tuple later in Section~\ref{sec:data_graph_storage}.
The edges in the pattern graph $P$ are numbered.
S-BENU generates the best incremental execution plans $\mathbb{E}=\{E_1, E_2, \dots, E_n\}$ for every incremental pattern graph $\Delta P_i$ ($1 \leq i \leq |E(P)|$).
$P$ and $\mathbb{E}$ are then broadcasted.

The continuous enumeration phase is conducted repeatedly for every time step $t$.
It consists of three sub-phases: pre-processing (lines 7-9), enumeration (lines 10-20) and post-processing (line 21).

In pre-processing, S-BENU gets $\Delta o_{t}$ from an external data source like a message queue or a file.
S-BENU converts $\Delta o_{t}$ into \emph{delta adjacency sets} in parallel.
For a vertex $v$, its \emph{delta} adjacency sets are $\Delta\Gamma_{t}^{\text{in}}(v)=\{(op,w)|(op,w,v)\in\Delta o_{t}\}$ and $\Delta\Gamma_{t}^{\text{out}}(v)=\{(op,w)|(op,v,w)\in\Delta o_{t}\}$.
S-BENU only generates delta adjacency sets for the vertices appearing in $\Delta o_t$.

In enumeration, S-BENU generates a local search task for every vertex $start$ that has a non-empty delta outgoing adjacency set.
The $start$ is the starting vertex of the local search task.
S-BENU executes local search tasks in parallel in a distributed computing platform (line 10).
In every task, S-BENU executes incremental execution plans $E_i$ one by one (line 11).
$E_i$ searches incremental matches of $\Delta P_i$ in both $G'_t$ and $G'_{t-1}$ (lines 12 - 20).
We elaborate on it later in Section~\ref{sec:incremental_execution_plan}.

In post-processing, S-BENU fetches adjacency sets $\Gamma_{G'_{t-1}}^{\text{in/out}}(v)$ of the vertices appearing in $\Delta o_t$ from $DB$, merges them with $\Delta\Gamma_t^{\text{in/out}}(v)$, and stores $\Gamma_{G'_t}^{\text{in/out}(v)}$ back to $DB$ in parallel (line 21).

\subsection{Incremental Execution Plan\label{sec:incremental_execution_plan}}

The incremental execution plan is the core of S-BENU.
The $i$-th incremental execution plan $E_i$ gives out the detailed steps to enumerate incremental matches of $\Delta P_i$ in both $G'_t$ and $G'_{t-1}$.

\subsubsection{Abstract Plan}

$E_i$ implements the three core functions in Algorithm~\ref{alg:backtracking_based_framework}  as the following.

\noun{First(Next)PatternVertexToMatch}:
$E_i$ is bound with a static matching order $O_i$.
Similar to B-BENU, the function returns the first pattern vertex in $O_i$ that is unmapped in the partial incremental match $f$.
S-BENU has an extra constraint on $O_i$.
Suppose the $i$-th edge of $P$ is $e^P_i = (u_{s_i}, u_{t_i})$.
The first two pattern vertices in $O_i$ must be $u_{s_i}$ and $u_{t_i}$.
S-BENU relies on them to determine which snapshot $f$ belongs to.
As $\tau_i(e^P_i)=\text{delta}$ in $\Delta P_i$, $f$ must map $e_i^P$ to a \emph{delta} edge $e^{G'}$.
If $e^{G'}$ is an inserting edge ($op=+$), $e^{G'} \in E(G'_t)$ and $f$ should be an incremental match of $\Delta P_i$ in $G'_t$.
Otherwise, $e^{G'}$ is a deleting edge ($op=-$) that belongs to $G'_{t-1}$.
In this case, $f$ is an incremental match in $G'_{t-1}$.

\noun{RefineCandidates}:
Similar to B-BENU, S-BENU also calculates candidate set $C_{j}$ for $u_{j}$ by intersecting adjacency sets of already mapped vertices.
Since edges of $\Delta P_i$ have three types (either, delta, unaltered) and two directions(in, out), there are six kinds of adjacency sets.
In the snapshot $G'_?$ ($?$ can be $t$ or $t-1$), the incoming adjacency sets of $v$ are
$\Gamma_{G'_?}^{\text{either,in}}(v)=\{v_x|(v_x,v) \in E(G'_?)\}$,
$\Gamma_{G'_?}^{\text{unaltered,in}}(v) = \{v_x| (v_x,v) \in E(G'_?), (v_x, v) \text{ is an unaltered edge}\}$, and
$\Gamma_{G'_?}^{\text{delta,in}}(v) = \{v_x| (v_x,v) \in E(G'_?), (v_x,v) \text{ is a delta edge}\}$.
The definitions of outgoing adjacency sets are similar.
$?$ is determined by $op$.
If $op=+$, $f$ is in $G'_t$ and $?$ is $t$.
Otherwise, $?$ is $t-1$.

We use $\mathcal{N}_i^\text{in}(u_j)$/$\mathcal{N}_i^\text{out}(u_j)$ to denote the set of $u_j$'s incoming/outgoing neighbors in $P$ that are before $u_j$ in $O_i$, i.e. $\mathcal{N}_i^\text{in/out}=\{u_x|u_x \in \Gamma_P^\text{in/out}(u_j), u_x \text{ is before } u_x \text{ in } O_i\}$.
The pattern vertices in $\mathcal{N}_i^\text{in}(u_j)$ and $\mathcal{N}_i^\text{out}(u_j)$ are already mapped in $f$ when we calculate $C_j$.
If we want to map $u_j$ to $v$ in $f$, $v$ should satisfy two conditions:
1) for any mapped incoming neighbor $u_x \in \mathcal{N}_i^\text{in}(u_j)$, $v \in \Gamma_{G'_?}^{\tau_i((u_x,u_j)), \text{out}}(f_x)$;
2) for any mapped outgoing neighbor $u_x \in \mathcal{N}_i^\text{out}(u_j)$, $v \in \Gamma_{G'_?}^{\tau_i((u_j, u_x)), \text{in}}(f_x)$.
Based on the conditions, the candidate set $C_j$ is
\begin{align*}
    C_j = & (\cap_{u_x \in \mathcal{N}_i^\text{in}(u_j)}{\Gamma_{G'_?}^{\tau_i((u_x,u_j)),\text{out}}(f_x)})        \\
          & \cap  (\cap_{u_x \in \mathcal{N}_i^\text{out}(u_j)}{\Gamma_{G'_?}^{\tau_i((u_j,u_x)),\text{in}}(f_x)}).
\end{align*}
Mapping $u_j$ to any vertex outside $C_j$ will violate Definition~\ref{def:incremental_match}.
$C_j$ is further filtered to ensure that data vertices in it do not violate injective conditions and partial order constraints.

\begin{example}
    Take the toy case in Fig.~\ref{fig:demo_case_of_continuous_subgraph_enumeration} with $\Delta P_2$ and $t=2$ as the example.
    Since $e_2^P=(u_1,u_3)$, $O_2$ must be $u_1, u_3,u_2$.
    Suppose $u_2$ is unmapped in the partial match $f=(v_1, ?, v_4)$.
    $f$ maps $e_2^P$ to an inserting edge $(v_1,v_4)$ with $op=+$.
    Thus, we use adjacency sets from $G'_2$ to calculate $C_2$.
    $\mathcal{N}_2^\text{in}(u_2)=\{u_1\}$ and $\mathcal{N}_2^\text{out}(u_2)=\{u_3\}$.
    $C_2 = \Gamma_{G'_2}^{\text{either},\text{out}}(f_1) \cap \Gamma_{G'_2}^{\text{unaltered},\text{in}}(f_3)
        = \Gamma_{G'_2}^{\text{either,out}}(v_1) \cap \Gamma_{G'_2}^{\text{unaltered,in}}(v_4)
        = \{v_4,v_5,v_6,v_7,v_8\} \cap \{v_3,v_6\} = \{v_6\}$.
    $f=(v_1,v_6,v_4)$ is an incremental match of $\Delta P_2$ in $G'_2$.
\end{example}

Integrating the core functions, we can get an abstract incremental execution plan.
Fig.~\ref{fig:abstract_incremental_execution_plan} shows the abstract plan of $\Delta P_2$ in Fig.~\ref{fig:demo_case_of_continuous_subgraph_enumeration}.
Line 1 to line 3 in Fig.~\ref{fig:abstract_incremental_execution_plan} correspond to line 12 to line 16 in Algorithm~\ref{alg:SBENU_framework}.
$op$ of the delta edge is retrieved simultaneously during the mapping of the second pattern vertex $u_3$ in $O_2$.
Line 4 to line 6 in the abstract plan correspond to line 17 to line 20 in Algorithm~\ref{alg:SBENU_framework}.

\begin{figure}[!t]
    \centering
    \subfloat[Abstract\label{fig:abstract_incremental_execution_plan}]{\includegraphics[width=0.4\columnwidth]{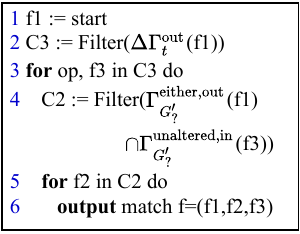}}
    ~
    \subfloat[Raw\label{fig:raw_incremental_execution_plan}]{\includegraphics[width=0.45\columnwidth]{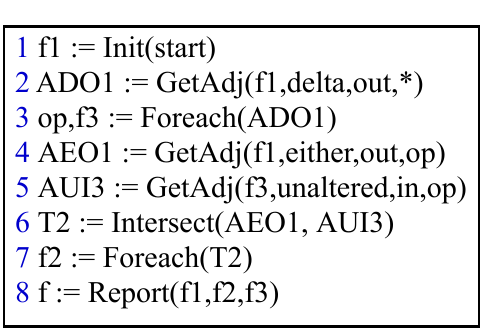}}
    \caption{Incremental execution plan for $\Delta P_{2}$ with $O_{2}:u_{1,}u_{3},u_{2}$.\label{fig:demo_incremental_execution_plan}}
\end{figure}

\subsubsection{Concrete Plan}

We materialize the abstract incremental execution plan into a concrete one with seven kinds of execution instructions listed in Table~\ref{tab:types_of_execution_instructions}.
S-BENU inherits the INI, INT, ENU, and RES instructions from B-BENU, but it modifies the DBQ instruction and adds the Delta-ENU and INS instructions.

The modified DBQ instruction adds three extra parameters as $X := GetAdj (f_i,type,direction,op)$.
It fetches the adjacency set $\Gamma_{?}^{\text{type,direction}}(f_i)$ from the distributed database, where $type$ can be either/delta/unaltered, and $direction$ can be in/out.
If $op=+$ (or $-$), the instruction gets the adjacency set $\Gamma_{G'_{t(\text{or } t-1)}}^\text{type, direction}(f_i)$.
If $type=\text{delta}$ and $op=*$, the instruction gets the delta adjacency set $\Delta\Gamma_t^\text{direction}(f_i)$ of the current time step $t$.
Vertices in the delta adjacency set are attached with flags ($+$ or $-$).
To reference adjacency sets consistently in the plan, the target variable $X$ conforms to a special naming convention.
The name of $X$ consists of three letters and a subscript.
The first letter is always \emph{A}, representing \emph{a}djacency sets.
The second letter can be \emph{E/D/U}, depending on $type$ (Either/Delta/Unaltered).
The third letter can be \emph{I/O}, depending on $direction$ (In/Out).
The subscript is the index of the operand $f_i$ $i$.
For example, $ADO_3 := GetAdj(f_3, delta, out, *)$ fetches $\Delta\Gamma_t^\text{out}(f_3)$ from the database.
Taking the case in Fig.~\ref{fig:demo_case_of_continuous_subgraph_enumeration} with $t=2$ and $f_3=v_1$ as the example, $\Delta\Gamma_2^\text{out}(f_3)=\{(-,v_2), (-,v_3), (+,v_4)\}$.

S-BENU extends the ENU instruction to the Delta-ENU instruction $op, f_i := Foreach(X)$.
The instruction requires $X$ to be a (filtered) delta adjacency set.
The instruction retrieves $op$ and $f_i$ simultaneously while it iterates $X$.

Given $\Delta P_i$, to materialize its abstract plan, S-BENU generates instructions for each pattern vertex following a given matching order $O_i: u_{k_1}, u_{k_2}, \dots, u_{k_n}$.
For the first two pattern vertices in $O_i$, S-BENU generates four instructions consequently to implement lines 12 to 16 of Algorithm~\ref{alg:SBENU_framework}:
\begin{enumerate}
    \item $f_{k_1} := Init(start)$.
    \item $ADO_{k_1} := GetAdj(f_{k_1}, delta, out, *)$.
    \item $C_{k_2} := Intersect(ADO_{k_1}) | [FCs]$.
    \item $op, f_{k_2} := Foreach(C_{k_2})$.
\end{enumerate}
S-BENU then generates DBQ instructions to fetch the \{either, unaltered\} $\times$ \{in, out\} adjacency sets of $u_{k_1}$ and $u_{k_2}$, in case some INT instructions of other vertices may use them.
S-BENU does not need to consider adjacency sets of the type delta, since there is only one delta edge in $\Delta P_i$.

S-BENU only checks the existence of $(f_{k_1}, f_{k_2})$ when it generates $C_{k_2}$.
If there is an edge $(u_{k_2}, u_{k_1}) \in E(P)$, S-BENU checks the existence of $(f_{k_2}, f_{k_1})$ in $G'_{?}$ with an extra INS instruction $InSetTest(f_{k_1}, A?O_{k_2})$, where $?$ depends on $\tau_i((u_{k_2}, u_{k_1}))$.
If $f_{k_1}$ is not in the outgoing adjacency set of $f_{k_2}$, S-BENU backtracks and tries to map $u_{k_2}$ to another data vertex in $C_{k_2} $.

For each of the remaining vertices $u_{k_j}$ in $O_i$, S-BENU generates instructions for it in a way similar to B-BENU.
$T_{k_j} := Intersect(...)$ and $C_{k_j} := Intersect(T_{k_j}) [|FCs]$ calculate the candidate set $C_{k_j}$ with related adjacency sets.
$f_{k_j} := Foreach(C_{k_j})$ maps $u_{k_j}$ to the candidate data vertices in $C_{k_j}$ one by one and enters the next level in the backtracking search.
S-BENU generates DBQ instructions to fetch the \{either, unaltered\} $\times$ \{in, out\} adjacency sets of $u_{k_j}$.

Finally, S-BENU adds a RES instruction to the execution plan.

After generating all instructions, S-BENU first removes useless DBQ instructions whose target variables are not used by any other INT/INS instruction.
Then S-BENU conducts the \emph{uni-operand elimination}.
It removes useless INT instructions without any filtering condition like $T_x := Intersect(X)$ and replaces $T_x$ with $X$ in other instructions.
After uni-operand elimination, S-BENU gets the raw incremental execution plan.
The raw plan is well-defined.
All variables are defined before used.

\begin{example}
    Fig.~\ref{fig:raw_incremental_execution_plan} materializes the abstract plan of $\Delta P_2$ in Fig.~\ref{fig:abstract_incremental_execution_plan}.
    In Fig.~\ref{fig:raw_incremental_execution_plan}, the instruction 1 to 5 are generated for $u_1$ and $u_3$.
    Instructions 6 to 7 are generated for $u_2$.
    Some DBQ/INT instructions are useless, like the ones related to $AEI_1$/$AEO_2$/$C_3$/$C_2$.
    They are removed from the raw plan.
\end{example}

\subsection{Best Execution Plan Generation}

S-BENU optimizes the raw execution plan with the \emph{common subexpression elimination} and the \emph{instruction reordering} optimizations as proposed in Section~\ref{subsec:execution_plan_optimization}.
We do not adopt the triangle caching optimization in S-BENU because edges are typed and directed in incremental pattern graphs, making it hard to re-use the enumerated triangles.

S-BENU modifies Algorithm~\ref{alg:best_execution_plan_generation} to generate the best execution plan for each incremental pattern graph $\Delta P_i$.

Suppose the $i$-th edge of $P$ is $e_i^P=(u_{s_i},u_{t_i})$.
The first two vertices in candidate matching orders are fixed as $u_{s_i}$ and $u_{t_i}$.

The dual condition in the dual pruning technique is stricter.
In $\Delta P_i$, the neighborhood of $u_x$ is \emph{contained} by the neighborhood of $u_y$ if
\begin{enumerate}
    \item For every $e=(u_z, u_x) \in E(P)$ with $u_z \neq u_y$, $e'=(u_z,u_y) \in E(P)$ and $\tau_i(e)=\tau_i(e')$;
    \item For every $e=(u_x,u_z) \in E(P)$ with $u_z \neq u_y$, $e'=(u_y,u_z) \in E(P)$ and $\tau_i(e)=\tau_i(e')$.
\end{enumerate}
$u_x$ and $u_y$ is \emph{syntactic equivalent} if and only if  the neighborhood of $u_x$ is contained by the neighborhood of $u_y$ and vice versa.

When S-BENU estimates the number of matching results of partial pattern graphs, S-BENU treats them as undirected graphs and uses the model in \cite{SEED} to estimate.
Though the model is targeted for undirected graphs, we find it good enough in practice to distinguish good matching orders from bad ones on directed graphs.
Proposing a more accurate estimation model for incremental pattern graphs is one of our future work.

\section{Efficient Implementation}

Fig.~\ref{fig:BENU_architecture} shows the implementation architecture of B-BENU and S-BENU.
During the initialization phase, the data graph $G$ is storesd into a distributed key-value database like HBase.
The update stream $\Delta o_t$ in S-BENU is got from a file or a message queue.
B-BENU/S-BENU generates local search tasks from $G$/$\Delta o_t$ in parallel, respectively.
The tasks are executed in a distributed computing platform like Hadoop and Spark.
Building upon a distributed computing platform and a distributed key-value database, B-BENU and S-BENU naturally support fault tolerance.
We further propose several implementation techniques to increase efficiency.

\begin{figure}[!t]
    \centering
    \includegraphics[width=0.98\columnwidth]{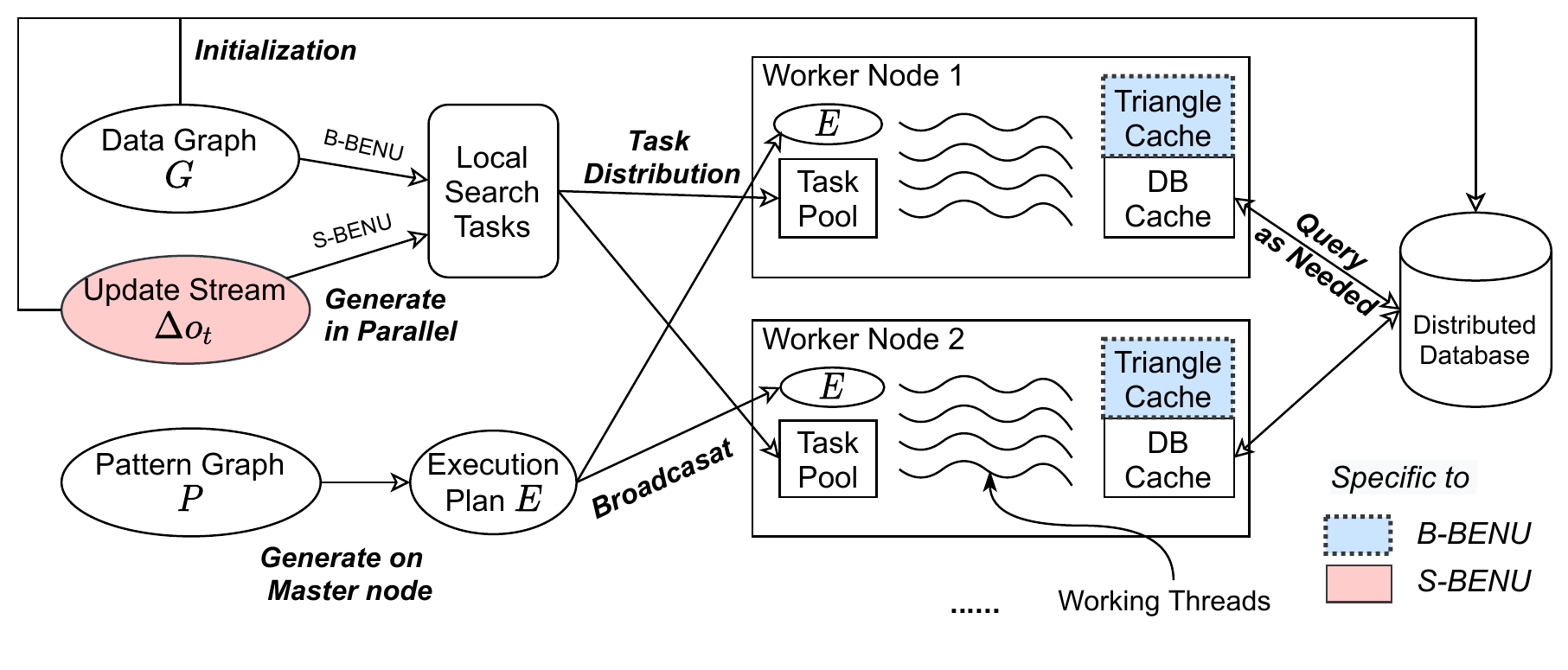}
    \caption{Architecture of B-BENU and S-BENU.\label{fig:BENU_architecture}}
\end{figure}

\subsection{Local Database Cache}

Inside a local search task, a queried adjacency set tends to be queried again soon by the same task.
For example, in the backtracking search trees illustrated in Fig.~\ref{fig:backtracking_search_tree}, the adjacency set of $v_4$ is queried repeatedly in different search branches in the local search task 1.
This kind of locality comes from the backtracking nature of the execution plan.
All vertices that a local search task visits are in a local neighborhood around the starting vertex of the task.
The size of the local neighborhood is bounded by the radius of the pattern graph that is usually small.
When a local search task queries many adjacency sets during the backtracking search, there are some repeated queries, bringing \emph{intra-task locality}.

Some adjacency sets are queried by many different local search tasks.
For example, in Fig.~\ref{fig:backtracking_search_tree}, the adjacency set of $v_4$ is queried in both Task 1 and Task 2.
This kind of \emph{inter-task locality} comes from the overlaps between local neighborhoods visited by different tasks.
Some data vertices, especially high-degree ones, are included in many local neighborhoods.
Their adjacency sets are frequently queried by different tasks.

To take advantage of both kinds of locality, we set up an in-memory \emph{database cache} (DB cache) with configurable capacity in each worker node as shown in Fig.~\ref{fig:BENU_architecture}.
The DB cache stores adjacency sets fetched from the distributed database.
The cache captures the intra-task locality by using advanced replacement policies like LRU.
It captures the inter-task locality by being shared among all working threads.
The cache provides a flexible mechanism to \emph{trade memory for the reduction in communication}.

\paragraph*{Complexity Analysis}
With the cache technique, the communication cost of an execution plan $E$ (i.e. the number of conducted database queries) is also related to the cache capacity $C$.
To analyze its upper bound, we first define several concepts.
The $r$-hop neighborhood ($r \geq 0$) of a vertex $v$ in a graph $g$ is defined as $\gamma_g^r(v)$=\{$w \in V(g) |$ $w$  is at most $r$ hops away from $v$\}.
The size of $\gamma_g^r(v)$ is $S_g^r(v)=\sum_{w \in \gamma_g^r(v)}{d_g(w)}$.
For a data graph $G$, $H_G^r = \max_{v \in V(G)}{S_G^r(v)}$ is the size of the largest $r$-hop neighborhood in $G$.
As for the cache, we assume there is $w$ working threads per machine and there exists $R$ that  $C \geq w H_G^{R}$, i.e. the cache can store the $R$-hop neighborhood of any data vertex for every working thread.
As for the execution plan, we assume its matching order is $O: u_{k_1}, u_{k_2}, \dots, u_{k_n}$.
The first $\alpha$ vertices in $O$ can cover every edge in $P$.
Thus, matching $u_{k_{\alpha+1}},\dots,u_{k_n}$ does not need to query any adjacency set.
Among the first $\alpha$ vertices, there must exist $r'$ ($0\leq r' \leq R$)  and $\beta$ ($1 \leq \beta \leq \alpha)$ that the $r'$-hop neighborhood of $u_{k_\beta}$ $\gamma_P^{r'}(u_{k_{\beta}})$ contains $\{u_{k_{\beta}}, u_{k_{\beta+1}}, \dots, u_{k_{\alpha}}\}$.
Then, we can split $O$ into three sections: $O: u_{k_1},\dots,u_{k_\beta},\dots,u_{k_\alpha},\dots,u_{k_n}$.

The total communication cost of matching $f_{k_1} $ to $ f_{k_{\beta}}$ is $O(\sum_{i=1}^{\beta}|{R}_G(P_i)|)$ where $|{R}_G(P_i)|$ is the number of matches of the partial pattern graph $P_i$ in $G$.
If $f_{k_1}$ to $f_{k_\beta}$ is fixed in $f$, the number of conducted database queries during matching $f_{k_{\beta+1}}$ to $f_{k_{\alpha}}$ is at most $\max_{v \in V(G)}|\gamma_G^{r'}(v)|$, because the cache can store all the adjacency sets in $\gamma_G^{r'}(f_{k_\beta})$.
The total communication cost of matching $f_{k_{\beta+1}}$ to $f_{k_{\alpha}}$ for all the partial matches is  ${O}(|{R}_G(P_\beta)|\max_{v \in V(G)}{|\gamma_G^{r'}(v)|})$.
Matching the remaining vertices $f_{k_{\alpha+1}}$ to $f_{k_n}$ does not query any adjacency set.
Therefore, the communication upper bound is ${O}(\sum_{i=1}^{\beta}|{R}_G(P_i)| + |{R}_G(P_\beta)|\max_{v \in V(G)}{|\gamma_G^{r'}(v)|})$.

If ${C}$ is bigger than the data graph, a tighter upper bound is ${O}(p|V(G)|)$ where $p$ is the number of worker machines.
In this case, the complexity is independent of the pattern graph.


\subsection{Data Graph Storage\label{sec:data_graph_storage}}

B-BENU and S-BENU store adjacency sets of the data graph with key-value pairs.
For B-BENU, keys are vertex IDs and values are their adjacency sets.

For S-BENU, we notice that it only uses the latest two snapshots $G'_t$ and $G'_{t-1}$ during the execution.
Thus, we only maintain two snapshots  in the database.
For a vertex $v$, the key is its ID, and the value is a quad $(?,?,?,?)$.
The value has two forms.
The first form is $(\Gamma_{G'_t}^\text{in}(v), \Gamma_{G'_t}^\text{out}(v), \emptyset, \emptyset)$.
It is used in Line 1 and Line 21 of Algorithm~\ref{alg:SBENU_framework} to store the current snapshot $G'_t$.
The second form is $(\Gamma_{G'_{t-1}}^\text{in}(v), \Gamma_{G'_{t-1}}^\text{out}(v), \Delta\Gamma_t^\text{in}(v), \Delta\Gamma_t^\text{out}(v))$.
It is used in Line 9 of Algorithm~\ref{alg:SBENU_framework} to store the delta adjacency sets along with the previous snapshot $G'_{t-1}$.
With this form, we can retrieve the adjacency sets of both $G'_{t-1}$ and $G'_t$.
The two-form design guarantees that we only need to update the vertices appearing in $\Delta o_t$ in Line 9 and Line 21.
For a vertex $v$ not appearing in $\Delta o_t$, its value is $(\Gamma_{G'_{t-1}}^\text{in}(v), \Gamma_{G'_{t-1}}^\text{out}(v), \emptyset, \emptyset)$ before Line 9.
Since $\Delta\Gamma_t^\text{in/out}(v)=\emptyset$ and $\Gamma_{G'_{t-1}}^\text{in/out}(v)=\Gamma_{G'_t}^\text{in/out}(v)$, we do not need to modify its value in either Line 9 or Line 21.
As $|\Delta o_t| \ll |E(G'_t)|$, only a fraction of vertices appear in $\Delta o_t$.
The two-form design cuts much costs of updating the database.

In the local database cache, B-BENU uses the same key-value format as in the database, but S-BENU uses a different format.
For S-BENU, keys are still vertex IDs, but values are $(T, \Gamma_{G'_{T-1}}^\text{in}(v), \Gamma_{G'_{T-1}}^\text{out}(v),\Gamma_{G'_T}^\text{in}(v), \Gamma_{G'_T}^\text{out}(v))$.
$T$ is the time step of the key-value pair.
Vertices in all adjacency sets are attached with flags, indicating whether the corresponding edge is a \emph{delta} edge.
When a DBQ instruction $X := GetAdj(f_i, type, direction,op)$ is conducted, the cache hits if the key $f_i$ is contained in the cache and $T$ is equivalent to the current time step $t$.
If the cache hits, S-BENU retrieves the corresponding adjacency set based on \textit{op} and \textit{direction}, and S-BENU filters it with \textit{type}.
If the cache misses, S-BENU queries the distributed database for the quad $(\Gamma_{G'_{t-1}}^\text{in}(v), \Gamma_{G'_{t-1}}^\text{out}(v), \Delta\Gamma_t^\text{in}(v), \Delta\Gamma_t^\text{out}(v))$ and constructs the value part from it.
The format in the cache trades space for time, because the cache is much more frequently accessed than the database.
If we use the same format as in the database, we have to merge $\Delta\Gamma_t^\text{in/out}(v)$ with $\Gamma_{G'_{t-1}}^\text{in/out}(v)$ to get $\Gamma_{G'_t}^\text{?,in/out}(v)$.
Merging two adjacency sets is more expensive than filtering an adjacency set with flags.

\subsection{Task Splitting}

The computation and communication costs of a local search task are positively correlated with the degree of the starting vertex.
Unfortunately, real-world graphs often follow the power-law degree distribution, causing workloads of local search tasks skewed.
We propose the \emph{task splitting} technique to split heavy tasks into smaller subtasks to balance the workloads.
Suppose $u_{k_1}$ and $u_{k_2}$ are the first and second pattern vertex in the matching order and $C_{k_2}$ is the candidate set of $u_{k_2}$.

In B-BENU, $u_{k_1}$ is mapped to the starting vertex $v$ of the local search task.
If $u_{k_1}$ and $u_{k_2}$ are adjacent in $P$, $C_{k_2}$ is the filtered adjacency set of the starting vertex.
If the degree $d_G(v)$ is bigger than a given threshold $\theta$, we split $\Gamma(v)$ into $\lceil\frac{|\Gamma(v)|}{\theta}\rceil$ non-overlapping equal-sized subsets.
We generate a subtask for each subset and use the subset as $C_{k_2}$ in the subtask.
If $u_{k_1}$ and $u_{k_2}$ are not adjacent, $C_{k_2}$ is the filtered $V(G)$, and we generate $\lceil\frac{|V(G)|}{\theta}\rceil$ subtasks in this case.

In S-BENU, $u_{k_1}$ is also mapped to the starting vertex $v$ of the local search task, but $C_{k_2}$ is the filtered $\Delta\Gamma_t^\text{out}(v)$.
If $|\Delta\Gamma_t^\text{out}(v)| \geq \theta$, we split $\Delta\Gamma_t^\text{out}(v)$  into $\lceil\frac{|\Delta\Gamma_t^\text{out}(v)|}{\theta}\rceil$ non-overlapping equal-sized subsets and generate a subtask for each subset.

\subsection{Implementation Sketch}

We implement B-BENU with a Hadoop MapReduce job.
The input to the job is the data graph stored as key-value pairs in HDFS.
In the map phase, B-BENU stores the data graph into HBase in parallel.
B-BENU generates and emits local search (sub)tasks simultaneously.
Keys are tasks and values are null.
Hadoop shuffles the tasks to reducers.
B-BENU runs a reducer on each worker machine.
In each reducer, B-BENU uses a thread pool to execute the received tasks concurrently.

Since S-BENU needs to process batch updates $\Delta o_t$ iteratively, we implement S-BENU with a long-running Spark job.
S-BENU loads the initial data graph from HDFS as a RDD and stores it into HBase in parallel by conducting \texttt{foreachPartition} operator.
In each time step, S-BENU loads $\Delta o_t$ as a RDD from an external data source like HDFS or a message queue.
S-BENU converts the $\Delta o_t$ RDD into the delta adjacency set RDD with the \texttt{flatMap} and \texttt{groupByKey} operators.
The delta adjacency set RDD is used to update HBase in parallel (Line 9 of Algorithm~\ref{alg:SBENU_framework}).
It is further converted into the local search task RDD.
Conducting \texttt{mapPartition} operator on it, S-BENU executes local search tasks in parallel with all executors.
The delta adjacency set RDD is used again to update HBase (Line 21 of Algorithm~\ref{alg:SBENU_framework}).

\section{Experiments}

We introduce the experimental setup in Section~\ref{sec:experimental_setup} and then evaluate the effectiveness of the proposed optimization techniques in Section~\ref{sec:evaluation_of_optimization_techniques}.
The performance of B-BENU and S-BENU is compared with the state-of-the-art in Section~\ref{sec:comparing_benu_with_state_of_the_art} and Section~\ref{sec:comparing_sbenu_with_state_of_the_art}.
We finally evaluate the machine scalability of B-BENU and S-BENU in Section~\ref{sec:scalability}.

\subsection{Experimental Setup\label{sec:experimental_setup}}

\textbf{Environment}. All the experiments were conducted in a cluster with 1 master + 16 workers connected via 1Gbps Ethernet.
Each machine was equipped with 12 cores, 50 Gbytes memory, and 2 Tbytes RAID0 HDD storage.
All Java programs were compiled with JDK 1.8 and run under CentOS 7.0.
We adopted Hadoop 2.7.2.
The distributed database was HBase 1.2.6.

\textbf{B-BENU}. B-BENU was implemented with Hadoop MapReduce.
B-BENU generated local search tasks in the map phase and executed the tasks in the reduce phase.
We used 16 reducers (one reducer per machine).
Each reducer ran the local search tasks with 24 working threads.
We allocated 40 Gbytes memory to each reducer (30 Gbytes for local database cache and 10 Gbytes for task execution).
The degree threshold $\theta$ of task splitting was 500.
Without otherwise mentioned, we used compressed execution plans in the experiments related to B-BENU.

\textbf{S-BENU}. S-BENU was implemented with Spark 2.2.0.
All phases were implemented with RDDs.
We used 16 executors (one executor per machine).
Each executor used 24 cores (i.e. working threads) to run tasks.
We allocated 40 Gbytes memory to each executor (30 Gbytes for local database cache and 10 Gbytes for task execution).
We turned off the task split technique by default.
Without otherwise mentioned, the execution time of S-BENU was the wall-clock time spent on the continuous enumeration phase of S-BENU, as the initialization phase was conducted once.

\textbf{Data Graphs}. For B-BENU, we used five real-world static data graphs in Table~\ref{tab:statistics_of_typical_pattern_graphs}.
They were also used by the previous work \cite{SEED} \cite{CBF}.
For S-BENU, we used a real-world dynamic data graph \emph{Wikipedia} (denoted \emph{wk}) \cite{mislove-2009-socialnetworksthesis} with 1.9M vertices and 40.0M edges.
We also used the LDBC-SNB Data Generator \cite{LDBC-SNB-Data-Generator} provided by the LDBC Graphalytics Benchmark \cite{LDBC_Graphalytics} to generate a synthetic dynamic social network.
The scale factor of the generator was graphalytics.1000.
We used the person-knows-person part as the dynamic data graph (denoted as \emph{ld}) with 11M vertices and 0.93B edges.
We generated batch updates of dynamic graphs based on the creation time of edges.

\begin{figure}[!t]
    \centering
    {\includegraphics[height=1.2cm]{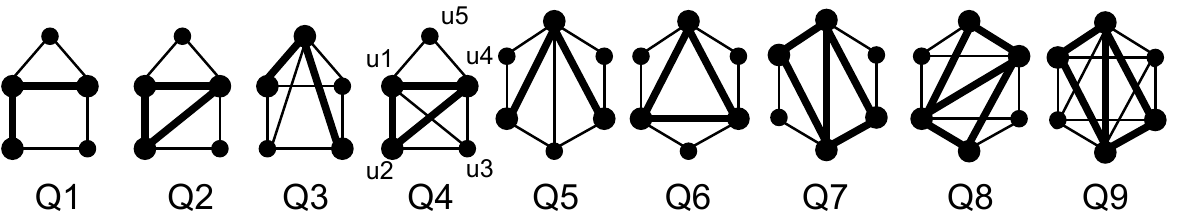}}  {\includegraphics[height=1.4cm]{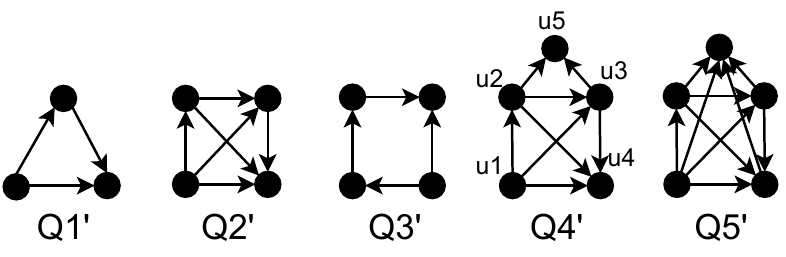}}
    \caption{Pattern graphs.\label{fig:pattern_graphs}}
\end{figure}

\textbf{Pattern Graphs} (Fig.~\ref{fig:pattern_graphs}). For B-BENU, we used Q1 to Q9.
Q1 to Q5 come from \cite{CBF}.
To evaluate the performance on tough tasks, we further used Q6 to Q9.
The vertex covers used in the VCBC compression are illustrated with big dots.
For S-BENU, we used Q1' to Q5' coming from \cite{BiGJoin}.

\subsection{Evaluation of Optimization Techniques\label{sec:evaluation_of_optimization_techniques}}

We evaluated the effectiveness of the proposed techniques in both B-BENU and S-BENU.

\textbf{Exp-1: Best Execution Plan Search}.
We evaluated the efficiency of Algorithm~\ref{alg:best_execution_plan_generation} with random connected graphs.
We generated 1000 Erdos-Renyi random pattern graphs for each number of vertices $n$.
For every pattern graph, we measured the proportion (Prop.) of matching orders that pass the two pruning techniques and the wall-clock execution time of generating the best (incremental) execution plan(s).
Table~\ref{tab:exp_best_execution_plan_search} reports the average results for every $n$.
The pruning techniques were effective.
The time of generating the best execution plans was very short compared to enumeration.

\begin{table}[!t]
    \centering
    \caption{Efficiency of Best Execution Plan Generation\label{tab:exp_best_execution_plan_search}}
    \begin{tabular}{@{}ccccccccc@{}}
        \toprule
                                & \textbf{n} & \textbf{4} & \textbf{5} & \textbf{6} & \textbf{7} & \textbf{8} & \textbf{9} & \textbf{10} \\ \midrule
        \multirow{2}{*}{B-BENU} & Prop. (\%) & 8.3        & 5.0        & 2.4        & 0.7        & 0.2        & 0.1        & 0.01        \\
                                & Time (s)   & 0.2        & 0.2        & 0.3        & 0.4        & 0.7        & 2.7        & 27.1        \\ \midrule
        \multirow{2}{*}{S-BENU} & Prop. (\%) & 75.6       & 38.8       & 17.0       & 6.5        & 2.2        & 0.6        & 0.1         \\
                                & Time (s)   & 0.3        & 0.3        & 0.4        & 0.6        & 1.2        & 3.6        & 22.5        \\ \midrule
    \end{tabular}
\end{table}

\textbf{Exp-2: Execution Plan Optimizations}.
We evaluated the effectiveness of the execution plan optimizations proposed in Section~\ref{subsec:execution_plan_optimization} on B-BENU and S-BENU in Fig.~\ref{fig:effects_of_optimization_techniques}.
The X-axis represents execution plans optimized from the raw plan with more optimizations.
As the compression would negate some optimization techniques, we only used the compressed execution plan for Q5.
For the \emph{wk} dataset, we used 75\% of it as the initial graph and generated 5 time steps with 1M delta edges per time step.
Optimization 1 was effective for Q4 and Q5' where it eliminated common subexpressions.
Optimization 2 reduced the execution time in all cases by up to an order of magnitude.
It promoted INT instructions to outer loops in all of them.
Optimization 3 was effective for Q2 and Q5 where triangles were repeatedly enumerated by two INT instructions.

\begin{figure}[!t]
    \centering
    \subfloat[B-BENU\label{fig:effects_of_optimization_techniques_on_BENU}]{\includegraphics[height=5.3em]{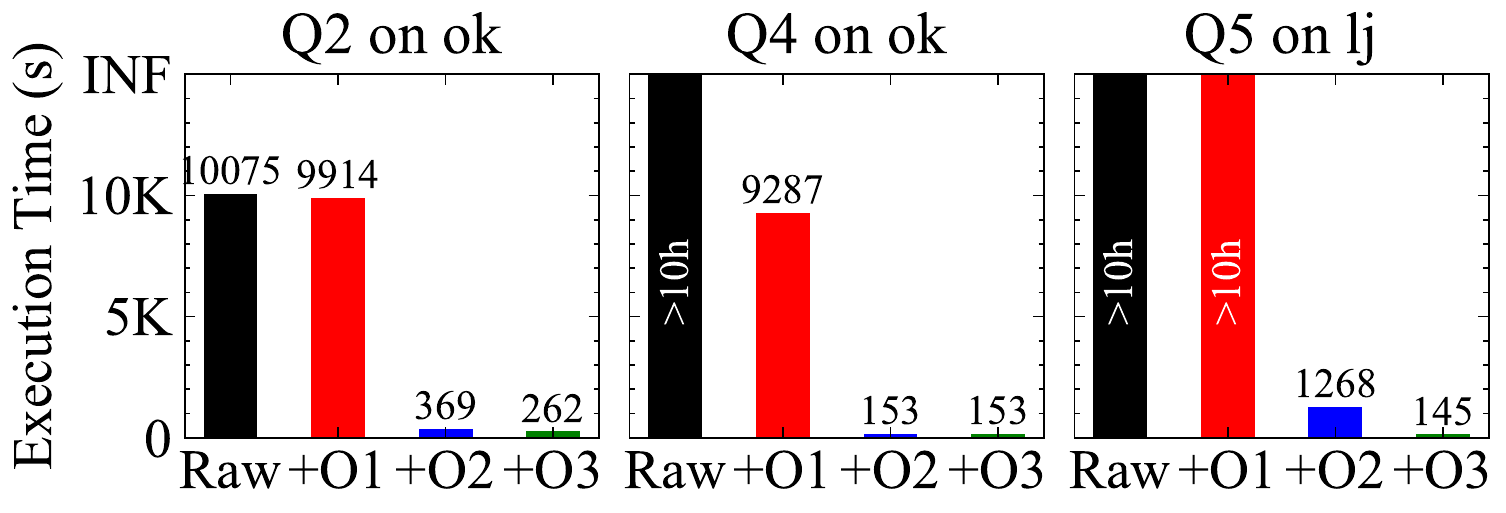}}
    \hfil
    \subfloat[S-BENU\label{fig:effects_of_optimization_techniques_on_SBENU}]{\includegraphics[height=5.3em]{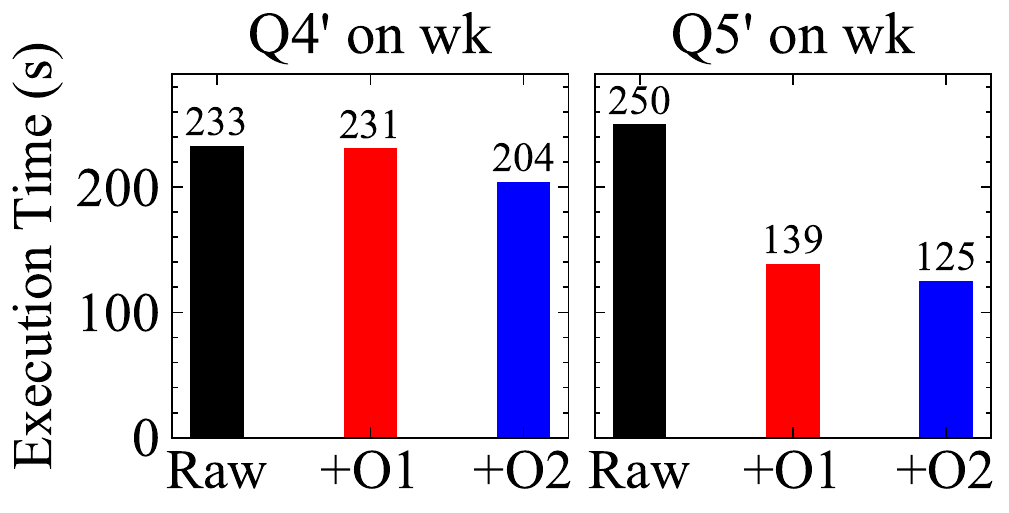}}
    \caption{Effects of optimization techniques.\label{fig:effects_of_optimization_techniques}}
\end{figure}

\textbf{Exp-3: Local Database Cache}.
We evaluated the effects of the capacity of the local database cache in Fig.~\ref{fig:effects_of_local_database_cache_capacity}.
The cache capacity is relative to the data graph (B-BENU) or the initial data graph (S-BENU).
The network communication cost and the execution time are relative to the corresponding cases with the 10\% relative cache capacity.
We evaluated B-BENU with Q4 and Q5 on \emph{ok}.
We evaluated S-BENU with Q3', and Q4' on \emph{ld} with the 80\% initial graph and 10 time steps (1M delta edges per time step).
Most pattern graphs (Q4, Q5, and Q3') were sensitive to the cache capacity.
The average cache hit rates increased quickly as the cache capacity grew.
Correspondingly, the communication cost and the execution time decreased quickly.
The DB cache was an effective technique to improve the efficiency in most cases.

\begin{figure}[!t]
    \centering
    \subfloat[Cache Hit Rate]{\includegraphics[width=0.32\columnwidth]{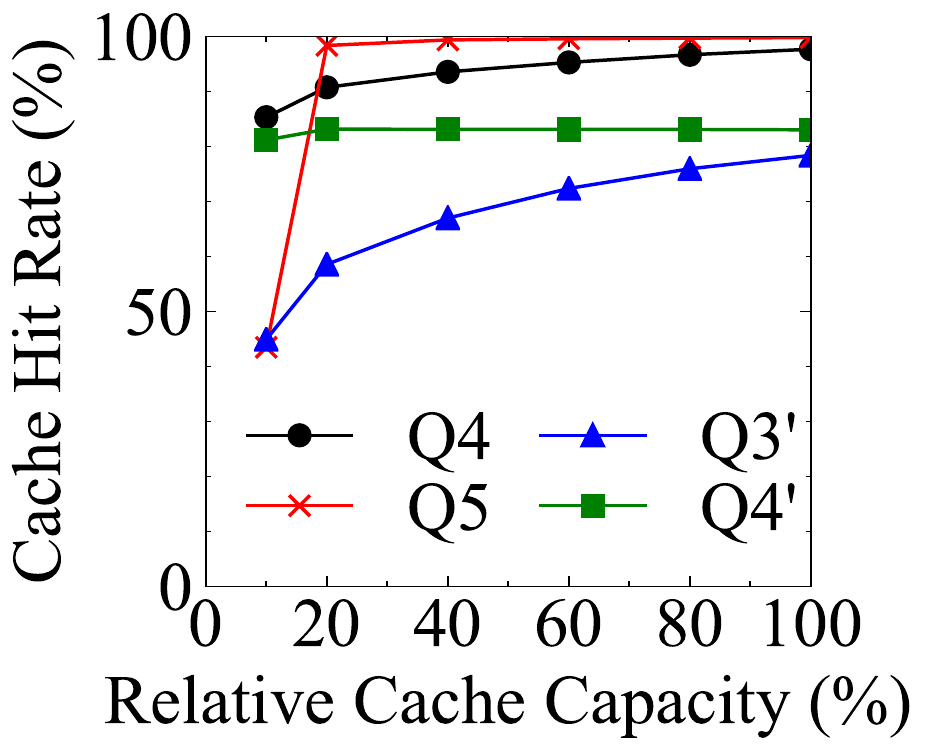}}
    \subfloat[Communication]{\includegraphics[width=0.32\columnwidth]{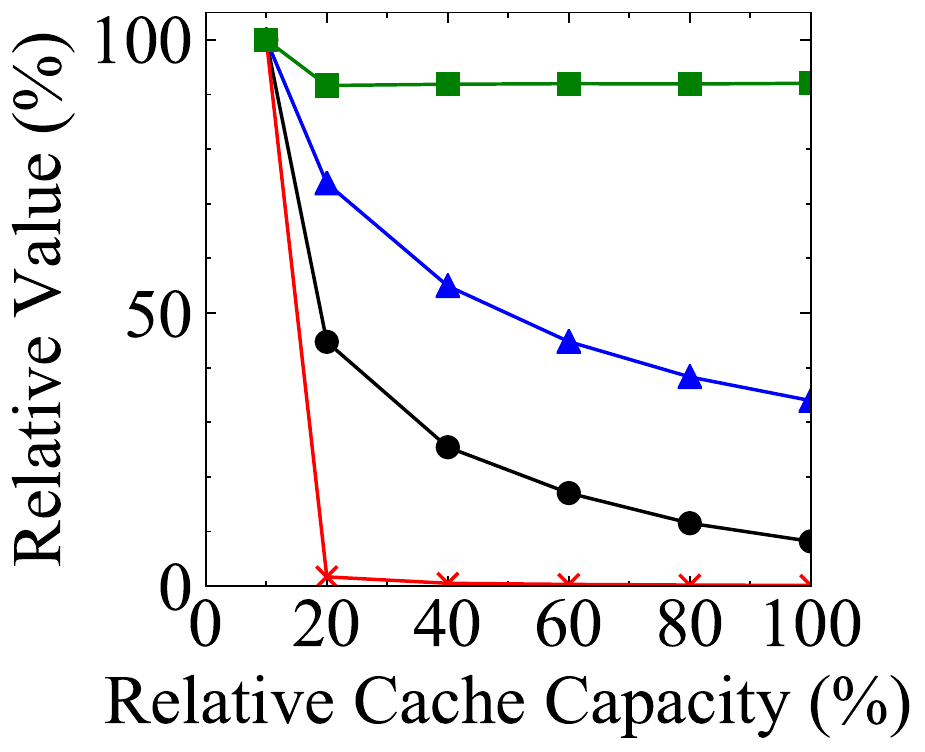}}
    \subfloat[Execution Time]{\includegraphics[width=0.32\columnwidth]{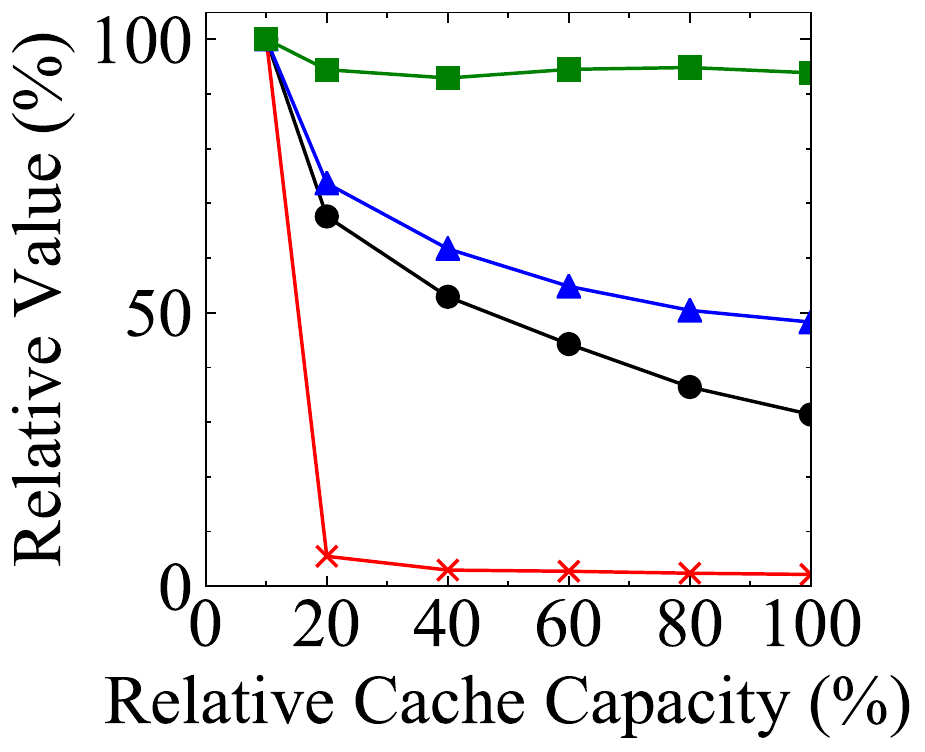}}
    \caption{Effects of the local database cache capacity.\label{fig:effects_of_local_database_cache_capacity}}
\end{figure}

\textbf{Exp-4: Task Splitting}.
To evaluate the effects of task splitting, we ran B-BENU and S-BENU with different split thresholds $\theta$.
For B-BENU, we measured the execution time of 16 reducers.
For S-BENU, we measured the execution time of 16 executors spent on executing local search tasks (Line 11 to 21 in Algorithm~\ref{alg:SBENU_framework}).
Fig.~\ref{fig:task_splitting_evaluation} shows the distribution of their execution time with different thresholds.
We ran the \emph{ld} dataset with 80\% of it as the initial graph and 1 time step of 10M delta edges.
The task splitting technique was much more effective on B-BENU than on S-BENU.
The technique made workloads more balanced among executors and working threads.
However, if the threshold was too low, the execution time increased due to more overheads.

\begin{figure}[!t]
    \centering
    \subfloat[B-BENU, Q5, \emph{ok}]{\includegraphics[width=0.4\columnwidth]{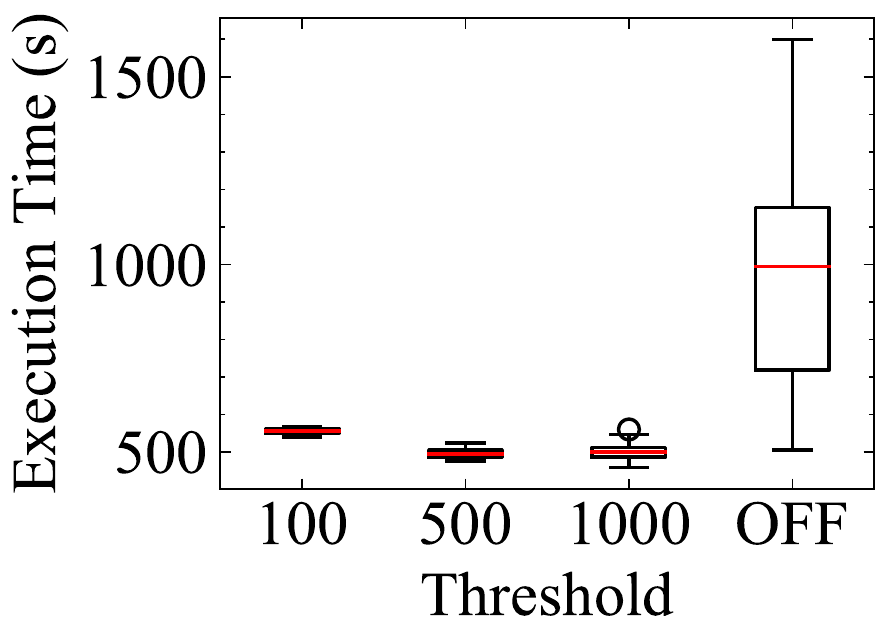}}
    \hfil
    \subfloat[S-BENU, Q3', \emph{ld}]{\includegraphics[width=0.4\columnwidth]{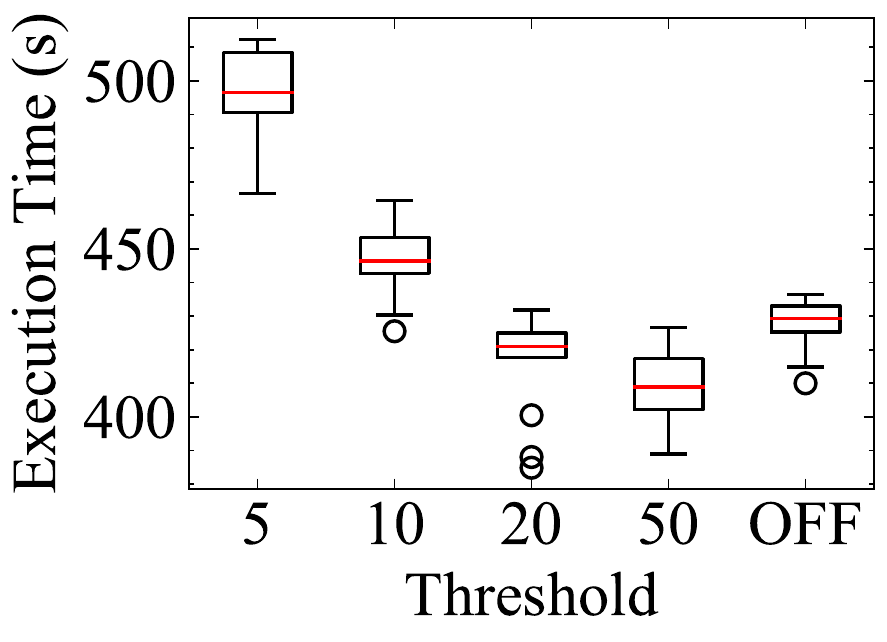}}
    \caption{Effects of the task splitting technique.\label{fig:task_splitting_evaluation}}
\end{figure}

\subsection{Comparing B-BENU with the State-of-the-art\label{sec:comparing_benu_with_state_of_the_art}}

We compared B-BENU with the state-of-the-art MapReduce-based method CBF \cite{CBF} and the worst-case-optimal-join-based BiGJoin \cite{BiGJoin}.
We turned on all the compression and optimization techniques provided with the algorithms.
We reported the wall-clock time spent on pure enumeration as the execution time, not including the time spent on join/execution plan generation and output.

\textbf{Exp-5: Comparison with CBF}.
CBF is the state-of-the-art algorithm in MapReduce.
We ran CBF with 12 mappers/reducers per worker machine and allocated 4 Gbytes memory to each mapper/reducer.
The results were reported in Table \ref{tab:perf-compare-cbf}.
Nearly in all the cases except Q5 on \emph{fs}, B-BENU ran quicker than CBF with acceptable communication costs.
In several cases like Q2 on \emph{ok}/\emph{uk}, Q4 on \emph{ok} and Q6 on \emph{lj}/\emph{ok}/\emph{uk}, B-BENU was up to 10$\times$ quicker than CBF.
The hard test cases Q7 to Q9 shared the same core structure, i.e. the chordal square (shown with bold edges in Fig.~\ref{fig:pattern_graphs}).
The core structure had more than 2 billion matches in all data graphs (Table~\ref{tab:statistics_of_typical_pattern_graphs}).
CBF had to shuffle the clique index and the matching results of the core structure during the preparation of partition files for hash-assembly.
Shuffling many key-value pairs was costly and made Hadoop throw the shuffle error exception in some cases.
B-BENU ran smoothly in those cases.
For the cases of Q7/Q8/Q9 on \emph{uk}, the core structure had 2.7 trillion matches.
Neither B-BENU nor CBF could finish in 10 hours.

\begin{table*}[!t]
    \centering
    \caption{Performance Comparison with CBF}
    \label{tab:perf-compare-cbf}
    \scriptsize
    \begin{tabular}{ccccccccccc}
        \toprule
        \multirow{2}{*}{\textbf{Dataset}} & \multicolumn{2}{c}{\textbf{Q1}} & \multicolumn{2}{c}{\textbf{Q2}} & \multicolumn{2}{c}{\textbf{Q3}} & \multicolumn{2}{c}{\textbf{Q4}} & \multicolumn{2}{c}{\textbf{Q5}}                                                                                                            \\ 
                                          & CBF                             & B-BENU                          & CBF                             & B-BENU                          & CBF                             & B-BENU             & CBF                 & B-BENU             & CBF                 & B-BENU             \\ \midrule
        as                                & 270/3G                          & \textbf{119/6G}                 & 167/26G                         & \textbf{69/6G}                  & 239/3G                          & \textbf{92/7G}     & 158/26G             & \textbf{68/5G}     & 356/1G              & \textbf{131/6G}    \\ 
        lj                                & 396/11G                         & \textbf{183/16G}                & 662/210G                        & \textbf{102/16G}                & 348/11G                         & \textbf{138/17G}   & 656/207G            & \textbf{117/15G}   & 190/5G              & \textbf{128/14G}   \\ 
        ok                                & 2942/29G                        & \textbf{859/30G}                & 1465/512G                       & \textbf{139/28G}                & 1446/28G                        & \textbf{425/29G}   & 1507/508G           & \textbf{139/26G}   & 1024/14G            & \textbf{595/29G}   \\ 
        uk                                & \textgreater{}7200s             & \textbf{2131/90G}               & \textgreater{}7200s             & \textbf{412/81G}                & \textgreater{}7200s             & \textbf{1221/93G}  & \textgreater{}7200s & \textbf{930/85G}   & \textgreater{}7200s & \textbf{3549/103G} \\ 
        fs                                & \textgreater{}41555s            & \textbf{16622/416G}             & CRASH                           & \textbf{5008/472G}              & \textgreater{}10547s            & \textbf{4219/391G} & \textgreater{}7200s & \textbf{1543/371G} & \textbf{2088/137G}  & 4484/392G          \\ \bottomrule
    \end{tabular}

    \vspace{0.5em}

    \begin{tabular}{ccccccccc}
        \toprule
        \multirow{2}{*}{\textbf{Dataset}} & \multicolumn{2}{c}{\textbf{Q6}} & \multicolumn{2}{c}{\textbf{Q7}} & \multicolumn{2}{c}{\textbf{Q8}} & \multicolumn{2}{c}{\textbf{Q9}}                                                                                        \\ 
                                          & CBF                             & B-BENU                          & CBF                             & B-BENU                          & CBF                  & B-BENU             & CBF                 & B-BENU             \\ \midrule
        as                                & 288/4G                          & \textbf{68/4G}                  & CRASH                           & \textbf{1188/6G}                & CRASH                & \textbf{7632/7G}   & CRASH               & \textbf{315/5G}    \\ 
        lj                                & 1000/20G                        & \textbf{108/12G}                & CRASH                           & \textbf{9318/20G}               & \textgreater{}16710s & \textbf{6684/17G}  & \textgreater{}7200s & \textbf{2111/16G}  \\ 
        ok                                & 2556/48G                        & \textbf{143/22G}                & CRASH                           & \textbf{2327/31G}               & \textgreater{}7200s  & \textbf{2974/29G}  & \textgreater{}7200s & \textbf{712/28G}   \\ 
        uk                                & 16488/131G                      & \textbf{1090/39G}               & \textgreater{}10h               & \textgreater{}10h               & FAIL$^*$             & \textgreater{}10h  & FAIL$^*$            & \textgreater{}10h  \\ 
        fs                                & 18472/691G                      & \textbf{1349/314G}              & \textgreater{}11272s            & \textbf{4509/424G}              & \textgreater{}10282s & \textbf{4113/362G} & \textgreater{}7200s & \textbf{2464/350G} \\ \bottomrule
    \end{tabular}

    \begin{flushleft}
        \scriptsize
        $^+$ In each cell, the first number is the wall-clock execution time (unit: second), and the second number is the cumulative communication cost (unit: byte). \\ $^\times$ The quickest algorithm in each case is marked with bold font. $^*$ CBF failed when building the clique index of four vertices.
    \end{flushleft}

\end{table*}

\textbf{Exp-6: Comparison with BiGJoin}
BiGJoin \cite{BiGJoin} is the state-of-the-art worst-case-optimal algorithm.
We compared B-BENU with it on the pattern graphs that BiGJoin had specially optimized.
BiGJoin (\url{https://github.com/frankmcsherry/dataflow-join/}) was written with the Timely dataflow system in Rust.
In BiGJoin, the batch size was 100000, and each worker machine was deployed with 12 working processes (one process per core).
We compared B-BENU with both the shared-memory version (BiGJoin(S)) and the distributed version (BiGJoin(D)) of BiGJoin in Table \ref{tab:comparison-with-bigjoin}.
Since BiGJoin used a different communication mechanism from MapReduce, we did not report the communication costs.
On \emph{ok}, B-BENU ran quicker than both of BiGJoin(D) and BiGJoin(S) with complex pattern graphs.
On \emph{fs}, BiGJoin(S) failed due to out of memory exception, while B-BENU ran quicker than BiGJoin(D) in all cases.

\begin{table}[!t]
    \centering
    \caption{Execution Time Comparison with BiGJoin}
    \label{tab:comparison-with-bigjoin}
    \scriptsize
    \begin{tabular}{p{1em}cccccc}
        \toprule
        \textbf{\textit{G}} & \textbf{Algorithm} & \textbf{Triangle} & \textbf{Clique4}   & \textbf{Clique5}   & \textbf{Q4}        & \textbf{Q5}        \\ \midrule
        \multirow{3}{*}{ok} & BiGJoin(S)         & \textbf{53}       & 111                & 651                & 608                & OOM                \\ 
                            & BiGJoin(D)         & 130               & OOM                & OOM                & \textgreater{}7200 & OOM                \\ 
                            & B-BENU             & 93                & \textbf{99}        & \textbf{129}       & \textbf{139}       & \textbf{595}       \\ \midrule
        \multirow{3}{*}{fs} & BiGJoin(S)         & OOM               & OOM                & OOM                & OOM                & OOM                \\ 
                            & BiGJoin(D)         & 1749              & \textgreater{}7200 & \textgreater{}7200 & \textgreater{}7200 & \textgreater{}7200 \\ 
                            & B-BENU             & \textbf{1229}     & \textbf{1239}      & \textbf{1251}      & \textbf{1543}      & \textbf{4484}      \\ \bottomrule
    \end{tabular}
    \begin{flushleft}
        \scriptsize
        Unit: second.
    \end{flushleft}
\end{table}

\subsection{Comparing S-BENU with the State-of-the-art\label{sec:comparing_sbenu_with_state_of_the_art}}

\begin{figure}[!t]
    \centering
    \includegraphics[width=\columnwidth]{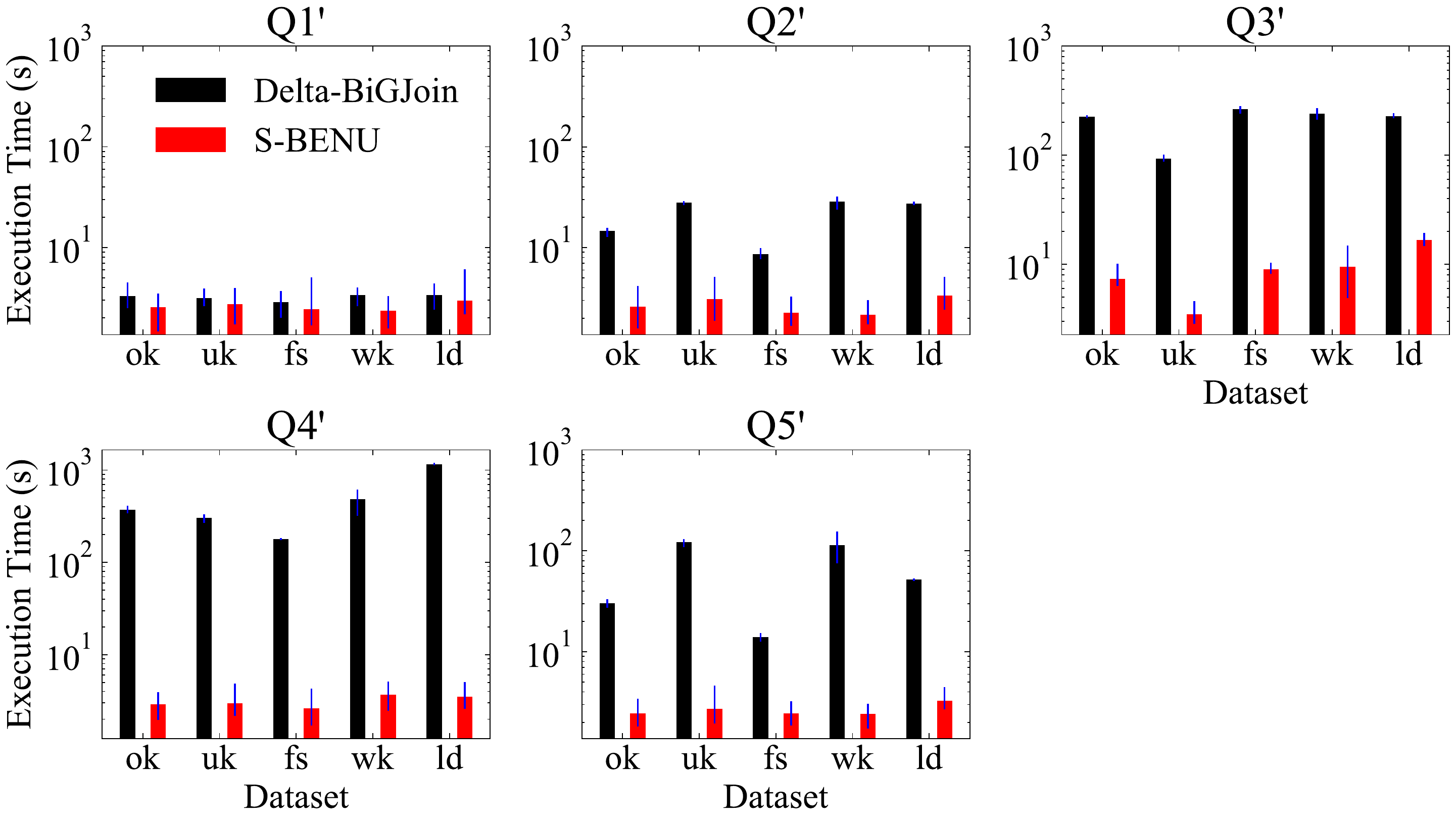}
    \caption{Performance comparison of Delta-BiGJoin and S-BENU.}
    \label{fig:performane_comparison_of_Delta_BiGJoin_and_SBENU}
\end{figure}

We compared the performance of S-BENU with the state-of-the-art distributed continuous subgraph enumeration algorithm Delta-BiGJoin \cite{BiGJoin}.
For Delta-BiGJoin, we deployed 12 worker processes (one process per core) on each worker machine.
For all datasets, we used 20\% initial graph and generated 10 time steps of 20K delta edges per time step.
For the \emph{ok}, \emph{uk}, and \emph{fs} datasets that do not have timestamps attached on edges, we picked edges randomly from the remaining 80\% graph to generate update operations.
We measured the wall-clock execution time of S-BENU and Delta-BiGJoin spent on each time step.
The execution time did not include outputing because it was independent of subgraph enumeration.
Fig.~\ref{fig:performane_comparison_of_Delta_BiGJoin_and_SBENU} shows the average time of 10 time steps with error bars indicating the maximal and the minimal.
S-BENU outperformed Delta-BiGJoin in all cases, by up to two orders of magnitude.

Compared to S-BENU, Delta-BiGJoin suffered from high communication costs, which is caused by shuffling intermediate matching results.
Taking enumerating Q4' on \emph{ld} as the example, Delta-BiGJoin shuffled the matching results of the partial pattern graphs $u_2$-$u_5$-$u_1$ and $u_3$-$u_5$-$u_1$-$u_4$.
They had 35M and 197M matches respectively, causing high communication costs.

\subsection{Machine Scalability\label{sec:scalability}}

We tested the machine scalability of B-BENU and S-BENU by varying numbers of worker nodes in the cluster.
For S-BENU, we used \emph{ld} with 80\% initial graph and 1 time step.

\begin{figure}[!t]
    \centering
    \subfloat[B-BENU on \emph{ok}]{\includegraphics[height=8em]{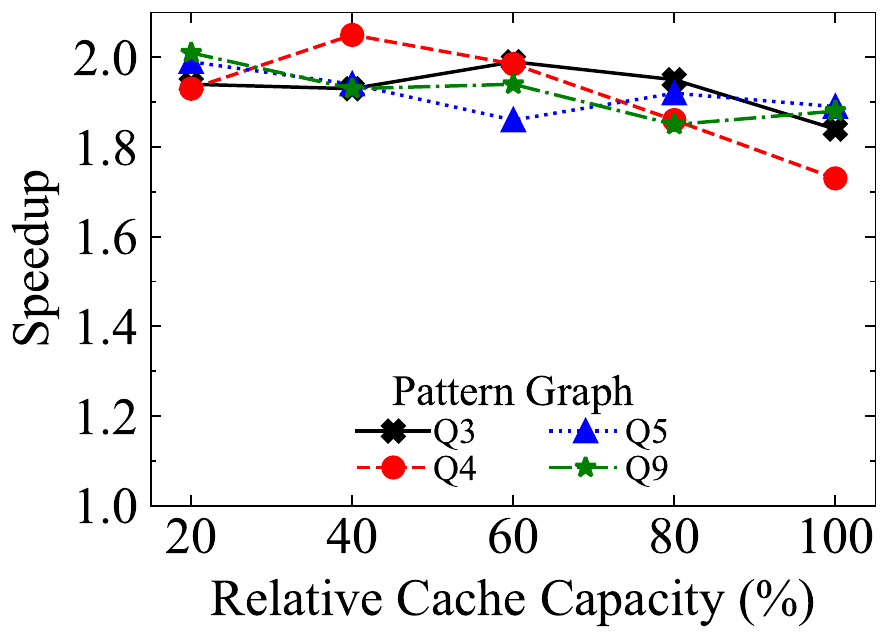}}
    \hfil
    \subfloat[S-BENU on \emph{ld}\label{fig:effects_of_hyper_parameters_on_scalability_sbenu}]{\includegraphics[height=8em]{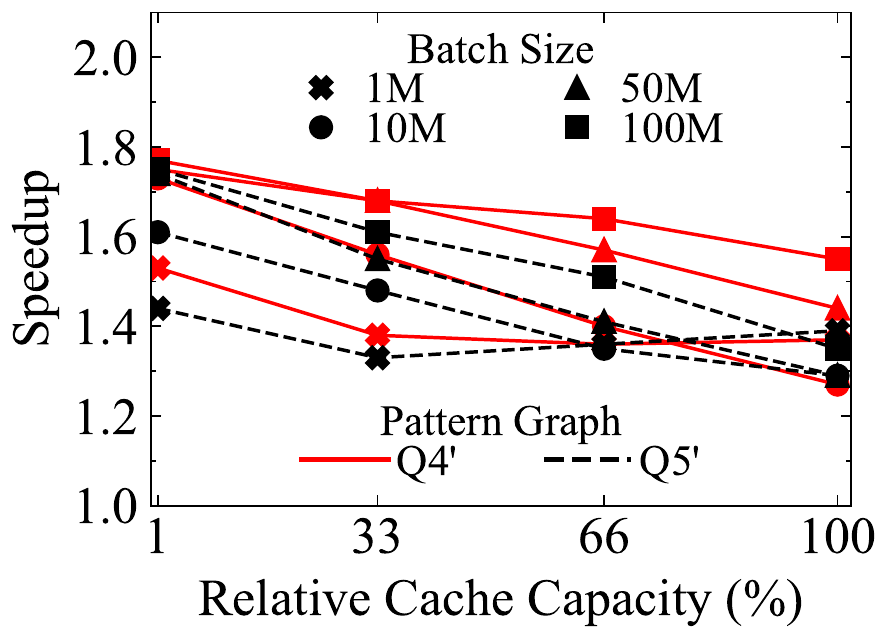}}
    \caption{Effects of hyper parameters on scalability.}
    \label{fig:effects_of_hyper_paramters_on_scalability}
\end{figure}

The cache capacity is a common hyperparameter of B-BENU and S-BENU.
We measured the execution time under different capacities with 8 and 16 workers, and we reported the speedups in Fig.~\ref{fig:effects_of_hyper_paramters_on_scalability}.
The cache capacities were relative to the (initial) data graph.
B-BENU and S-BENU achieved better scalability with a smaller cache.
The phenomenon is caused by the warm-up phase of the cache.
The cache in each machine had to go through a warm-up phase before it achieved a stable hit rate.
In the warm-up phase, there were many cache misses.
When B-BENU/S-BENU processed the same input with more workers, more database queries were conducted during the warm-up phase.
It caused more cache misses and increased the total execution cost.
Taking S-BENU on Q4' with 1M batch size as an example, the total serial execution time of all tasks with 16 workers increased by 16\% (1\% capacity) and 33\% (100\% capacity) compared to 8 workers.
The increased execution cost harmed the scalability.

The batch size is a hyperparameter specific to S-BENU.
Fig.~\ref{fig:effects_of_hyper_parameters_on_scalability_sbenu} shows that S-BENU achieved better scalability with larger batch sizes.
With a larger batch, the number of conducted database queries was larger and the proportion of queries that were conducted during the warm-up phase of the cache became smaller.
Taking Q5$'$ with 33\% cache capacity as an example, the total serial execution time of all tasks with 16 workers increased by 35\% (1M batch) and 18\% (100M batch) compared to 8 workers.

\begin{figure}[!t]
    \centering
    \subfloat[B-BENU on \emph{ok}\label{fig:machine_scalability_benu}]{\includegraphics[height=8em]{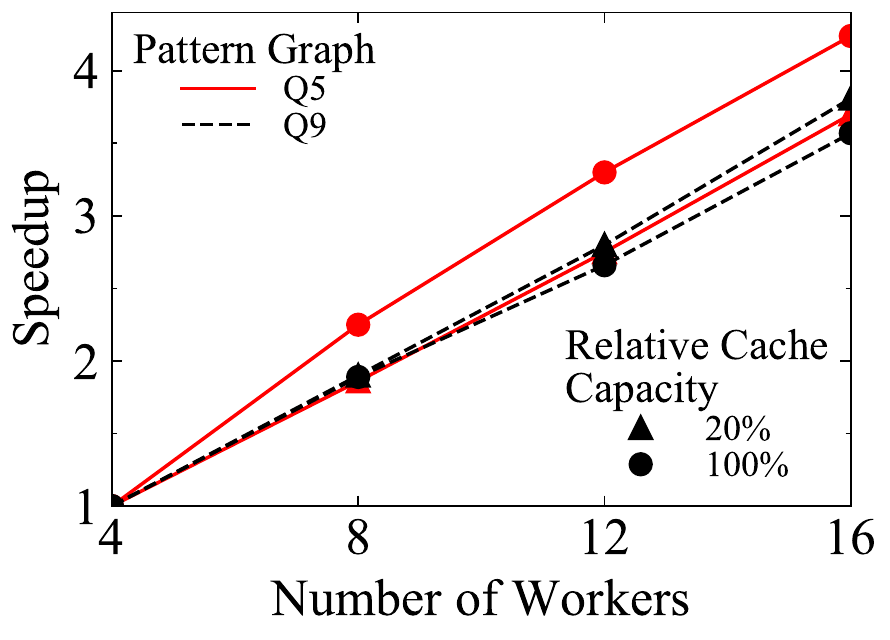}}
    \hfil
    \subfloat[S-BENU on \emph{ld}\label{fig:machine_scalability_sbenu}]{\includegraphics[height=8em]{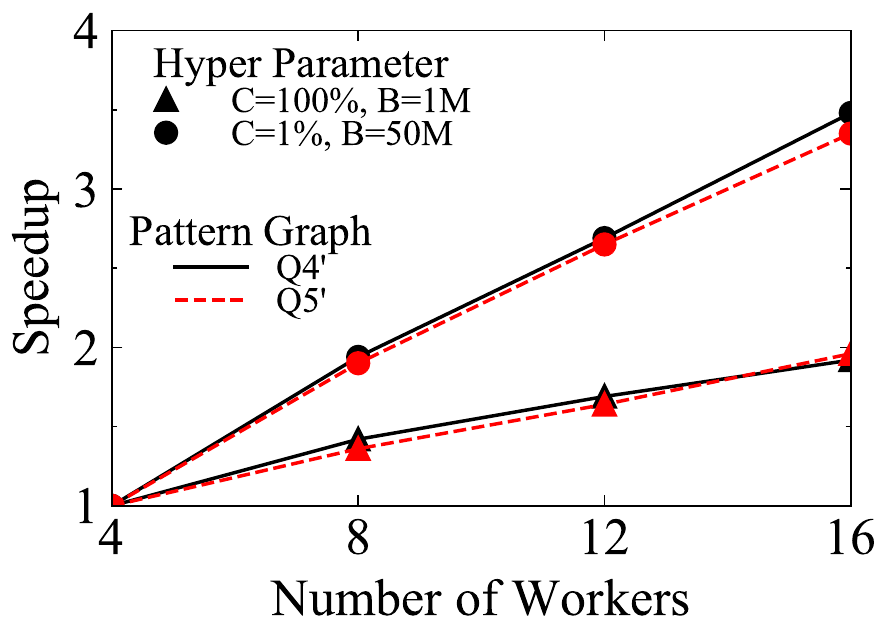}}
    \caption{Machine scalability.\label{fig:machine_scalability}}
\end{figure}

Fig.~\ref{fig:machine_scalability} reports the machine scalability of B-BENU and S-BENU by varying the number of worker machines from 4 to 16.
B-BENU shew the near-linear scalability with both big caches and small caches.
In the case of Q5 with 20\% relative cache capacity in Fig.~\ref{fig:machine_scalability_benu}, B-BENU shew a bit superlinear speedup because there was a straggler in the case of 4 workers, making the execution time longer.
In the legend of Fig.~\ref{fig:machine_scalability_sbenu}, \emph{C} represents the relative cache capacity and \emph{B} represents the batch size.
S-BENU shew the near-linear scalability with small caches and big batches.
It shew the sub-linear scalability with big caches and small batches, but the speedup curve was still linear.

\section{Related Work}


\subsection{Serial Subgraph Enumeration}

Most of the serial subgraph matching methods work with labeled graphs and follow the backtracking-based framework \cite{BacktrackingFramework}.
They differ in how to determine the matching order and the candidate sets of pattern vertices.
GraphQL\cite{GraphQL} and SPath \cite{SPath} match pattern vertices with infrequent labels and paths first.
TurboISO \cite{TurboISO} uses the candidate region to dynamically determine the matching order and candidate sets.
CFL-Match \cite{CFLMatch} proposes the core-forest-leaf decomposition to get matching orders with Cartesian operations postponed.
DAF \cite{DAF} uses a DAG-graph-based dynamic programming to get a candidate space structure and follows an adaptive matching order.
CECI \cite{CECI} divides the data graph into embedding clusters and builds a compact index for each cluster to prune the search space.
Those in-memory algorithms cannot work with data graphs larger than the memory.
DUALSIM \cite{DUALSIM} uses the disk to handle large data graphs and proposes the dual approach to reduce the number of disk reads.
However, the computing power of a single machine limits its performance.

\subsection{Distributed Subgraph Enumeration}
Based whether a method shuffles intermediate results, the existing distributed subgraph enumeration methods can be divided into DFS-style and BFS-style.

The DFS-style methods do not shuffle intermediate results.
Instead, they shuffle the data graphs.
Afrati et al. \cite{UllmanOneRound} replicate different parts of data graphs to every reducer in MapReduce and use one-round multiway join on the reducer side to enumerate subgraphs.
The amount of replicated edges grows quickly as the pattern graph becomes complex \cite{TwinTwigJoinJournalVersion}.
QFrag \cite{QFrag} replicates the whole data graph in the memory of every worker node and enumerates subgraphs with serial subgraph matching methods in parallel.
The memory capacity limits the scale of the data graph that it can handle.
RADS \cite{RADS} partitions the data graph among worker nodes.
It first enumerates subgraphs residing in the local partition of each node.
It then exchanges and verifies undetermined edges among partitions in a region-grouped manner via MPI to enumerate subgraphs cross partitions.
However, RADS relies on MPI to implement the node-to-node communication.
It does not consider the fault tolerance, which is essential in distributed computing.
CECI \cite{CECI} constructs embedding clusters from the data graph, distributes the clusters among worker nodes, and enumerates subgraphs from the clusters in parallel.
Work stealing via MPI is used to balance workloads.
The memory of each node limits the size of embedding clusters that it can handle.

The BFS-style methods follow a join-based framework.
They decompose the pattern graph into join units, enumerate partial matches of join units, and join partial matches together to get matches for the whole pattern graph.
They implement the join framework with a distributed dataflow engine like MapReduce and Timely that transparently support the fault tolerance.
The join-based methods have to shuffle intermediate results during the join.
To limit their sizes, varieties of join units (Edge \cite{BiGJoin}, Star \cite{STWig} \cite{SEED} \cite{PSgL}, TwinTwig \cite{TwinTwigJoin}, Clique \cite{SEED} and Crystal \cite{CBF}) and join frameworks (left-deep join \cite{TwinTwigJoin,TwinTwigJoinJournalVersion}, two-way bushy join \cite{SEED}, hash-assembly \cite{CBF}, multiway \cite{STWig} and worst-case optimal join \cite{BiGJoin}) are proposed.
Lai et al. \cite{SurveyOnTimely} experimentally survey the existing join-based methods with a unified implementation on the Timely dataflow engine.
They find that no method could win all scenarios.
They propose a practical guide to select a suitable method for specific inputs.
Qiao et al. \cite{CBF} propose the VCBC compression to compress the (partial) matching results.

\subsection{Serial Continuous Subgraph Enumeration}

Given a batch of edge updates, IncIsoMat \cite{IncIsoMat,IncIsoMatJournal} first finds out the local neighborhood affected by
the batch and then conducts subgraph isomorphism in the neighborhood.
IncIsoMat compares the matching results before and after the update to discover incremental matches.
However, IncIsoMat has to conduct subgraph isomorphism from scratch for every update.
It will do redundant computation if the neighborhoods affected by two updates are overlapped.

One way to avoid such redundancy is to maintain matching results of the pattern graph in memory as the data graph evolves.
SJ-Tree \cite{SJ-Tree} stores partial matches of the pattern graph in a binary join tree with single edges and 2-edge paths as join units.
SJ-Tree stores partial matches of a tree node in a hash table.
When new edges arrive, SJ-Tree joins new edges with the stored partial matches, avoiding enumerating matches from scratch.
However, the hash table is a memory-consuming data structure.
If there are many partial matches, the memory usage of SJ-Tree will be high.
To store matching results compactly, TurboFlux \cite{TurboFlux} proposes the data-centric graph (DCG) structure.
Given an edge update, TurboFlux transmits the states of the edges in DCG and detects incremental matches during the transition.
However, the edge transition model is serial.
It can only process update edges one by one, limiting the throughput of TurboFlux.

The other way to avoid redundancy is to compute incremental matches from update edges directly.
GraphFlow \cite{GraphFlow} models the continuous subgraph enumeration as the incremental multiway-join view maintainance problem.
It computes the incremental matches by joining update edges with the existing data graph together, guided by the delta rule technique \cite{DeltaRules}.
It adopts a variant of GenericJoin \cite{GenericJoin} as the underlying multi-way join framework.

Though the serial methods have little execution overheads, the computing power and memory capacity of a single node prevent the serial methods from handling big graphs and large update batches.

\subsection{Distributed Continuous Subgraph Enumeration}

D-IDS \cite{D-IDS} prunes the data graph with the maximal dual simulation technique and enumerates matches in subgraphs affected by the update update in parallel.
It maintains the matching results in memory in a distributed way to detect disappearing matches.
When the data graph is big and the pattern graph is complex, the matching results may exceed the memory capacity of a single node.
Delta-BiGJoin \cite{BiGJoin} is the variant of BiGJoin for dynamic graphs.
It partitions and stores the data graph in memory among all worker nodes.
Delta-BiGJoin does not maintain any matching result, making the memory usage controllable.
Instead, it treats the continuous subgraph enumeration as the incremental view maintenance problem in the relational database.
Given a batch of edge updates, Delta-BiGJoin computes the incremental matches via a group of multi-way join queries, guided by the delta rule technique \cite{DeltaRules}.
Delta-BiGJoin uses BiGJoin as its underlying multiway join implementation.
Under the insertion-only workloads, Delta-BiGJoin is worst-case optimal in both computation and communication.
D-IDS and Delta-BiGJoin are general-purpose methods.
Some methods optimize for specific pattern graphs, like vertex-level matching \cite{SSDJournal} and cycles \cite{GraphS}.

\section{Conclusion}
In this paper, we studied the distributed (continuous) subgraph enumeration problem.
The state-of-the-art distributed methods are based on distributed join that has to shuffle intermediate results.
When the data graph is big and the pattern graph is complex, the scale of intermediate results can be huge.
To overcome the drawback, we proposed a backtracking-based framework \emph{Batch-BENU} with two features:
(1) shuffling data graph instead of intermediate results, (2) on-demand shuffle.
Batch-BENU stored the data graph in a distributed key-value database and queried its adjacency sets on demand driven by backtracking-based execution plans.
Given a pattern graph, we proposed a search-based method to generate the best execution plan for it.
We also proposed three optimization techniques (common subexpression elimination, instruction reordering, and triangle cache) to reduce the execution costs of execution plans.
To support dynamic data graphs, we proposed the Streaming-BENU framework.
Streaming-BENU solved the continuous subgraph enumeration problem by enumerating incremental pattern graphs at each time step.
We proposed efficient implementations for Batch-BENU and Streaming-BENU on Hadoop and Spark, respectively.
We developed the local database cache technique and the task splitting technique to improve performance.
We also discussed the data structure to store the dynamic data graph in the database and in the cache.
Extensive experiments verified the efficiency of Batch-BENU and Streaming-BENU.
Batch-BENU and Streaming-BENU outperformed the state-of-the-art distributed methods by up to one and two orders of magnitude, respectively.

In the future, we shall explore 1) extending Batch-BENU and Streaming-BENU to property graphs, 2) finding a more accurate model to estimate the scale of matching results, and 3) generalizing the triangle cache technique to cliques.

The source code of Batch-BENU and Streaming-BENU is available at \url{https://github.com/PasaLab/BENU}.

\ifCLASSOPTIONcompsoc
  \section*{Acknowledgments}
  
  This work is funded in part by National Key R\&D Program of China [grant number 2019YFC1711000]; 
  China NSF Grants [grant number U1811461];
  Jiangsu Province Industry Support Program [grant number BE2017155, BK20170651];
  Collaborative Innovation Center of Novel Software Technology and Industrialization;
  and the program B for Outstanding PhD candidate of Nanjing University.
  
\else
  \section*{Acknowledgment}
\fi

\ifCLASSOPTIONcaptionsoff
  \newpage
\fi



\bibliographystyle{IEEEtran}
\bibliography{benu-ref}

\begin{thebibliography}{10}
\providecommand{\url}[1]{#1}
\csname url@samestyle\endcsname
\providecommand{\newblock}{\relax}
\providecommand{\bibinfo}[2]{#2}
\providecommand{\BIBentrySTDinterwordspacing}{\spaceskip=0pt\relax}
\providecommand{\BIBentryALTinterwordstretchfactor}{4}
\providecommand{\BIBentryALTinterwordspacing}{\spaceskip=\fontdimen2\font plus
\BIBentryALTinterwordstretchfactor\fontdimen3\font minus
  \fontdimen4\font\relax}
\providecommand{\BIBforeignlanguage}[2]{{%
\expandafter\ifx\csname l@#1\endcsname\relax
\typeout{** WARNING: IEEEtran.bst: No hyphenation pattern has been}%
\typeout{** loaded for the language `#1'. Using the pattern for}%
\typeout{** the default language instead.}%
\else
\language=\csname l@#1\endcsname
\fi
#2}}
\providecommand{\BIBdecl}{\relax}
\BIBdecl

\bibitem{Milo2002NetworkMS}
R.~Milo, S.~Shen-Orr, S.~Itzkovitz, N.~Kashtan, D.~Chklovskii, and U.~Alon,
  ``Network motifs: Simple building blocks of complex networks,''
  \emph{Science}, vol. 298, no. 5594, pp. 824--827, 2002.

\bibitem{DBLP:journals/bioinformatics/Przulj07}
N.~Przulj, ``Biological network comparison using graphlet degree
  distribution,'' \emph{Bioinform.}, vol.~23, no.~2, pp. 177--183, 2007.

\bibitem{DBLP:conf/wsdm/KairamWL12}
S.~R. Kairam, D.~J. Wang, and J.~Leskovec, ``The life and death of online
  groups: predicting group growth and longevity,'' in \emph{Proceedings of the
  Fifth International Conference on Web Search and Web Data Mining, {WSDM}
  2012, Seattle, WA, USA, February 8-12, 2012}, 2012, pp. 673--682.

\bibitem{DBLP:journals/pvldb/FanWWX15}
W.~Fan, X.~Wang, Y.~Wu, and J.~Xu, ``Association rules with graph patterns,''
  \emph{Proc. {VLDB} Endow.}, vol.~8, no.~12, pp. 1502--1513, 2015.

\bibitem{neo4j-fraud-detection}
G.~Sadowksi and P.~Rathle, ``Fraud detection: Discovering connections with
  graph databases,''
  \url{https://neo4j.com/whitepapers/fraud-detection-graph-databases/}, Neo4j,
  Tech. Rep., 2017.

\bibitem{GraphS}
X.~Qiu, W.~Cen, Z.~Qian, Y.~Peng, Y.~Zhang, X.~Lin, and J.~Zhou, ``Real-time
  constrained cycle detection in large dynamic graphs,'' \emph{Proc. {VLDB}
  Endow.}, vol.~11, no.~12, pp. 1876--1888, 2018.

\bibitem{cyber-security}
S.~Choudhury, L.~B. Holder, G.~C. Jr., A.~Ray, S.~Beus, and J.~Feo,
  ``Streamworks: a system for dynamic graph search,'' in \emph{Proceedings of
  the {ACM} {SIGMOD} International Conference on Management of Data, {SIGMOD}
  2013, New York, NY, USA, June 22-27, 2013}, 2013, pp. 1101--1104.

\bibitem{SEED}
L.~Lai, L.~Qin, X.~Lin, Y.~Zhang, and L.~Chang, ``Scalable distributed subgraph
  enumeration,'' \emph{Proc. {VLDB} Endow.}, vol.~10, no.~3, pp. 217--228,
  2016.

\bibitem{CBF}
M.~Qiao, H.~Zhang, and H.~Cheng, ``Subgraph matching: on compression and
  computation,'' \emph{Proc. {VLDB} Endow.}, vol.~11, no.~2, pp. 176--188,
  2017.

\bibitem{TurboISO}
W.~Han, J.~Lee, and J.~Lee, ``Turbo\({}_{\mbox{iso}}\): towards ultrafast and
  robust subgraph isomorphism search in large graph databases,'' in
  \emph{Proceedings of the {ACM} {SIGMOD} International Conference on
  Management of Data, {SIGMOD} 2013, New York, NY, USA, June 22-27, 2013},
  2013, pp. 337--348.

\bibitem{CFLMatch}
F.~Bi, L.~Chang, X.~Lin, L.~Qin, and W.~Zhang, ``Efficient subgraph matching by
  postponing cartesian products,'' in \emph{Proceedings of the 2016
  International Conference on Management of Data, {SIGMOD} Conference 2016, San
  Francisco, CA, USA, June 26 - July 01, 2016}, 2016, pp. 1199--1214.

\bibitem{DUALSIM}
H.~Kim, J.~Lee, S.~S. Bhowmick, W.~Han, J.~Lee, S.~Ko, and M.~H.~A. Jarrah,
  ``{DUALSIM:} parallel subgraph enumeration in a massive graph on a single
  machine,'' in \emph{Proceedings of the 2016 International Conference on
  Management of Data, {SIGMOD} Conference 2016, San Francisco, CA, USA, June 26
  - July 01, 2016}, 2016, pp. 1231--1245.

\bibitem{QFrag}
M.~Serafini, G.~D.~F. Morales, and G.~Siganos, ``Qfrag: distributed graph
  search via subgraph isomorphism,'' in \emph{Proceedings of the 2017 Symposium
  on Cloud Computing, SoCC 2017, Santa Clara, CA, USA, September 24-27, 2017},
  2017, pp. 214--228.

\bibitem{UllmanOneRound}
F.~N. Afrati, D.~Fotakis, and J.~D. Ullman, ``Enumerating subgraph instances
  using map-reduce,'' in \emph{29th {IEEE} International Conference on Data
  Engineering, {ICDE} 2013, Brisbane, Australia, April 8-12, 2013}, 2013, pp.
  62--73.

\bibitem{TwinTwigJoin}
L.~Lai, L.~Qin, X.~Lin, and L.~Chang, ``Scalable subgraph enumeration in
  mapreduce,'' \emph{Proc. {VLDB} Endow.}, vol.~8, no.~10, pp. 974--985, 2015.

\bibitem{TwinTwigJoinJournalVersion}
------, ``Scalable subgraph enumeration in mapreduce: a cost-oriented
  approach,'' \emph{{VLDB} J.}, vol.~26, no.~3, pp. 421--446, 2017.

\bibitem{snapnets}
J.~Leskovec and A.~Krevl, ``{SNAP Datasets}: {Stanford} large network dataset
  collection,'' \url{http://snap.stanford.edu/data}, Jun. 2014.

\bibitem{LAW-Datasets}
{Laboratory Web for Algorithmics}, ``Datasets,''
  \url{http://law.di.unimi.it/datasets.php}.

\bibitem{BiGJoin}
K.~Ammar, F.~McSherry, S.~Salihoglu, and M.~Joglekar, ``Distributed evaluation
  of subgraph queries using worst-case optimal and low-memory dataflows,''
  \emph{Proc. {VLDB} Endow.}, vol.~11, no.~6, pp. 691--704, 2018.

\bibitem{PSgL}
Y.~Shao, B.~Cui, L.~Chen, L.~Ma, J.~Yao, and N.~Xu, ``Parallel subgraph listing
  in a large-scale graph,'' in \emph{International Conference on Management of
  Data, {SIGMOD} 2014, Snowbird, UT, USA, June 22-27, 2014}, 2014, pp.
  625--636.

\bibitem{BENU-ICDE}
Z.~Wang, R.~Gu, W.~Hu, C.~Yuan, and Y.~Huang, ``{BENU:} distributed subgraph
  enumeration with backtracking-based framework,'' in \emph{35th {IEEE}
  International Conference on Data Engineering, {ICDE} 2019, Macao, China,
  April 8-11, 2019}, 2019, pp. 136--147.

\bibitem{SJ-Tree}
S.~Choudhury, L.~B. Holder, G.~C. Jr., K.~Agarwal, and J.~Feo, ``A selectivity
  based approach to continuous pattern detection in streaming graphs,'' in
  \emph{Proceedings of the 18th International Conference on Extending Database
  Technology, {EDBT} 2015, Brussels, Belgium, March 23-27, 2015}, 2015, pp.
  157--168.

\bibitem{TurboFlux}
K.~Kim, I.~Seo, W.~Han, J.~Lee, S.~Hong, H.~Chafi, H.~Shin, and G.~Jeong,
  ``Turboflux: {A} fast continuous subgraph matching system for streaming graph
  data,'' in \emph{Proceedings of the 2018 International Conference on
  Management of Data, {SIGMOD} Conference 2018, Houston, TX, USA, June 10-15,
  2018}, 2018, pp. 411--426.

\bibitem{D-IDS}
C.~Wickramaarachchi, R.~Kannan, C.~Chelmis, and V.~K. Prasanna, ``Distributed
  exact subgraph matching in small diameter dynamic graphs,'' in \emph{2016
  {IEEE} International Conference on Big Data, BigData 2016, Washington DC,
  USA, December 5-8, 2016}, 2016, pp. 3360--3369.

\bibitem{SymmetryBreaking}
J.~A. Grochow and M.~Kellis, ``Network motif discovery using subgraph
  enumeration and symmetry-breaking,'' in \emph{Research in Computational
  Molecular Biology, 11th Annual International Conference, {RECOMB} 2007,
  Oakland, CA, USA, April 21-25, 2007, Proceedings}, 2007, pp. 92--106.

\bibitem{BacktrackingFramework}
J.~Lee, W.~Han, R.~Kasperovics, and J.~Lee, ``An in-depth comparison of
  subgraph isomorphism algorithms in graph databases,'' \emph{Proc. {VLDB}
  Endow.}, vol.~6, no.~2, pp. 133--144, 2012.

\bibitem{BoostISO}
X.~Ren and J.~Wang, ``Exploiting vertex relationships in speeding up subgraph
  isomorphism over large graphs,'' \emph{Proc. {VLDB} Endow.}, vol.~8, no.~5,
  pp. 617--628, 2015.

\bibitem{Timing}
Y.~Li, L.~Zou, M.~T. {\"{O}}zsu, and D.~Zhao, ``Time constrained continuous
  subgraph search over streaming graphs,'' in \emph{35th {IEEE} International
  Conference on Data Engineering, {ICDE} 2019, Macao, China, April 8-11, 2019},
  2019, pp. 1082--1093.

\bibitem{mislove-2009-socialnetworksthesis}
\BIBentryALTinterwordspacing
A.~Mislove, ``Online social networks: Measurement, analysis, and applications
  to distributed information systems,'' Ph.D. dissertation, Rice University,
  Department of Computer Science, May 2009. [Online]. Available:
  \url{http://socialnetworks.mpi-sws.org/data-wosn2008.html}
\BIBentrySTDinterwordspacing

\bibitem{LDBC-SNB-Data-Generator}
{Linked Data Benchmark Council (LDBC)}, ``Ldbc{-}snb data generator,''
  \url{https://github.com/ldbc/ldbc_snb_datagen}.

\bibitem{LDBC_Graphalytics}
A.~Iosup, T.~Hegeman, W.~L. Ngai, S.~Heldens, A.~Prat{-}P{\'{e}}rez,
  T.~Manhardt, H.~Chafi, M.~Capota, N.~Sundaram, M.~J. Anderson, I.~G. Tanase,
  Y.~Xia, L.~Nai, and P.~A. Boncz, ``{LDBC} graphalytics: {A} benchmark for
  large-scale graph analysis on parallel and distributed platforms,''
  \emph{Proc. {VLDB} Endow.}, vol.~9, no.~13, pp. 1317--1328, 2016.

\bibitem{GraphQL}
H.~He and A.~K. Singh, ``Graphs-at-a-time: query language and access methods
  for graph databases,'' in \emph{Proceedings of the {ACM} {SIGMOD}
  International Conference on Management of Data, {SIGMOD} 2008, Vancouver, BC,
  Canada, June 10-12, 2008}, 2008, pp. 405--418.

\bibitem{SPath}
P.~Zhao and J.~Han, ``On graph query optimization in large networks,''
  \emph{Proc. {VLDB} Endow.}, vol.~3, no.~1, pp. 340--351, 2010.

\bibitem{DAF}
M.~Han, H.~Kim, G.~Gu, K.~Park, and W.~Han, ``Efficient subgraph matching:
  Harmonizing dynamic programming, adaptive matching order, and failing set
  together,'' in \emph{Proceedings of the 2019 International Conference on
  Management of Data, {SIGMOD} Conference 2019, Amsterdam, The Netherlands,
  June 30 - July 5, 2019}, 2019, pp. 1429--1446.

\bibitem{CECI}
B.~Bhattarai, H.~Liu, and H.~H. Huang, ``{CECI:} compact embedding cluster
  index for scalable subgraph matching,'' in \emph{Proceedings of the 2019
  International Conference on Management of Data, {SIGMOD} Conference 2019,
  Amsterdam, The Netherlands, June 30 - July 5, 2019}, 2019, pp. 1447--1462.

\bibitem{RADS}
X.~Ren, J.~Wang, W.~Han, and J.~X. Yu, ``Fast and robust distributed subgraph
  enumeration,'' \emph{Proc. {VLDB} Endow.}, vol.~12, no.~11, pp. 1344--1356,
  2019.

\bibitem{STWig}
Z.~Sun, H.~Wang, H.~Wang, B.~Shao, and J.~Li, ``Efficient subgraph matching on
  billion node graphs,'' \emph{Proc. {VLDB} Endow.}, vol.~5, no.~9, pp.
  788--799, 2012.

\bibitem{SurveyOnTimely}
L.~Lai, Z.~Qing, Z.~Yang, X.~Jin, Z.~Lai, R.~Wang, K.~Hao, X.~Lin, L.~Qin,
  W.~Zhang, Y.~Zhang, Z.~Qian, and J.~Zhou, ``Distributed subgraph matching on
  timely dataflow,'' \emph{Proc. {VLDB} Endow.}, vol.~12, no.~10, pp.
  1099--1112, 2019.

\bibitem{IncIsoMat}
W.~Fan, J.~Li, J.~Luo, Z.~Tan, X.~Wang, and Y.~Wu, ``Incremental graph pattern
  matching,'' in \emph{Proceedings of the {ACM} {SIGMOD} International
  Conference on Management of Data, {SIGMOD} 2011, Athens, Greece, June 12-16,
  2011}, 2011, pp. 925--936.

\bibitem{IncIsoMatJournal}
W.~Fan, X.~Wang, and Y.~Wu, ``Incremental graph pattern matching,'' \emph{{ACM}
  Trans. Database Syst.}, vol.~38, no.~3, pp. 18:1--18:47, 2013.

\bibitem{GraphFlow}
C.~Kankanamge, S.~Sahu, A.~Mhedbhi, J.~Chen, and S.~Salihoglu, ``Graphflow: An
  active graph database,'' in \emph{Proceedings of the 2017 {ACM} International
  Conference on Management of Data, {SIGMOD} Conference 2017, Chicago, IL, USA,
  May 14-19, 2017}, 2017, pp. 1695--1698.

\bibitem{DeltaRules}
A.~Gupta, I.~S. Mumick, and V.~S. Subrahmanian, ``Maintaining views
  incrementally,'' in \emph{Proceedings of the 1993 {ACM} {SIGMOD}
  International Conference on Management of Data, Washington, DC, USA, May
  26-28, 1993}, 1993, pp. 157--166.

\bibitem{GenericJoin}
H.~Q. Ngo, C.~R{\'{e}}, and A.~Rudra, ``Skew strikes back: new developments in
  the theory of join algorithms,'' \emph{{SIGMOD} Rec.}, vol.~42, no.~4, pp.
  5--16, 2013.

\bibitem{SSDJournal}
J.~Gao, C.~Zhou, and J.~X. Yu, ``Toward continuous pattern detection over
  evolving large graph with snapshot isolation,'' \emph{{VLDB} J.}, vol.~25,
  no.~2, pp. 269--290, 2016.

\end{thebibliography}

%

\begin{IEEEbiographynophoto}{Zhaokang Wang} received the BS degree in Nanjing University, China, in 2013.
  He is currently working towards the Ph.D. degree in Nanjing University.
  His research interests include distributed graph algorithms and distributed graph processing systems.
\end{IEEEbiographynophoto}

\begin{IEEEbiographynophoto}{Weiwei Hu} received the BS degree in Hunan University, China, in 2017.
  She is currently working towards the Master degree in Nanjing University.
  Her research interests include distributed subgraph matching.
\end{IEEEbiographynophoto}


\begin{IEEEbiographynophoto}{Chunfeng Yuan}
  is a professor in the computer science department and State Key Laboratory for Novel Software Technology, Nanjing University, China. Her main research interests include computer architecture, parallel and distributed computing.
\end{IEEEbiographynophoto}

\begin{IEEEbiographynophoto}{Rong Gu}
  is an associate research professor at State Key Laboratory for Novel Software Technology, Nanjing University, China. Dr. Gu received the Ph.D. degree in computer science from Nanjing University in December 2016. His research interests include parallel computing, distributed systems and distributed machine learning.
\end{IEEEbiographynophoto}

\begin{IEEEbiographynophoto}{Yihua Huang}
  is a professor in State Key Laboratory for Novel Software Technology, Nanjing University, China. He received his Ph.D. degree in computer science from Nanjing University. His main research interests include parallel and distributed computing, big data parallel processing, and distributed machine learning.
\end{IEEEbiographynophoto}


\vfill

\enlargethispage{-5in}

\end{document}